\documentclass[a4paper,
draft,
11pt,openright]{article} 
\usepackage[italian, english]{babel} 
\usepackage[T1]{fontenc}
\usepackage[utf8]{inputenc}

\usepackage{url}
\usepackage{mathtools}
\usepackage[
final=true,                    
plainpages=false,              
pdfpagelabels=true,            
pdfencoding=auto,              
unicode=true,                  
hypertexnames=true,            
naturalnames=true              
]{hyperref}
\usepackage[autostyle, english = american]{csquotes}
\usepackage{appendix}
\usepackage{enumerate}
\usepackage{libertine}
\usepackage{bbold}
\usepackage{mathtools}
\usepackage{graphicx}
\usepackage{floatflt}
\usepackage{blindtext}
\usepackage{enumitem}
\usepackage{physics}
\usepackage{bm}
\usepackage{amsmath}
\usepackage{siunitx}
\usepackage{mathptmx}
\usepackage{amsthm}
\usepackage{subfig}
\usepackage{cancel}
\usepackage{xcolor}
\usepackage{listings}
\usepackage{amsmath}
\usepackage{amssymb}
\usepackage{framed}
\usepackage{wrapfig}
\usepackage{minibox}
\usepackage{float}
\usepackage{wrapfig}
\usepackage{longtable}
\usepackage{ulem}
\usepackage[strict]{changepage}
\usepackage{pgfplots}
\usepackage{tikz}
\usepackage{authblk}
\usepackage{urwchancal}
\usepackage[all,cmtip]{xy}
\usepackage{lineno}
\usetikzlibrary{matrix,calc}

\oddsidemargin 0cm    
\evensidemargin 0cm     
\textheight 20cm        
\textwidth 16cm     

\usetikzlibrary{positioning}
\usetikzlibrary{matrix}
\usepackage{pgfplots}
\usetikzlibrary{patterns}

\numberwithin{equation}{section}

\newcommand{\weylSH}{\mathcal{W}(\mathcal{S}(\mathbb{R}^{\nu}), h\sigma)}
\newcommand{\weylSO}{\mathcal{W}(\mathcal{S}(\mathbb{R}^{\nu}),0)}

\newcommand{\weylEO}{\mathcal{W}(E,0)}

\newtheoremstyle{TheoremStyle}
{\topsep}%
{\topsep}
{}
{}
{\sc}
{:}
{.5em}
{}

\makeatletter
\def\@endtheorem{\begin{flushright}$\diamond$\end{flushright}} 
\makeatother

\theoremstyle{TheoremStyle}
\newtheorem{theorem}{Theorem}[section]
\newtheorem{corollary}[theorem]{Corollary}

\newtheorem{proposition}[theorem]{Proposition}
\newtheorem{lemma}[theorem]{Lemma}
\newtheorem{definition}[theorem]{Definition}
\newtheorem{remark}[theorem]{Remark}
\newtheorem{example}[theorem]{Example} 
\newtheorem*{assumption}{Assumption}

\usepackage{xcolor}
\usepackage{listings}
\definecolor{codegreen}{rgb}{0,0.6,0}
\definecolor{codegray}{rgb}{0.5,0.5,0.5}
\definecolor{codepurple}{rgb}{0.58,0,0.82}
\definecolor{backcolour}{rgb}{0.95,0.95,0.92}

\lstdefinestyle{mystyle}{
	backgroundcolor=\color{backcolour},   
	commentstyle=\color{codegreen},
	keywordstyle=\color{magenta},
	numberstyle=\tiny\color{codegray},
	stringstyle=\color{codepurple},
	basicstyle=\ttfamily\footnotesize,
	breakatwhitespace=false,         
	breaklines=true,                 
	captionpos=b,                    
	keepspaces=true,                 
	numbers=left,                    
	numbersep=5pt,                  
	showspaces=false,                
	showstringspaces=false,
	showtabs=false,                  
	tabsize=2
}
\definecolor{someblu}{RGB}{84,120,165}

\usepackage[multiuser]{fixme}
\FXRegisterAuthor{lp}{anlp}{Lo}

\title{On Classical Aspects of Bose-Einstein Condensation}
\author[a]{\href{mailto:lorenzo.pettinari@unitn.it}{L. Pettinari}}
\affil[a]{Dipartimento di Matematica, Universit\`a di Trento and INFN-TIFPA and INdAM, Via Sommarive 14, I-38123 Povo, Italy}
\begin{document}
	
	\maketitle
	\begin{abstract}
	  Berezin and Weyl quantization are renown procedures for mapping, commutative Poisson algebras of observables to their non-commutative, quantum counterparts. The latter is famous for its use on Weyl algebras, while the former is more appropriate for continuous functions decaying at infinity. In this work, we define a variant of the Berezin quantization map, which acts on the classical Weyl algebra $\mathcal{W}(E,0)$ and constitutes a positive \textit{strict deformation quantization}. This construction provides a natural framework to compare classical and quantum thermal equilibrium states of a Bose gas through the computation of their semi-classical limit.
      To this end, we first introduce a purely algebraic notion of KMS states for the classical Weyl algebra and establish that, in finite volume, there exists a unique such state, which can be interpreted as the Fourier transform of a Gibbs measure on a Hilbert space. We then construct a new class of classical KMS states that realize representations of the canonical commutation relations with infinite local density. These states arise as the semi-classical high-density limit of the quantum equilibrium states originally studied by Araki and Woods \cite{Araki_Woods_63}. A key feature of our approach is that it preserves the macroscopic ground-state occupation of the Bose gas in the classical regime. Finally, we demonstrate that the infinite-volume classical states can be obtained as thermodynamic limits of finite-volume Gibbs states.
	\end{abstract}
	\tableofcontents
	
	\section{Introduction}\label{sec: introduction}
   A standard paradigm of quantum statistical mechanics in the framework of operator algebras is the use of KMS conditions for the description of thermal equilibrium states \cite{Bratteli_Robinson_97,Drago_Pettinari_VandeVen_2024, Kastler_65}. Within this setting, the physical observables are modeled by suitable $C^*$-algebras. Time evolution is introduced by means of a strongly continuous one-parameter group of $^*$-automorphism $t\rightarrow \tau_{t}$. \textit{States} of the system are described by linear, positive, normalized, functionals $\omega\colon \mathfrak{A}\to \mathbb{C}$. Once the dynamics and an inverse temperature $\beta$ are fixed, a state $\omega$ is said to satisfy the $(\tau,\beta)$-KMS condition for the $C^*$-dynamical system $(\mathfrak{A},\tau)$ if there exists a strongly dense $^*$-subalgebra $\mathfrak{A}_{\tau}$ of $\mathfrak{A}$, contained in the set of the $\tau$-analytic elements, for which the following identity is verified \begin{equation}\label{eq: KMS quantum}
   	\omega(\mathfrak{a}\tau_{i\beta h}(\mathfrak{b})) = \omega(\mathfrak{b}\mathfrak{a}), \quad \mathfrak{a},\mathfrak{b}\in \mathfrak{A}_{\tau}. 
   \end{equation} Moreover, the foregoing framework can be reformulated for the von Neumann algebras setting, using $\sigma$-weakly continuous groups and normal states.
   
   Although the quantum KMS condition is widely used by the algebraic community, its classical counterpart, introduced in \cite{Gallavotti_Verboven_75}, has received little attention over the years. More recently, the latter has been employed in different contexts, ranging from Poisson geometry \cite{Drago_Waldman_2024} to the study of classical, nonlinear Hamiltonian systems \cite{Ammari_Sohinger_2023} and for an infinite ensemble of particles \cite{Aizenman_Gallavotti_Goldstein_Lebowitz_1976, Aizenman_Goldstein_Gruber_Lebowitz_Martin_1977}.
   The classical KMS property has also been investigated in relation to the Dobrushin-Landford-Ruelle (DLR) condition for classical equilibrium \cite{Drago_VandeVen_2023}, while in \cite{Drago_Pettinari_VandeVen_2024} it has been justified using \textit{semi-classical techniques}.A semi-classical regime refers to a physical setting in which the behavior of a quantum system is governed by a parameter $h$—the semi-classical parameter—whose magnitude is small (or, in some contexts, large) relative to the other characteristic scales of the system. In such regimes, quantum effects remain relevant but are subdominant, allowing for a meaningful approximation of quantum dynamics by classical theories as for example in the WKB approximation of Schr\"odinger equations \cite{Landsman_2017,Maslov_81}. This transition is formalized through the study of semi-classical limits, where one analyzes how observables and states behave as $h \to 0$. In our approach, a family of quantum states $(\omega_h)_{h\in(0,+\infty)}$ is used to define classical ones by composition with a quantization map (see Def. \ref{def: SDQ} ) \begin{equation}\omega_{cl, h} :=\omega_h\circ Q_h.\end{equation} Under suitable hypothesis, the limit $h \to 0^+$ yields an $h$-independent classical state $\omega_{cl}$, which preserves physical properties of the original quantum states in the semi-classical regime, proving how classical behavior can emerge from a quantum theory. A particularly important aspect of this correspondence concerns the persistence of thermodynamic phases. It is widely expected that such phases should be preserved in the semi-classical limit, as quantum fluctuations become negligible compared to thermal ones. This intuition is supported by recent results in \cite{Drago_Pettinari_VandeVen_2024}, where it was shown that, at high (but finite) temperatures, classical and quantum spin systems exhibit the same unique thermal phase for a broad class of interaction potentials. 
 
The scope of this article is to construct a semi-classical description for a free Bose gas by defining a class of equilibrium classical states $\omega^{\alpha,0}_0$, describing an high density (weak$^*$) limit of a corresponding class of quantum non-unique equilibrium states $\omega^{\overline{\rho}(h),0}_h$.  It is our hope that these results could be extended to interacting settings, where the high density regime was initially meant for \cite{Bog_47} and produced many rigorous results \cite{GiulSeir2009,Lieb_Solovej_2001,Lieb_Solovej_2004}.

   We will model the observables of such system by the Weyl $C^*$-algebras $\mathcal{W}(E,h\sigma )$, where $E$ is a symplectic space, and $h\sigma$ denotes the corresponding symplectic form, with $h$ a suitable semi-classical parameter.
    Weyl algebras with non-degenerate symplectic spaces $(E,h\sigma)$ represent the first examples of  $C^*$-algebras modelling the \textit{canonical commutation relations} (CCR) of quantum mechanics. They have been widely explored in the literature \cite{Kastler_65,Landsman_1998,Landsman_2017,Loupias_Miracle-Sole_67,Manuceau_68,Petz_90} and are known to present many drawbacks when modeling realistic physical systems. Firstly, because even the simplest free dynamics is not continuous in this setting \cite{Bratteli_Robinson_97, Kastler_65,Petz_90} and secondly, because Weyl algebras are not left invariant by any interesting, interacting dynamics \cite{Fannes_Verbeure_(74)}. However, they provide a strong and simple framework for the discussion of phase transitions in non interacting Bose gases.  $E$ plays the role of \textit{test-functions space}, describing the effective wave-functions of a single particle. In the infinite-volume limit the correct choice for $E$ depends on the state we are considering \cite{Araki_Woods_63}. Specifically, there exist infinite-volume states $\omega^{\overline{\rho}(h),0}_h$ parametrized by the density $\overline{\rho}(h)$, satisfying the quantum KMS condition for all $\overline{\rho}(h)>\rho_c(\beta h)$ and whose two-points functions functions differ from those corresponding to $\overline{\rho}(h)\leq \rho_c(\beta h)$ by an additional term \begin{equation}\label{eq: condensate piece}
       \omega^{\overline{\rho}(h),0}_h (a_h^*(f)a_h(g)) = 2^\nu h(\overline{\rho}(h)-\rho_c(\beta h))\int d^\nu x d^\nu y\; \overline{g(x)}f(y) + \dots,
   \end{equation} which is proportional to the condensate fraction $\overline{\rho}(h)-\rho_c(\beta h)$, i.e. the density of particles in the ground state. We will address the quantity in Eq. \eqref{eq: condensate piece} by the name of \textit{condensate term}. Since the infinite-volume ground state correspond to the effective single-particle wavefunction $|1\rangle$, which is constant on all $\mathbb{R}^\nu$, to \textit{test} the distribution $\omega^{\overline{\rho}(h),0}_h$ we have to restrict to spaces $E$ containing functions decreasing sufficiently fast at large distances , e.g. $E = \mathcal{S}(\mathbb{R}^\nu)$.
   
   Weyl algebras with partially degenerate symplectic forms  can be used to model the presence of classical subsystems, \cite{Honegger_Rieckers_20023,Manuceau_Sirugue_Testard_Verbeure_73} and have found applications in the context of globally hyperbolic space-times \cite{Benini_Dappiaggi_Hack_2014,Dappiaggi_Drago_Longhi_2020}. Moreover, if the form is taken to be exactly $0$, one can describe classical commutative observable algebras \cite{Binz_Honegger_Rieckers_2004, Binz_Honegger_Roecker_2024_2}. In the present context, we consider the commutative algebra $\mathcal{W}(E,0)$ as a semi-classical limit of the quantum algebra $\mathcal{W}(E,h\sigma)$. Physically, this limit corresponds to the high density regime $\overline{\rho}(h)\to+\infty$, which will obtained by sending $h\to 0^+$. Hence, $h$ should be understood as being related to a macroscopic quantity: the mean density of the gas.

   Quantum and classical Weyl algebras can be connected within the setting of \textit{strict deformation quantization} (SDQ). This notion requires the introduction of a Poisson $^*$-subalgebra $(\mathcal{P},\{\cdot,\cdot\})$ of the classical observables, where the bracket $\{\cdot,\cdot\}$ plays the counterpart of commutators for quantum systems.
   The rigorous definition of a SDQ was given in \cite{Landsman_1998, Rieffel_94} as follows
  
   \begin{definition}\label{def: SDQ} (\textbf{Strict deformation quantization}) Let $I\subset\mathbb{R}$ be a subset of the real line containing $0$ as an accumulation point. A strict quantization $(\mathfrak{A}_{h},Q_{h})_{h\in I}$ of the Poisson algebra $(\mathcal{P},\{\cdot,\cdot\})$ consists for every $h\in I$, of a linear, $^*$-preserving map\begin{equation}
   		Q_{h}\colon \mathcal{P}\rightarrow \mathfrak{A}_{h},
   	\end{equation} where $\mathfrak{A}_{h}$ is a $C^*$-algebra with norm $\norm{\cdot}_{h}$, such that $Q_{0}$ is the identical embedding of $\mathcal{P}$ into $\mathfrak{A}_0$, and such that for all $\mathfrak{a},\mathfrak{b}\in \mathcal{P}$ the following conditions hold:\begin{enumerate}
   	\item[(i)] (\textbf{Rieffel's condition}) $I\ni h\rightarrow \norm{Q_{h}(\mathfrak{a})}_{h}\in [0,\infty)$ is continuous.
   	\item [(ii)](\textbf{von Neumann's condition}) $\lim_{h\rightarrow 0}$ $\norm{Q_{h}(\mathfrak{a})Q_{h}(\mathfrak{b}) - Q_{h}(\mathfrak{a}\mathfrak{b})}_{h} = 0$.
   	\item[(iii)] (\textbf{Dirac's condition}) $\lim_{h\rightarrow 0} \norm{\frac{i}{h}\comm{Q_{h}(\mathfrak{a})}{Q_{h}(\mathfrak{b} )} - Q_{h}(\{\mathfrak{a},\mathfrak{b}\})}_{h} =0 $.
   	\end{enumerate}
   	In addition, the \textbf{deformation} condition is satisfied if $Q_{h}$ is injective and $Q_{h}(\mathcal{P})$ is a $^*$-subalgebra of $\mathfrak{A}_{h}$.
   \end{definition} For Weyl algebras the non-negative real number $h$ appearing in the quantization maps $Q_{h}$ is the same semi-classical parameter appearing in the symplectic form $h\sigma$. 
   
   We discuss now the main results of this article:\begin{itemize}
   	\item[(I)] We define a quantization map $(Q_{h})_{h\in [0,+\infty)}$ directly at the algebraic level, relating classical and quantum Weyl algebras for a symplectic space $E$. The map $Q_h$ can be considered as an extension of Berezin quantization map to the infinite dimensional Weyl  setting. Indeed, if we restrict $Q_{h}$ to $\mathcal{W}(G,0)$, with $G$ a finite dimensional symplectic subspace of $E$, the latter map coincides with Berezin quantization in the Schr\"odinger representation. This property is fundamental in proving $Q_{h}$'s positivity and continuity in proposition \ref{prop: positivita e continuita della quantizzazione astratta}. Moreover, this map can be successfully defined on the whole Weyl algebra. In theorem \ref{th: la quantizzazione astratta è SDQ} we exploit results from \cite{Binz_Honegger_Rieckers_2004} to conclude that $(Q_{h})_{h\in[0,+\infty)}$ defines a SDQ and in proposition \ref{prop: iniettività della mappa} we establish the map's injectivity and lack of surjectivity.
   	\item[(II)] We elaborate a classical setting for studying dynamics and equilibrium conditions directly on the commutative algebra $\mathcal{W}(E,0)$. We define a \textit{weak KMS condition} in Def. \ref{def: classical KMS condition}. Furhtermore, we establish its equivalence with the classical KMS condition for measures (see \cite{Ammari_Sohinger_2023} and Def. \ref{def: KMS per misure}), in the case of a dynamics described by certain Hamiltonian operator $H$. This leads to a uniqueness result for weak KMS states $\omega_0^{L,\mu}$ on $\mathcal{W}(\mathbb{L}^2(\Lambda_L)),0)$.
    
   Exploiting the quantization map $Q_{h}$, \textit{cf}. (I), we construct a non-trivial class of \textit{infinite density states} for $\mathcal{W}(\mathcal{S}(\mathbb{R}^\nu),0)$ as the weak$^*$ limit point \begin{equation}\lim_{h\to 0^+}\omega^{\overline{\rho}(h),0}_h\circ Q_h = \omega^{\alpha,0}_0.\end{equation} The resulting states lead to a well defined representation of the CCR in the infinite-volume without requiring the existence of a local number operator $N_{\Lambda_L}$, nor the existence of a density operator $\lim_{L\to+\infty}N_{\Lambda_L}/|\Lambda_L|$ \cite{Araki_Woods_63}. Moreover these states  describe a macroscopic occupation of the ground-state by manifesting a condensate term like in Eq. \eqref{eq: condensate piece}. This condensate term is multiplied by a renormalized condensate density $\alpha$, obtained as \begin{equation}\lim_{h\to 0} h(\overline{\rho}(h)- \rho_c(\beta h)) =: \alpha. \end{equation}

   	\item [(III)] 
     We recover the states $\omega^{\alpha,0}_0$ from a thermodynamic limit $\Lambda_L\uparrow \mathbb{R}^\nu$ for the expectation values of Weyl elements on finite volume Gibbs states $\omega_0^{L,\mu}(W^0(f))$, in the spirit of the classical works of Cannon \cite{Cannon_73}, and Lewis and Pulè  \cite{LP_73}. This point of view highlights the main physical difference between the classical and quantum frameworks: for a quantum Bose gas the phase transition's order parameter can be chosen to be either the inverse temperature $\beta$ or the density $\overline{\rho}$; the classical theory is obtained as an high density limit $\overline{\rho( h)}\to +\infty$, so that the local density is always infinite and we have non-equivalent KMS states for all values of the temperature.
   \end{itemize}
     Another point of view on the semi-classical limit of a Bose-condensate can be found in \cite{Ammari_Sebastien_Nier_2019}, where a multi-scale approach is employed to deal at the same time with classical and quantum systems. There is a vast literature on semi-classical limits of boson gases. We cite here some recent results for finite temperature systems \cite{Deuchert_Seiringer_Yngvason_2019, Deuchert_Seiringer_2021, Lewin_Nam_Rougerie_2014, Nam_Triay_2023}. The zero temperature case is treated in \cite{Boccato_Brennecke_Cenatiempo_Schlein_2019} and in the review \cite{Lieb_Solovej_Seoronger_Yngvason_2005}.
   \bigskip
   
   The paper is organized as follows.
   In section \ref{sec: weyl algebra} we define the algebraic setting.
   Specifically, in \ref{sec: general construction} we recall the construction of Weyl algebras for a  symplectic form $h \sigma$, $h\in [0,+\infty)$ while in \ref{sec: commutative case} we focus on commutative algebras taking $h = 0$.  Poisson brackets and infinitesimal generators are introduced for the classical Weyl algebra in Sec. \ref{sec: additional structures}. In \ref{sec: classical KMS condition} we define an appropriate notion of derivation and we introduce the weak KMS condition for classical systems.
   In \ref{subsec: definition of stricti deformation quantization} we build the strict deformation quantization $Q_{h}$ and we describe its properties in Sec. \ref{sec: further properties}. The construction of classical states via the quantization map is discussed in Sec. \ref{sec: dequantization}.
We summarize the construction of equilibrium states for quantum Bose gases in \ref{sec: condensazione quantistica}, while in \ref{sec: sistemi a volume finito} we characterize classical finite-volume equilibrium states.
   In \ref{sec: limiti classici degli stati} we connect quantum and classical KMS states via the abstract quantization $Q_{h}$ and in \ref{sec: limite termodinamico classico} we analyze the termodynamic limit for classical states.
   Appendices \ref{sec: appendice A}-\ref{sec: appendice B} contain the proofs of ancillatory results while appendix \ref{sec: appendice C} discusses further properties of weak KMS states.

   \paragraph{Acknowledgements.}
I am grateful to N. Drago and C.J.F. van de Ven for their supervision and crucial proofreading/correction of the paper, and to V. Moretti and C. De Rosa for many helpful discussions on this project. I am much indebted with  M. Falconi for the clarifying discussions on Sobolev Hilbert spaces and quantization maps. 
I would like to thank K. Fredenhagen for formulating the question that motivated this paper.

 I acknowledge the support of the GNFM group of INdAM.

\paragraph{Data availability statement.}
Data sharing is not applicable to this article as no new data were created or analysed in this study.

\paragraph{Conflict of interest statement.}
The author certifies that he has no affiliations with or involvement in any
organization or entity with any financial interest or non-financial interest in
the subject matter discussed in this manuscript.
   
	\section{Weyl $C^*$-algebras}\label{sec: weyl algebra}
	In the following sections we recall the construction of  $C^*$-algebras of canonical commutation relations, also known as \textit{Weyl} $C^*$-algebras. The generators of the latter algebra are labeled  by elements of a symplectic, real vector space $E$, equipped with a \textit{symplectic form} $h\sigma$. For concreteness, we will take $E$ to descend from a separable complex Hilbert space $(\mathcal{H},\langle\cdot,\cdot\rangle)$ with $\sigma(\cdot,\;\cdot):= \Im{\langle\cdot,\cdot\rangle}$ and $h\in [0,+\infty)$. For $h> 0$, the form is non degenerate and standard constructions follow \cite{Bratteli_Robinson_97, Kastler_65, Manuceau_68, Petz_90}. However, for $h=0$ we will benefit from the more general setting introduced in \cite{Manuceau_Sirugue_Testard_Verbeure_73} and explored further in \cite{Binz_Honegger_Rieckers_2004,Binz_Honegger_Roecker_2024_2}, where the form is allowed to be degenerate. In particular, we are interested in the commutative Weyl $C^*$-algebra $\mathcal{W}(E,0)$. This will be taken as the \textit{observables algebra} for the classical theory analyzed in section \ref{sec: limite classico del gas libero}.

   After reviewing some standard results in Sec. \ref{sec: general construction}, \ref{sec: commutative case} and \ref{sec: additional structures}, we introduce the weak KMS condition and weak derivations in Sec. \ref{sec: classical KMS condition}.

\subsection{General construction}\label{sec: general construction}
The goal of this section is to build a  Weyl $C^*$-algebra $\mathcal{W}(E,h \sigma)$, where $h$ is a semi-classical parameter ranging from $0$ (fully degenerate case) to any finite positive value. We consider possibly degenerate symplectic forms, motivated by the need to build a quantization map which relates in a uniformed setting the commutative and non-commutative Weyl $C^*$-algebras \cite{Binz_Honegger_Rieckers_2004, Honegger_Rieckers_2005, Landsman_1998, Landsman_2017}. Proofs of the results presented in this section can be found in \cite{Binz_Honegger_Rieckers_2004} and references cited therein.

\paragraph{The $^*$-algebra $\Delta(E,h\sigma)$:} We start with the abstract unital free algebra $\mathfrak{W}_{0}$ generated by a set of elements labelled by vectors in $E$,  $\{W^{h}(f),\; f\in E\}\cup \{I\}$.
   Quotienting by the relations \begin{align}\label{eq: relazioni generiche per Weyl}
	& W^{h}(f)W^{h}(g) = e^{-\frac{i}{2}h \sigma(f,g)}W^{h}(f+g),\\
	& (W^h)^{*}(f)= W^h(-f),\\
	&W^h(0) = I,
	\end{align} we obtain a $^*$-algebra generated by the unitary elements $\{W^{h}(f),\; f\in E\}$ satisfying relations (2.2)-(2.4) in \eqref{eq: relazioni generiche per Weyl}. This $^*$-algebra is usually denoted as $\Delta(E,h \sigma)$. Thanks to Eq. \eqref{eq: relazioni generiche per Weyl}, the algebra can be made explicit as \begin{equation}
	\Delta(E,h \sigma) = \textsc{LH}\{W^{h}(f),\; f\in E\},
	\end{equation}that is, the \textit{finite linear combinations} of \textit{Weyl elements} $\{W^{h}(f),\; f\in E\}$ give the full $^*$-algebra.

	To obtain a  concrete realization of the $^*$-algebra, one can consider the space $\mathcal{F}(E,\mathbb{C})$ of all complex functions on $E$. We select from this space the \textit{delta-functions} \cite[Sec. 2.2]{Manuceau_68}: \begin{equation}\label{eq: concrete realization}
		\delta_{f}[g] = \begin{cases}
			1\quad f = g\\
			0\quad f\neq g
		\end{cases},\quad f,g\in E.
	\end{equation}  If we endow the linear hull $\textsc{LH}\{\delta_{f},\; f\in E\}$ with the product $\delta_{f}\cdot \delta_{g} = \exp{-ih \sigma(f,g)/2}\delta_{f+g}$ and the $^*$-operation $\delta_{f}^{*} = \delta_{-f}$, we see that the $^*$-algebra generated by $\textsc{LH}\{\delta_{f},\; f\in E\}$ is unital, with $\delta_{0}$ as unit element, and is $^*$-isomorphic to $\Delta(E,h \sigma)$ by the linear $^*$-isomorphism\begin{equation}
	    \sum_{i=1}^n \alpha_i W^h(f_i)\to \sum_{i=1}^n \alpha_i \delta_{f_i}.
	\end{equation} From this representation it is clear that the elements $\{W^{h}(f),\; f\in E\}$ are linearly independent. Indeed, suppose that \begin{equation}
	    \sum_{i=1}^n \alpha_i \delta_{f_i} = 0,
	\end{equation} for some complex coefficients $\{\alpha_i\}_{i\in{1,\dots,n}}$ and distinct vectors $\{ f_{i}\}_{i\in\{1,\dots,n\}}\subset E$. Then, by Def. \eqref{eq: concrete realization} we have that \begin{equation}
	    \sum_{i=1}^n \alpha_i \delta_{f_i}[f_j] = \alpha_j = 0,
	\end{equation} for all $j\in{1,\dots, n}$.
	
 \paragraph{Completion to a $C^*$-algebra:} To complete the $^*$-algebra $\Delta(E,h \sigma)$ to an appropriate $C^*$-algebra $\mathcal{W}(E,h \sigma) = \overline{\Delta(E,h \sigma)}$
 we need a suitable $C^*$-norm. To settle the notation, we write the definition of a state on Weyl $^*$-algebra\begin{definition}
     We define $\mathcal{S}(\Delta(E,h\sigma))$ to be the state space of $\Delta(E,h\sigma)$, that is, the set of linear functionals $\omega\colon \Delta(E,h\sigma)\to \mathbb{C}$, satisfying \begin{itemize}
         \item [(i)] $\omega(A^* A)\geq 0$, for all $A\in\Delta(E,0)$;
         \item [(ii)] $\omega(I) = 1$.
     \end{itemize}
 \end{definition}

{Now, we collect a number of well known properties required for the characterization of  $\mathcal{W}(E,h\sigma)$, referring the reader to the literature for their proofs.
\begin{enumerate}
    \item [(a)] It follows that the positive map 
  \begin{equation}\label{eq: unicita della C norma}
		\Delta(E,h \sigma)\ni A \longrightarrow \norm{A}_{h\sigma}:= \sup\bigg{\{}\omega(A^*A),\;\omega\in S(\Delta(E,h \sigma))\bigg{\}}\in [0,+\infty),
	\end{equation} is a $C^*$-norm \cite[Lem. 3.1]{Manuceau_Sirugue_Testard_Verbeure_73}.
    \item [(b)] $\norm{\cdot}_{h \sigma}$ is the biggest possible $C^*$-norm on $\Delta(E,h \sigma)$  \cite[Cor. 3.8]{Manuceau_Sirugue_Testard_Verbeure_73}, i.e. if $\norm{\cdot}$ is another $C^*$-norm, then, for all $A\in\Delta(E,h \sigma)$, $\norm{A}\leq \norm{A}_{h \sigma}$.
    \item[(c)] As a consequence of (a) and (b), one gets that $\norm{\cdot}_{h \sigma}$ is the unique $C^*$-norm making all representations $(\pi,\mathcal{H}_\pi)$ of $\Delta(E,h \sigma)$ $\norm{\cdot}_{h \sigma}$-continuous, i.e. $\norm{\pi(A)}\leq \norm{A}_{h \sigma}$ for all $A\in\Delta(E,h \sigma)$.
    \item[(d)] By (c), every representation of the $^*$-algebra extends $\norm{\cdot}_{h \sigma}$-continuously to a representation of $\mathcal{W}(E,h \sigma)$.
    \item[(e)]  If $h>0$, $\norm{\cdot}_{h \sigma}$ is the unique $C^*$-norm and $\mathcal{W}(E,h \sigma)$ is simple \cite[Cor. 4.23, 4.24]{Manuceau_Sirugue_Testard_Verbeure_73}.
As a corollary, when $h>0$, every representation of the Weyl $C^*$-algebra is faithful.
For a similar uniqueness result when the form is degenerate, see \cite[Th. 3-7]{Binz_Honegger_Rieckers_2004}.  
\item[(f)] Through the explicit expression of the norm in \eqref{eq: unicita della C norma} and the properties of states in $\mathcal{S}(\Delta(E,h \sigma))$, one has that for every symplectic subspace $E_{0}\subset E$ with  symplectic form $h \sigma_{0} = h \sigma|_{E_{0}\cross E_{0}}$,
   the Weyl $C^*$-algebra $\mathcal{W}(E_{0},h \sigma_{0})$ obtained by completing $\Delta(E_{0},h \sigma_{0})$ with the maximal norm $\norm{\cdot}_{h \sigma_{0}}$, is a $C^*$-subalgebra of $\mathcal{W}(E,h \sigma)$ and the respective norms satisfy $\norm{\cdot}_{h \sigma_{0}} = \norm{\cdot}_{h \sigma|_{E_{0}}}$ \cite[Sec. 3.3]{Manuceau_68}.
    \end{enumerate}
 From now on we will denote the norms as $\norm{\cdot}_{h}:= \norm{\cdot}_{h\sigma}$.}
	
	\begin{example}\label{ex: schrodinger representation}
	For later convenience, we briefly recall here a concrete example of Weyl $C^*$-algebra and discuss its representations. We take $E = \mathbb{L}^2(\mathbb{R}^\nu)$,  $h\sigma(\cdot,\cdot) = h\Im{\langle\cdot,\cdot\rangle }$ with $h>0$. We choose a symplectic basis $\{e_{k},ie_{k}\}_{k\in\mathbb{N}}$, where  $\{e_{k}\}_{k\in\mathbb{N}}$ is an orthonormal basis of $\mathbb{L}^2(\mathbb{R}^\nu)$. By re-naming $ie_{k}=: f_{k}$, this basis satisfies the usual relations\begin{equation}
			\sigma(f_{j},f_{k}) = \sigma(e_{j},e_{k}) = 0,\quad \sigma(e_{j},f_{k}) = \delta_{jk}.
		\end{equation}

			Now, let $E_{0}\subset \mathbb{L}^2(\mathbb{R}^\nu)$ be the finite dimensional complex subspace of $\mathbb{L}^2(\mathbb{R}^\nu)$ spanned by $\{e_{k},f_{k}\}_{k\in\{1,\dots, n\}}$ for some $n\in\mathbb{N}$. Since $\sigma$ is the imaginary part of the standard scalar product of $\mathbb{L}^2(\mathbb{R}^\nu)$, $\sigma|_{E_{0}\cross E_{0}}$ is non degenerate and we can regard $E_{0}$ as a symplectic real vector space of dimension $\dim_{\mathbb{R}}(E_{0}) = 2\dim_{\mathbb{C}}(E_{0}) = 2n$. Given any $f,g\in E_{0}$, we have
   \begin{multline}
			\sigma(f,g) = \sum_{k,j=1}^{n}\sigma(\lambda_{k}e_{k}+i\mu_{k}e_{k},\lambda'_{j}e_{j} + i\mu_{j}'e_{j})\\ = \sum_{k=1}^{n}(\lambda_{k}\mu_{k}' - \lambda'_{k}\mu_{k}) = \sum_{k=1}^{n}\sigma_{\mathbb{R}^{n}}(\lambda_{k}\bm{e}_{k}+\mu_{k}\bm{f}_{k},\lambda_{k}'\bm{e}_{k}+\mu'_{k}\bm{f}_{k}),
		\end{multline} where in the last line we have called $\sigma_{\mathbb{R}^{2n}}$ the standard symplectic form of $\mathbb{R}^{2n}$ and defined the canonical symplectic basis $\{\bm{e}_{1},\bm{f}_{1},\dots, \bm{e}_{n},\bm{f}_{n}\}\subset \mathbb{R}^{2n}$.
		
		It is a standard result \cite{Bratteli_Robinson_97,Folland_1989,Petz_90}, that the unique (up to $^*$-isomorphism) irreducible regular representation of $\mathcal{W}(E_{0},h\sigma)$ is the Schr\"odinger representation on $\mathbb{L}^{2}(\mathbb{R}^{n})$. This can be explicitly written extending by linearity and continuity the prescription on the Weyl  elements
		\begin{equation}\label{eq: sch representation}
			\pi_{S}(W^{h}(f)) = \pi_{S}(W^{h}(\sum_{k=1}^{n} \lambda_{k}e_{k} +\mu_{k}f_{k} )) := \exp{i\overline{\sum_{k=1}^{n}\lambda_{k}\hat{X}_{k} +\mu_{k}\hat{P}_{k}  }},
		\end{equation} where $(\hat{X}_{k},\hat{P}_{k})_{k\in\{1,\dots,n\}}$ are the standard position and momentum operator components, defined by the closure of the operators $(\hat{X}_{k}f)(x):= x_{k}f(x),\; (P_{k}f) = -ih (\partial f/ \partial x_{k})(x)$ initially defined on $\mathcal{S}(\mathbb{R}^{n})$. Indeed, we can verify that  \begin{align}
			\pi_{S}(W^{h}(f))\pi_{S}(W^{h}(g)) =&\exp{i\overline{\sum_{k=1}^{n}\lambda_{k}\hat{X}_{k} +\mu_{k}\hat{P}_{k}  }}\exp{i\overline{\sum_{j=1}^{n}\lambda_{j}'\hat{X}_{j} +\mu_{j}'\hat{P}_{j}  }} \nonumber \\
			=& \exp{-\frac{1}{2}\overline{\sum_{j,k=1}^{n}\comm{\lambda_{k}\hat{X}_{k}+\mu_{k}\hat{P}_{k}}{\lambda'_{j}\hat{X}_{j}+ \mu'_{j}\hat{P}_{j}}}}\pi_{S}(W^{h}(f+g))\nonumber\\
			=&e^{-\frac{ih}{2}\sigma(f,g)}\pi_{S}(W^{h}(f+g)).
		\end{align} Since $h\sigma$ is non-degenerate, the Schr\"odinger representation is faithful.
	\end{example}
	\subsection{Commutative case: $h = 0$}\label{sec: commutative case}
	 In this section we analyze the commutative Weyl  $C^*$-algebra $\mathcal{W}(E,0)$\cite{Binz_Honegger_Rieckers_2004}. {We introduce the following notation for the the topological dual of $E$:\begin{itemize}
	     \item [(i)] $E'_{\tau}$ denotes the topological dual with respect to a given convex topology $\tau$;
         \item[(ii)] in particular, the dual with respect to the norm topology is denoted by $E'_{\norm{\cdot}}$.
	 \end{itemize}} Some of the results of this section can be generalized for arbitrary symplectic spaces $E$.
     
     To effectively make computations with $\mathcal{W}(E,0)$, it is useful to identify the latter  $C^*$-algebra with some suitable space of functions.
	\paragraph{Almost periodic functions:}
	We introduce the  $C^*$-algebra of \textit{almost periodic functions} \cite{Corduneanu_1989}. For every $f\in E$, we consider the $\sigma(E'_{\tau},E)$ bounded, continuous mapping $\xi(f): E'_{\tau}\rightarrow \mathbb{C}$, defined as \begin{equation}
		\xi(f)(F) := \exp{iF(f)}, \quad F\in E'_{\tau}.
	\end{equation} These functionals satisfy the commutative Weyl relations with respect to the usual pointwise function product and $^*$-operation\begin{equation}
	\xi(f)\xi(g) = \xi(f+g),\quad \xi(f)^{*} = \xi(-f),\quad \xi(0) = 1.
	\end{equation} The completion of the set $\text{LH}\{\xi(f),\; f\in E\}$ with respect to the sup norm $\norm{\cdot}_{\infty}$ forms the  $C^*$-algebra of almost periodic functions, denoted by $\text{AP}(E,E'_{\tau})$ \cite{Corduneanu_1989}

\begin{remark}
    In \cite[Th. 4-3]{Binz_Honegger_Rieckers_2004}) the authors proved that the fully degenerate Weyl $C^*$-algebra $\mathcal{W}(E,0)$ can be characterized in terms of Almost periodic functions, i.e.
\begin{equation}
			\mathcal{W}(E,0) \simeq \textsc{AP}(E,E'_{\tau})\,,
		\end{equation}
		with the identification of the classical Weyl element $W^{0}(f)\leftrightarrow \xi(f)$, $f\in E$. It follows that the $C^*$-algebra $\textsc{AP}(E,E'_{\tau})$ does not depend on the chosen locally convex topology $\tau$. Indeed, the  $C^*$-algebra $\textsc{AP}(E,E'_{\tau})$ only sees the test function space $E$. Indeed, even if $E'_{\tau}\simeq F'_{\overline{\tau}}$, but $E\neq F$, then $AP(E,E'_{\tau})\neq AP(F,F'_{\overline{\tau}})$. From now on we will remove the dependence on $\tau$ from the notation and refer to the almost periodic functions space as $\textsc{AP}(E):= \textsc{AP}(E,E'_{\tau})$. 
        \end{remark}
	
		{ Since $^*$-isomorphisms  between  $C^*$-algebras preserve the  $C^*$-norm, if $\beta\colon\textsc{AP}(E)\to\mathcal{W}(E,0)$ is the $^*$-isomorphism between the almost periodic functions and the Weyl  $C^*$-algebra, it follows that for any faithful representation $(\pi,\mathcal{H}_\pi)$
	of $\textsc{AP}(E)$,  $(\pi\circ \beta,\mathcal{H}_\pi)$ is also a faithful representation of $\mathcal{W}(E,0)$. Thus, the norm coincides  \begin{equation}\norm{\beta(c)}_{\infty} = \norm{\pi(\beta(c))} = \norm{c}_{0}. \end{equation} From now on, will identify the Weyl $C^*$-algebra with $\textsc{AP}(E)$ without writing the isomorphism map explicitly.}

\begin{remark}\label{ex: struttura di weyl classica per L2}
	{In our case, $E$ is also a real Hilbert space with the same norm of its complexification $(\mathcal{H},\langle\cdot,\cdot\rangle)$ and scalar product $\langle\cdot,\cdot\rangle_{E}:= \Re{\langle\cdot,\cdot\rangle}$. Every linear, real,  continuous functional $\Psi_{\mathbb{R}}\in E'_{\norm{\cdot}}$ can be seen as the real part of a linear, complex, continuous functional on $\mathcal{H}$ with the same norm \cite[Prop. 1.9.3]{Megginson_98}: \begin{equation}\Psi(f)\colon= \Psi_{\mathbb{R}}(f) -i\Psi_{\mathbb{R}}(if),\quad \text{for all }f\in E.\end{equation} By Riesz representation theorem we can write for every linear, real, continuous functional \begin{equation}\label{eq: Riesz}
			\Psi_{\mathbb{R}}(f) = \Re{\Psi(f)} = \Re{\langle g_{\Psi}, f\rangle},\quad \text{for all }f\in E.
		\end{equation} Equation \eqref{eq: Riesz} tells us that every functional of this kind can be expressed in term of the real scalar product of $E$. Since Hilbert spaces are reflexive with respect to the norm topology, we have $\mathcal{H} \equiv E \simeq E'_{\norm{\cdot}}$, so that, the almost periodic functions (and accordingly, also the Weyl elements) can be seen as maps acting on $E$ as \begin{equation}
        W^{0}(f)[g] = \exp{i\Re{\langle f, g\rangle}},\quad \text{for all }f,\; g \in E.
        \end{equation}}

	\end{remark}
	
	\begin{remark}\label{remark: duality for subspaces}
		{Consider a subspace $E_{0}\subset E$. We can equip $E_0$ with the norm topology, which coincides with the subspace topology induced from $E$.} By standard results \cite[Th. 1.10.26]{Megginson_98}, $(E_{0})'_{\norm{\cdot}}$ is a Banach space. Moreover, \begin{equation}(E_{0})'_{\norm{\cdot}}\simeq E'_{\norm{\cdot}}/\{\Psi\in E_{\norm{\cdot}}',\; \Psi|_{E_{0}} = 0\},\end{equation} as Banach spaces and in particular  $E'_{\norm{\cdot}}\simeq (E_{0})'_{\norm{\cdot}}$, if $E_{0}$ is dense in $E$.
        \end{remark}

    {Armed with the previous remarks we consider two relevant examples of classical Weyl $C^*$-algebra: $\mathcal{W}(\mathbb{L}^2(\mathbb{R}^\nu),0)$ and $\mathcal{W}(\mathcal{S}(\mathbb{R}^\nu),0)$.}

        {\begin{example} Consider $E = \mathbb{L}^2(\mathbb{R}^\nu)$, with $E'_{\norm{\cdot}}\simeq \mathbb{L}^2(\mathbb{R}^\nu)$, and the symplectic subspace $E_{0} = \mathcal{S}(\mathbb{R}^{\nu})$. The Weyl $C^*$-algebra $\mathcal{W}(E_{0}, 0)$ can be identified with the $C^*$-algebra of almost periodic functions $\textsc{AP}(E_{0})$ by taking the sup-norm closure of \begin{equation}\label{eq: star algebra for schwartz}
			\Delta(E_{0},0) = \text{LH}\{ W^{0}(f),\; f\in E_{0},\;\text{such that}\; W^{0}(f)[g] = e^{i\Re{\langle g,f\rangle}},\;  g\in E'_{\norm{\cdot}}\}.
		\end{equation}
		Note that in \eqref{eq: star algebra for schwartz} we have implicitly identified $E'_{\norm{\cdot}}$ with $(E_{0})'_{\norm{\cdot}}$, so that the duality relation between $E_{0}$ and $(E_{0})'_{\norm{\cdot}}$ can be written in the same way as the one between $E$ and $E'_{\norm{\cdot}}$.
		By writing the $C^*$-algebra in this way, it is clear that if $\{g_{n}\}_{n\in\mathbb{N}}\subset E_{0}$ is a sequence in the Schwartz functions space, then $g_{n}\xrightarrow{n\rightarrow\infty}g\in E_{0}$ in the $\sigma(E_{0},(E_{0})'_{\norm{\cdot}})$ topology if and only if $W^{0}(g_{n})\xrightarrow{n\rightarrow\infty} W^{0}(g)$ pointwise.
	\end{example}}

 \subsection{Poisson brackets and infinitesimal generators on $\mathcal{W}(E,0)$}\label{sec: additional structures}
	We define the necessary objects for describing the classical dynamics on $\mathcal{W}(E,0)$. 

   \paragraph{Poisson brackets:} Following \cite{Binz_Honegger_Rieckers_2004}, we equip $\mathcal{W}(E,0)$ with a Poisson structure.
	 We take the dense $^*$-algebra $\Delta(E,0)\subset\mathcal{W}(E,0)$ as the domain of the Poisson bracket. These are defined on Weyl elements as \begin{equation}\label{eq: Poisson parenthesis}
		\Delta(E,0)\cross \Delta(E,0) \ni (W^{0}(f), W^{0}(g))\rightarrow \{W^{0}(f),W^{0}(g)\}:= \sigma(g,f) W^{0}(f+g)\in \Delta(E,0),\end{equation} and extended on $\Delta(E,0)$ by linearity.
  This definition can be justified by considering the standard Poisson bracket in $\mathbb{R}^{2n}$: $\{f,g\} = \sum_{i=1}^{n}\partial_{q_{i}}f\partial_{p_{i}}g - \partial_{p_{i}}f\partial_{q_{i}}g$,  applied to the Weyl functions $W^{0}(\lambda,\mu)(q,p) = \exp{i(\lambda\cdot q+ \mu\cdot p)} $. Indeed, by an explicit computation \begin{multline}
		\{W^{0}(\lambda,\mu), W^{0}(\lambda',\mu')\} = (\mu\cdot\lambda' -\lambda\cdot\mu)W^{0}(\lambda+\lambda',\mu +\mu')\\ = \sigma_{\mathbb{R}^{2n}}((\lambda',\mu'),(\lambda,\mu))W^{0}(\lambda+\lambda',\mu+\mu').
	\end{multline}
	The bracket in \eqref{eq: Poisson parenthesis},  together with the dense $^*$-algebra $\Delta(E,0)$ give rise to a Poisson commutative algebra $(\mathcal{W}(E,0),\{\cdot,\cdot\})$. 
	
	\paragraph{Infinitesimal generators:} Another fundamental family of objects is the set of classical \textit{infinitesimal generators}.
	We can define \begin{equation}
		\Phi_{0}(f) : E'_{\norm{\cdot}}\rightarrow \mathbb{R}, \quad \Phi_{0}(f)[g] := \Re{\langle f,g\rangle} \quad f\in E,
	\end{equation} where we have identified the elements of $E'_{\norm{\cdot}}$ and $E$.
 We will refer to $\Phi_{0}(\cdot)$ as \textit{field functions}. The Weyl elements take the form $W^{0}(f) = \exp{i\Phi_{0}(f)}$.  When working with the Weyl $C^*$-algebra $W(E,h\sigma)$, one can extend the action of analytic states on \textit{field operators} $\Phi_{h}(\cdot)$. In the same way,  we extend classical states $\omega_{0}: \mathcal{W}(E,0)\rightarrow \mathbb{C}$ to compute expectation values of pointwise products of functions like \begin{equation}
		\omega_{0}(\Phi_{0}(f_{1})\dots \Phi_{0}(f_{n})).
	\end{equation}
To this avail it suffices to consider \textit{analytic} classical states, which are normalized linear functionals such that the associated GNS representation $(\mathcal{H}_{0},\pi_{0},\Omega_{0})$ is regular and $\Omega_{0}$ is an analytic vector for $\{\Phi_{\omega_{0}}(f),\; f\in E\}$, where $\Phi_{\omega_{0}}(f)$ is the strong (weak) generator of $\pi_{0}(W^{0}(f))$. A particular class of analytic states are the quasi-free states, which can be defined on Weyl elements as (see \cite[Chap. 3]{Petz_90}) \begin{equation}
		\omega_{s}(W^{0}(f)) := \exp{-\frac{1}{4}s(f,f)},
	\end{equation} where $s(\cdot,\cdot): E\cross E\rightarrow \mathbb{R}$ is a real, bi-linear symmetric form on $E$. The definition of analytic states implies that $\mathbb{R}\ni t \rightarrow \omega_{0}(W^{0}(tf))$ is infinitely differentiable and that the state can be extended on field operators as \begin{multline}\label{eq: estensione stato classico}
		\omega_{0}(\Phi_{0}(f_{1})\dots \Phi_{0}(f_{n}))= \langle\Omega_{0}, \Phi(f_{1})\dots \pi_{\omega_{0}}\Phi(f_{n})\Omega_{0}\rangle \\= (-i)^{n}\frac{\dd}{\dd t_{1}}\dots\frac{\dd}{\dd t_{n}}\omega_{0}(W^{0}(t_{1}f_{1}+\dots + t_{n}f_{n}))\Big|_{t_1=\ldots = t_n = 0}.
	\end{multline}
		\begin{remark}\label{remark: regularity of states}
		We can also define $C^{2k}$ states as normalized, positive, linear functionals whose GNS representation is regular with associated cyclic vector satisfying $\Omega_{0}\in D(\Phi_{\omega_{0}}(f)^{k}),$ for all $ f\in E$. This condition is equivalent to saying that $\Omega_{0}$ is of class $C^{k}$ (see the discussion before example 5.2.18 in \cite{Bratteli_Robinson_97}). For these states, it is possible to compute in a sensible way $\omega_{0}(\Phi_{0}(f_{1})\cdot\Phi_{0}(f_{m}))$ for $m\leq k$. This observation is relevant for the definition the weak KMS condition for a commutative Weyl $C^*$-algebra in the next section.
        \end{remark}
        \subsection{KMS conditions for Weyl $C^*$-algebras }
\label{sec: classical KMS condition}
We introduce appropriate definitions of dynamics and equilibrium conditions on $\mathcal{W}(E,0)$. We start with Rmk. \ref{remark: dinamica classica} by introducing the classical dynamics as \textit{semi-classical limit} of the quantum $^*$-automorphisms.
 {Following this, we construct \textit{weak derivations} in Def. \ref{def: weak derivation} and characterize their properties in two Propositions,   \ref{prop: caratterizzazione derivazioni continue I} and \ref{prop: chiusura pointwise I}, whose proofs are deferred to Appendix \ref{sec: appendice A}}. Lastly, inspired by the classical KMS condition introduced in \cite{Gallavotti_Verboven_75}, we define the weak KMS-condition for the states \ref{def: classical KMS condition}.

	 We introduce the dynamics on $\mathcal{W}(E,0)$ by means of the one-parameter group of $^*$-automorphism 
 \begin{equation}
		\mathbb{R}\ni t \longrightarrow \tau_{0,t}(W^{0}(f)) = W^{0}(e^{iHt}f)\in \mathcal{W}(E,0),\quad f\in E,
	\end{equation}where $H$ is some self-adjoint operator. {\begin{remark}(\textbf{Interpretation of the dynamics}) \label{remark: dinamica classica}  The motivation for the introduction of the latter dynamics is semi-classical in nature. $E$ is a space of test functions for the effective wavefunctions of the single particle. The dynamics on the quantum Weyl algebra is introduced via an automorphism, whose action is implemented on the space $E$. In the $h\to 0^+$ limit, the interpretation of the latter space does not change, i.e. the theory is still fundamentally quantum and only the resulting macroscopic setting is classical. In particular, we expect that the classical limit of the dynamics $\tau_{h,t}(W^{h}(f))= W^{h}(e^{iHt}f)$ should be $W^{0}(e^{iHt}f) = \tau_{0,t}(W^{0}(f))$.
	\end{remark}}

    {The introduction of the classical KMS condition in\cite{Gallavotti_Verboven_75} is motivated by a formal $h\to0^+$ limit of the quantum condition. We repeat this argument in the specific setting of Weyl algebras. The quantum KMS condition \eqref{eq: KMS quantum} can be equivalently written as
\begin{equation}\label{eq: road to classical KMS}
  \omega_h\!\left(
    W^h(f)\,\frac{1}{ih}\big(\tau_{i\beta h}(W^h(g)) - W^h(g)\big)
  \right)
  \;=\;
  \omega_h\!\left(
    \frac{1}{ih}\,[\,W^h(g),\,W^h(f)\,]
  \right).
\end{equation}
In the semiclassical limit $h\to 0^+$, the right-hand side of \eqref{eq: road to classical KMS} should converge to
\[
  \omega_0\!\big(\{W^0(f),W^0(g)\}\big),
\]
where $\{\cdot,\cdot\}$ denotes the Poisson bracket, while the left-hand side should converge to
\[
  \beta\,\omega_0\!\big(W^0(f)\,\delta_0(W^0(g))\big),
\]
where $\delta_0$ is the derivation generated by the $^*$-automorphism group $\tau_0$.
}
    Thus, in order to define classical KMS condition for a Weyl algebra, we must introduce the correct derivations.
	To get some intuition, we consider the small time limit of the $^*$-automorphism $\tau_{0}$. Taking $f\in D(H)$ and identifying $\mathcal{W}(E,0)$ with the $C^*$-algebra of the almost periodic functions we have: \begin{equation}
		\lim_{t\rightarrow 0}\frac{1}{t}\{W^{0}(e^{iHt}f )[g] - W^{0}(f)[g]\} = i\Re{\langle g, iHf\rangle} W^{0}(f)[g],\quad \forall g\in E.
	\end{equation}
This suggest to set
 \begin{equation}\label{eq: tentative definition for derivations} 
		\delta_{0}(W^{0}(f)) := i\Phi_{0}(iHf) W^{0}(f), \quad f\in D(H).
	\end{equation}
 This prescription satisfies the two characterizing properties of derivations 
	\begin{align}
		&\delta_{0}(W^{0}(f)W^{0}(g)) = i\Phi_{0}(iH(f+g))W^{0}(f+g) = \delta_{0}(W^{0}(f))W^{0}(g) + W^{0}(f)\delta_{0}(W^{0}(g))\nonumber \\
		&\delta_{0}(W^{0}(f)^{*}) = -i\Phi_{0}(iH f)W^{0}(-f) = \delta_{0}(W^{0}(f))^{*},
	\end{align} for $f,g\in D(H)$.
    However, $\delta_0$ is not a derivation as defined in \cite[Def 3.2.21]{Bratteli_Robinson_87} since $\delta_{0}(W^{0}(f))\notin \textsc{AP}(E)$. { Indeed, while $E\ni g \to \Phi_0(f)[g]$ is a continuous function of $g$ in the norm topology of $E$, it is not bounded since  $\lim_{\lambda\to+\infty}|\Phi_{0}(f)[\lambda f]| = +\infty$. Thus, $\Phi_0(f)$ cannot belong to $\text{AP}(E)$, because the latter is a subset of $C_b(E'_{\norm{\cdot}})$, the space of bounded, continuous functions on $E_{\norm{\cdot}}'\simeq E$}.
   Nevertheless, we observe that when dealing with an analytic state $\omega_{0}$, it makes sense to compute \begin{equation}\label{eq: computation of weak derivations}
		\omega_{0}(\delta_{0}(W^{0}(f))) = i\omega_{0}(\Phi_{0}(iHf)W^{0}(f)).
	\end{equation}  As we will see, this is basically all that matters for defining a weaker version of the standard classical KMS condition.	Motivated by the above considerations we introduce the following notion of weak derivation.
		\begin{definition}\label{def: weak derivation} (\textbf{Weak derivation}) Let $E$ be a normed, symplectic space and 
		let $C( E'_{\norm{\cdot}})$ be the space of $\sigma(E'_{\norm{\cdot}}, E)$-continuous mapping from $E'_{\norm{\cdot}}$ to $\mathbb{C}$. A weak derivation is a linear operator from a {$^*$-subalgebra $D(\delta_{0})\subset \Delta(E,0)$} to the space of continuous functions on $E'_{\norm{\cdot}}$  which satisfy the following properties for all $A,B\in D(\delta_{0})$\begin{itemize}
		    \item [(i)] $ \delta_{0}(A)^{*} = \delta_{0}(A^{*})$
            \item [(ii)] $\delta_{0}(AB) = \delta_{0}(A)B +A\delta_{0}(B)$
		\end{itemize}
		and for which the subspace $E(\delta_{0}):=\{f \in E\,|\, W^{0}(f)\in D(\delta_{0})\}$ is dense $E$.
		We say that the derivation is \textit{continuous} if \begin{equation}
			\mathbb{C}\ni \lambda \rightarrow \delta_{0}(W^{0}(\lambda f)) \in C(E'_{\norm{\cdot}})
		\end{equation} is pointwise continuous for all $f \in E$. 
    Furthermore, we say that a derivation is \textit{linear} if
    \begin{equation} W^{0}(-f)\delta_{0}(W^{0}(f))\,,\end{equation}
    is a $\mathbb{R}$-linear functional on $E'_{\norm{\cdot}}$ for all $f\in E(\delta_{0})$.  
	\end{definition}
The following proposition asserts that a continuous, linear weak derivation can be explicitly represented in terms of the field functions $\Phi_{0}(\cdot)$ and a linear operator on $E$ with domain given by $E(\delta_{0})$. We will refer to this operator as the \textit{associated operator} of $\delta_{0}$.
	{\begin{proposition}\label{prop: caratterizzazione derivazioni continue I}
		Let $E$ be a normed, symplectic space. Then, $\delta_{0}: D(\delta_{0})\rightarrow C(E'_{\norm{\cdot}})$  is a continuous and linear weak derivation if and only if for all $W^{0}(f) \in D(\delta_{0})$ \begin{equation}
			\delta_{0}(W^{0}(f)) = i\Phi_{0}(L_{0} f)W^{0}(f),
		\end{equation} where $L_{0}: D(L_{0})\subset E \rightarrow E$ is a linear operator on the normed space $E$ with $D(L_{0}) = E(\delta_{0})$ a dense domain in $E$. 
	\end{proposition}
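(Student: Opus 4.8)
The plan is to reduce everything to the action of $\delta_{0}$ on the Weyl generators, since both a weak derivation and the right-hand expression are determined by their values on $\{W^{0}(f)\}$ together with linearity. The converse implication $(\Leftarrow)$ is routine: given a linear operator $L_{0}$ with dense domain $D(L_{0})\subset E$, the prescription $\delta_{0}(W^{0}(f)):=i\Phi_{0}(L_{0}f)W^{0}(f)$ extends by linearity to a well-defined map on $\textsc{LH}\{W^{0}(f):f\in D(L_{0})\}$ (well-definedness follows from the linear independence of the Weyl elements established in Section \ref{sec: general construction}); this set is a unital $^{*}$-subalgebra because $D(L_{0})$ is a linear subspace and $W^{0}(f)W^{0}(g)=W^{0}(f+g)$, $W^{0}(f)^{*}=W^{0}(-f)$. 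Properties (i) and (ii) are exactly the two identities checked in the discussion around Eq. \eqref{eq: tentative definition for derivations}, now using linearity of $L_{0}$ and of $f\mapsto\Phi_{0}(f)$; the map takes values in $C(E'_{\norm{\cdot}})$ since $\Phi_{0}(v)$ is a $\sigma(E'_{\norm{\cdot}},E)$-continuous functional and $\xi(f)\in\textsc{AP}(E)$ is continuous; $\lambda\mapsto i\lambda\,\Phi_{0}(L_{0}f)\,W^{0}(\lambda f)$ is pointwise continuous; $W^{0}(-f)\delta_{0}(W^{0}(f))=i\Phi_{0}(L_{0}f)$ is $\mathbb{R}$-linear; and $E(\delta_{0})=D(L_{0})$ is dense, again by linear independence.

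For the forward implication $(\Rightarrow)$, fix $f\in E(\delta_{0})$ and set $\psi_{f}:=W^{0}(-f)\,\delta_{0}(W^{0}(f))$, the pointwise product in $C(E'_{\norm{\cdot}})$; it is $\sigma(E'_{\norm{\cdot}},E)$-continuous, being the product of $\xi(-f)\in\textsc{AP}(E)$ with $\delta_{0}(W^{0}(f))\in C(E'_{\norm{\cdot}})$, and it is $\mathbb{R}$-linear by the linearity assumption on $\delta_{0}$. The crucial observation is that $\psi_{f}$ is purely imaginary-valued. Indeed, from (i) one gets $\psi_{-f}=W^{0}(f)\,\delta_{0}(W^{0}(f))^{*}=\overline{W^{0}(-f)\,\delta_{0}(W^{0}(f))}=\overline{\psi_{f}}$, while applying (ii) to $I=W^{0}(f)W^{0}(-f)$ and using $\delta_{0}(I)=0$ (itself a consequence of (ii)) together with the commutativity of pointwise multiplication gives $\psi_{-f}=-\psi_{f}$; hence $\overline{\psi_{f}}=-\psi_{f}$. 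Writing $\psi_{f}=i\,\phi_{f}$ with $\phi_{f}$ real-valued, $\mathbb{R}$-linear and $\sigma(E'_{\norm{\cdot}},E)$-continuous, and invoking the identification of the topological dual of $(E'_{\norm{\cdot}},\sigma(E'_{\norm{\cdot}},E))$ with $E$, we obtain a unique $v_{f}\in E$ with $\phi_{f}=\Phi_{0}(v_{f})$. Defining $L_{0}f:=v_{f}$ then yields $\delta_{0}(W^{0}(f))=W^{0}(f)\,\psi_{f}=i\,\Phi_{0}(L_{0}f)\,W^{0}(f)$, and $D(L_{0})=E(\delta_{0})$ is dense in $E$ by the definition of a weak derivation.

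It remains to verify that $L_{0}$ is linear. Additivity follows from (ii): for $f,g\in E(\delta_{0})$ one has $f+g\in E(\delta_{0})$ since $D(\delta_{0})$ is a subalgebra, and comparing $\delta_{0}(W^{0}(f+g))=i\Phi_{0}(L_{0}(f+g))W^{0}(f+g)$ with $\delta_{0}(W^{0}(f))W^{0}(g)+W^{0}(f)\delta_{0}(W^{0}(g))=i\Phi_{0}(L_{0}f+L_{0}g)W^{0}(f+g)$ gives $\Phi_{0}(L_{0}(f+g))=\Phi_{0}(L_{0}f+L_{0}g)$, whence $L_{0}(f+g)=L_{0}f+L_{0}g$ by uniqueness of the representing vector. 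For homogeneity, fix $f\in E(\delta_{0})$ and $F\in E'_{\norm{\cdot}}$ and set $g(\lambda):=\Phi_{0}(L_{0}(\lambda f))[F]$; by additivity $g$ is an additive function on $\mathbb{R}$, while the continuity assumption forces $\lambda\mapsto\delta_{0}(W^{0}(\lambda f))[F]=i\,g(\lambda)\,e^{i\lambda F(f)}$ to be continuous, so $g(\lambda)\to 0$ as $\lambda\to 0$; an additive function on $\mathbb{R}$ continuous at $0$ is linear, hence $g(\lambda)=\lambda g(1)$, i.e. $\Phi_{0}(L_{0}(\lambda f))=\Phi_{0}(\lambda L_{0}f)$ for every $F$, and therefore $L_{0}(\lambda f)=\lambda L_{0}f$. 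I expect the main obstacle to be the forward direction, and within it the argument that $\psi_{f}$ is purely imaginary — this is precisely what pins down the field-function form $i\Phi_{0}(L_{0}f)W^{0}(f)$ rather than a more general $W^{0}(f)\big(\Phi_{0}(u_{f})+i\Phi_{0}(v_{f})\big)$ — together with the correct invocation of weak-$^{*}$ duality; the Cauchy functional equation step for homogeneity is standard but genuinely uses the continuity hypothesis.
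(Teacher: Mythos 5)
Your proof is correct and follows essentially the same route as the paper's: factor out $W^{0}(f)$ to isolate $\varphi(f)=W^{0}(-f)\delta_{0}(W^{0}(f))$, show it is purely imaginary and an $\mathbb{R}$-linear weak-$*$-continuous functional, represent it as $i\Phi_{0}(L_{0}f)$ by duality, and deduce linearity of $L_{0}$ from the Leibniz rule plus the continuity hypothesis. The only cosmetic differences are that you obtain anti-self-adjointness of $\psi_{f}$ from (i) and (ii) combined rather than from (i) and additivity, and you package homogeneity as a Cauchy-functional-equation argument where the paper uses rational approximation; both rely on the same continuity-at-$0$ input.
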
}
	
	Now, we introduce a weaker notion of closability and of closed derivation that will help us to understand when the domain of definition of $\delta_{0}$, $D(\delta_{0})\subset \Delta(E,0)$ is in some sense, maximal.
	\begin{definition} \label{def: pointwise closure}(\textbf{Pointwise closure})
		Let $E$ be a normed, symplectic space. We say that a weak derivation $\delta_{0}$ is \textit{pointwise closed} if for every $\{f_{n}\}_{n\in\mathbb{N}}\subset E(\delta_{0})$ such that $f_{n}\rightarrow f\in E$ in the $\sigma(E,E_{\norm{\cdot}}')$-topology and $\delta_{0}(W^{0}(f_{n}))\rightarrow \Psi \in C(E'_{\norm{\cdot}})$ pointwise, it follows that $W^{0}(f)\in D(\delta_{0})$ ($f\in E(\delta_{0})$) and $\Psi = \delta_{0}(W^{0}(f))$.
	\end{definition} The following proposition shows that, for a linear and continuous weak derivation $\delta_0$, closability of $\delta_0$ is equivalent to the closability of the associated operator $L_0$.
	
		\begin{proposition}\label{prop: chiusura pointwise I} Given $\delta_{0}$ a continuous, linear weak derivation, it follows that $\delta_{0}$ is pointwise closed if and only if the associated operator $L_{0}$ is closed. Moreover, it follows that $\delta_{0}$ is closable if and only if $L_{0}$ is closable.
	\end{proposition}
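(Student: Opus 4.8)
The plan is to transport the "pointwise closure" notion for $\delta_0$ through the explicit representation $\delta_0(W^0(f)) = i\Phi_0(L_0 f)W^0(f)$ established in Proposition \ref{prop: caratterizzazione derivazioni continue I}, and show it matches the graph-closure notion for $L_0$ on $E$. The key observation making this work is that the bijection $f \leftrightarrow W^0(f)$ intertwines $\sigma(E,E'_{\norm{\cdot}})$-convergence of the $f_n$ with pointwise convergence of the $W^0(f_n)$: indeed $W^0(f_n)[g] = \exp(i\Re\langle f_n,g\rangle) \to \exp(i\Re\langle f,g\rangle)$ for every $g$ precisely when $\Re\langle f_n,g\rangle \to \Re\langle f,g\rangle$ modulo $2\pi$, and since the $f_n$ are forced to be convergent in the weak topology anyway in the hypothesis of Definition \ref{def: pointwise closure}, this subtlety does not bite. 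So "sequences in $E(\delta_0)$ $\sigma(E,E'_{\norm{\cdot}})$-converging to $f$" corresponds exactly to "sequences of Weyl generators in $D(\delta_0)$ converging pointwise to $W^0(f)$".

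First I would prove the "closed" equivalence. Suppose $L_0$ is closed. Take $\{f_n\} \subset E(\delta_0) = D(L_0)$ with $f_n \to f$ weakly and $\delta_0(W^0(f_n)) = i\Phi_0(L_0 f_n)W^0(f_n) \to \Psi$ pointwise. Evaluating at a fixed $g \in E'_{\norm{\cdot}} \simeq E$ and using that $W^0(f_n)[g] \to W^0(f)[g]$ (a nonzero unimodular limit), pointwise convergence forces $\Re\langle L_0 f_n, g\rangle$ to converge for every $g$; hence $L_0 f_n$ converges in the weak topology $\sigma(E,E'_{\norm{\cdot}})$ to some $k \in E$, and $\Psi(g) = i\Re\langle k,g\rangle W^0(f)[g] = i\Phi_0(k)W^0(f)[g]$. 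Here I must upgrade weak convergence of $L_0 f_n$ to norm convergence so that closedness of $L_0$ (a norm-topology notion) applies; the standard route is that closedness of $L_0$ as a densely defined operator on a Banach space is equivalent to closedness of its graph, and on a Hilbert space one can pass through the adjoint — alternatively, since the problem is already phrased in terms of weak convergence in the definition, I would instead define/interpret "closed" via the weakly closed graph, noting that for linear operators on reflexive Banach spaces a convex graph is norm-closed iff weakly closed. Given $f_n \to f$ and $L_0 f_n \to k$ (in whichever compatible topology), closedness yields $f \in D(L_0) = E(\delta_0)$ and $L_0 f = k$, so $W^0(f) \in D(\delta_0)$ and $\delta_0(W^0(f)) = i\Phi_0(L_0 f)W^0(f) = \Psi$, which is exactly pointwise closedness. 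The converse is the mirror image: given pointwise closedness and a sequence $f_n \to f$, $L_0 f_n \to k$ witnessing a potential failure of closedness of $L_0$, multiply by the (bounded, nonvanishing) factor $W^0(f_n)$ to produce $\delta_0(W^0(f_n)) \to i\Phi_0(k)W^0(f)$ pointwise, invoke pointwise closedness to get $f \in E(\delta_0)$ and $\delta_0(W^0(f)) = i\Phi_0(k)W^0(f)$, then compare with $\delta_0(W^0(f)) = i\Phi_0(L_0 f)W^0(f)$ and cancel $W^0(f)$ to conclude $L_0 f = k$.

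For the "closable" statement I would argue that $\delta_0$ admits a pointwise-closed extension that is still a continuous linear weak derivation iff $L_0$ admits a closed linear extension on $E$, using that any such extension of $\delta_0$ has, by Proposition \ref{prop: caratterizzazione derivazioni continue I}, its own associated operator $\widetilde{L}_0 \supset L_0$, and conversely any closed extension $\widetilde{L}_0 \supset L_0$ with dense domain induces via $\widetilde\delta_0(W^0(f)) := i\Phi_0(\widetilde L_0 f)W^0(f)$ a pointwise-closed weak derivation extending $\delta_0$ — the minimal such being generated by the closure $\overline{L_0}$, whose existence is exactly the closability of $L_0$. The one point requiring care is that closability of $\delta_0$ should be read as "$\delta_0$ has a pointwise-closed extension" (equivalently, the pointwise closure of its graph is the graph of a function), which I would either take as the definition or spell out; modulo that, closability of $L_0 \iff$ closability of $\delta_0$ drops out of the first equivalence applied to closures.

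I expect the main obstacle to be the topological mismatch flagged above: the pointwise-closure hypothesis is stated with $\sigma(E,E'_{\norm{\cdot}})$-convergence of the arguments and pointwise convergence of the derivations, whereas "closed operator $L_0$" is a norm-topology notion; making the equivalence clean requires either (a) verifying that on the reflexive Banach space $E$ the relevant graphs are norm-closed iff weakly closed (which holds for linear, hence convex, graphs), or (b) showing that under the additional continuity/linearity assumptions the weak convergence of $L_0 f_n$ that one extracts is automatically norm convergence. I would go with route (a), citing reflexivity of $E$ and convexity of the graph of a linear operator (Mazur's theorem), and relegate the details to Appendix \ref{sec: appendice A} as announced in the text.
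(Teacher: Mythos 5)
Your proposal is correct and follows essentially the same route as the paper's proof: both directions are obtained by transporting the pointwise-closure condition through the representation $\delta_0(W^0(f))=i\Phi_0(L_0f)W^0(f)$, dividing out the nonvanishing unimodular factor $W^0(f_n)$ to extract weak convergence of $L_0f_n$, and reconciling the weak/norm topologies via the fact that the (convex) graph of $L_0$ is norm-closed iff weakly closed — a fact the paper invokes implicitly when it says a closed $L_0$ is "in particular $\sigma(E,E'_{\norm{\cdot}})$--$\sigma(E,E'_{\norm{\cdot}})$ closed." Your treatment of the closability statement via closed extensions is consistent with (and somewhat more explicit than) the paper, which leaves that part to the same mechanism.
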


 \begin{remark}
     {Putting together Propositions \ref{prop: caratterizzazione derivazioni continue I} and \ref{prop: chiusura pointwise I} we see that Eq. \eqref{eq: tentative definition for derivations} defines a weak, continuous and closed derivation with $H = H^*$ as associated operator. Thus, we can associate a derivation for every one-parameter group $\tau_{0,t}(W^0(f)) = W^0(e^{iHt}f)$. However, self-adjointness of $H$ is not a necessary requirement to have a well defined closed and continuous weak derivation.}
 \end{remark}
 
		Eq. \eqref{eq: computation of weak derivations} allows us to define a classical KMS condition for the Weyl $C^*$-algebra $\mathcal{W}(E,0)$:
	
		\begin{definition}\label{def: classical KMS condition} (\textbf{Weak classical KMS condition})
		Let $E$ be a normed symplectic space, $\beta\in \mathbb{R}$, and $\delta_0$ a pointwise closed, continuous and linear weak derivation. Then, if $\omega_{0}\in \mathcal{S}(\mathcal{W}(E,0))$ is a $C^2$ state as in Rmk. \ref{remark: regularity of states}, we say that $\omega_0$ satisfies the weak $(\delta_{0},\beta)$-KMS property if \begin{equation}\label{eq: KMS twisting}
			\omega_{0}(\{a,b\}) = \beta\omega_{0}(b\delta_{0}(a)),\quad \forall\; a,b\in D(\delta_{0}),
		\end{equation}
        where the Poisson bracket has been defined in \eqref{eq: Poisson parenthesis}.
	\end{definition}
		We observe that, having defined the Poisson bracket only on $\Delta(E ,0)$, it is not relevant for definition  \ref{def: classical KMS condition} to extend further $\delta_{0}$ outside of $\Delta(E,0)$.
        Moreover, note that in \ref{def: classical KMS condition} we cannot require less than $C^2$-regularity on the classical state $\omega_{0}$, since we need to compute the expectation value of $\Phi_0(f),\;\text{for } f\in E$; {see the discussion on regularity in Rmk. \ref{remark: regularity of states}.}
The foregoing definition is suitable for Weyl $C^*$-algebras since it does not involve continuity properties of time evolution. One can also give a definition more in line with the approach of $W^*$-dynamical systems and verify that it is satisfied by weak KMS states, see appendix \ref{sec: appendice C}.
    
	\section{Berezin quantization for Weyl algebras}\label{sec: stricti deformation quantization} This section is devoted to the construction of a  quantization map on the full classical Weyl $C^*$-algebra, designed to possess good continuity and structural properties. The definition given in Eq. \eqref{eq: quantizzazione astratta sugli elementi di weyl} is motivated by the Berezin–Toeplitz quantization studied in \cite{Coburn_1992}, as well as by analogous constructions for the Resolvent algebra in \cite{Nuland_2019}. We remind the reader that in our setting $(E,\sigma)$ is a symplectic vector space descending from a complex Hilbert space $(\mathcal{H},\langle\cdot,\cdot\rangle)$ with $\sigma(\cdot,\cdot) = \Im{\langle\cdot,\cdot\rangle}$ and $\norm{f} = \langle f, f\rangle^{1/2},\;\text{for } f\in E$.

{In Proposition~\ref{prop: positivita e continuita della quantizzazione astratta}, we show that the resulting map inherits several key features from the Berezin quantization map, which serves as the main tool throughout our proof. For clarity, we summarize in the following lemma the main properties of the Berezin map, as established in \cite{Berezin}, \cite[Ch. II, Sec.2.3]{Landsman_1998}, and \cite[Thm 2.8]{Moretti_VanDeVen_2021}.
\begin{lemma}\label{lemma: berezin properties}
    The family of maps  \begin{align*} &Q_h^B \colon \mathbb{L}^\infty\left(\mathbb{R}^{2\ell},\frac{d^\ell q d^\ell p}{(2\pi h)^n}\right)\to \mathcal{B}(\mathbb{L}^2(\mathbb{R}^\ell)), \quad h>0\\
    &Q_0^B = I_{\mathbb{L}^\infty},\quad h=0,
    \end{align*} is defined for $h>0$ by the weak integral \begin{equation}
        Q_h^B(f):= w-\int_{\mathbb{R}^{2\ell}}\frac{d^\ell q\; d^\ell p}{(2\pi h)^\ell}f(q,p) |\psi^{q,p}_h\rangle\langle \psi_h^{q,p}|,
    \end{equation}where $\psi^{q,p}_h\in \mathbb{L}^{2}(\mathbb{R}^{\ell}, \dd^{\ell}x)$ is the \textit{coherent states} defined as  $\psi^{q,p}_h(x):= ( h\pi)^{-\ell/4}e^{-i\frac{q\cdot p}{2 h}}e^{i\frac{p\cdot x}{ h}}e^{-\frac{(q-x)^{2}}{2 h}}$. $Q_h^B$ enjoys the following properties. \begin{enumerate}
    \item[(1)] $Q_h^B$ restricted to $C_c^\infty(\mathbb{R}^{2\ell})$ defines a strict deformation quantization as in Def. \ref{def: SDQ} (\textbf{SDQ});
          \item[(2)] $\norm{Q_h^B(f)}\leq \norm{f}_\infty$ for any $f\in \mathbb{L}^\infty\left(\mathbb{R}^{2\ell},\frac{d^\ell q d^\ell p}{(2\pi h)^n}\right) $ (\textbf{norm continuity});
        \item[(3)] $f\in \mathbb{L}^\infty\left(\mathbb{R}^{2\ell},\frac{d^\ell q d^\ell p}{(2\pi h)^n}\right)$ and $ f\geq 0$, except for a set of zero Lebesgue measure, implies $Q^B_h(f)\geq 0$ (\textbf{positivity}).
    \end{enumerate}
\end{lemma}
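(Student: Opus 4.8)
The plan is to dispatch the three items in increasing order of difficulty, reducing everything to one elementary input—the coherent-state resolution of the identity—together with, for item (1), the known asymptotics of the Berezin star product. A direct Gaussian computation (or \cite[Ch. II, Sec. 2.3]{Landsman_1998}) yields
\[
w\text{-}\!\int_{\mathbb{R}^{2\ell}}\frac{\dd^\ell q\,\dd^\ell p}{(2\pi h)^\ell}\,|\psi_h^{q,p}\rangle\langle\psi_h^{q,p}| \;=\; I ,
\]
so that $Q_h^B(1)=I$ and, for every $\phi\in\mathbb{L}^2(\mathbb{R}^\ell)$,
\[
\langle\phi,Q_h^B(f)\phi\rangle \;=\; \int_{\mathbb{R}^{2\ell}}\frac{\dd^\ell q\,\dd^\ell p}{(2\pi h)^\ell}\,f(q,p)\,|\langle\psi_h^{q,p},\phi\rangle|^2 .
\]
Item (3) is then immediate, since a nonnegative (a.e.) integrand makes the right-hand side nonnegative for every $\phi$, hence $Q_h^B(f)\geq 0$. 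Item (2) follows because $Q_h^B$ is unital and positive, hence contractive—a positive linear map out of the commutative $C^*$-algebra $\mathbb{L}^\infty$ is automatically completely positive, and a unital completely positive map is a contraction; alternatively one bounds the displayed quadratic form directly by $\|f\|_\infty$ using the resolution of the identity.

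For item (1) I would verify the three conditions of Definition \ref{def: SDQ} for $f,g\in C_c^\infty(\mathbb{R}^{2\ell})$, with $\mathfrak{A}_0=C_0(\mathbb{R}^{2\ell})$ so that $Q_0^B|_{C_c^\infty}$ is the canonical embedding. The engine is the composition formula $Q_h^B(f)Q_h^B(g)=Q_h^B(f\star_h g)$, obtained by inserting the resolution of the identity and evaluating the Gaussian overlap $\langle\psi_h^{q,p},\psi_h^{q',p'}\rangle$ explicitly; the resulting Berezin (anti-Wick) star product admits an asymptotic expansion $f\star_h g = fg + \tfrac h2 B_1(f,g) + R_h(f,g)$ whose first-order bidifferential term $B_1$ has antisymmetric part equal, after fixing the normalization, to $\{f,g\}$, and whose remainder obeys $\|R_h(f,g)\|_\infty = O(h^2)$ uniformly thanks to the smoothness and compact support of $f,g$. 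Von Neumann's condition then reads
\[
\|Q_h^B(f)Q_h^B(g)-Q_h^B(fg)\|_h = \|Q_h^B(f\star_h g - fg)\|_h \leq \|f\star_h g - fg\|_\infty = O(h)\xrightarrow[h\to0]{}0 ,
\]
using item (2); Dirac's condition follows identically from $\tfrac ih[Q_h^B(f),Q_h^B(g)] = Q_h^B\!\bigl(\tfrac ih(f\star_h g - g\star_h f)\bigr)$ and the same expansion. For Rieffel's condition, continuity of $h\mapsto\|Q_h^B(f)\|_h$ on $(0,\infty)$ comes from norm-continuity of $h\mapsto\psi_h^{q,p}$ and dominated convergence, while at $h=0$ one has $\limsup_{h\to0}\|Q_h^B(f)\|_h\leq\|f\|_\infty=\|Q_0^B(f)\|_0$ by item (2), and the matching lower bound by testing against a coherent state centered at a point $(q_0,p_0)$ near a near-maximum of $|f|$, since $\langle\psi_h^{q_0,p_0},Q_h^B(f)\psi_h^{q_0,p_0}\rangle=(f\ast G_h)(q_0,p_0)\to f(q_0,p_0)$ with $G_h$ a Gaussian of variance $O(h)$.

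The main obstacle is the uniform remainder bound $\|R_h(f,g)\|_\infty=O(h^2)$ and its commutator analogue: one must control the tail of the Laplace-type expansion of the Gaussian double integral defining $f\star_h g$ uniformly in the base point. This is exactly the technical content supplied by \cite{Coburn_1992}, \cite[Ch. II, Sec. 2.3]{Landsman_1998} and \cite[Thm 2.8]{Moretti_VanDeVen_2021}, and for $f,g\in C_c^\infty$ it reduces to standard estimates pairing the derivatives of $f,g$ against the rapidly decaying Gaussian kernel. An alternative route, which I would keep in reserve, uses that the Berezin symbol of an operator is the Gaussian convolution (of width $O(h)$) of its Weyl symbol, combined with the fact that Weyl quantization is already a strict deformation quantization in the sense of \cite{Rieffel_94}; this relocates rather than removes the uniformity estimate.
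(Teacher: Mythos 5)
Your proposal is consistent with the paper, which in fact gives no proof of this lemma at all: it is stated as a summary of known results and delegated wholesale to \cite{Berezin}, \cite[Ch. II, Sec. 2.3]{Landsman_1998} and \cite[Thm 2.8]{Moretti_VanDeVen_2021}. What you add is a self-contained derivation of items (2) and (3) from the coherent-state resolution of the identity, and this part is correct: positivity is immediate from the quadratic form, and contractivity follows either from the Russo--Dye corollary (a unital positive map out of a commutative $C^*$-algebra has norm one) or, more directly, from Cauchy--Schwarz applied to the full sesquilinear form $\langle\varphi,Q_h^B(f)\psi\rangle$ --- note that the quadratic-form bound alone only controls the numerical radius, which coincides with the norm only for self-adjoint $Q_h^B(f)$, i.e.\ real $f$. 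For item (1) you correctly locate the substance in the uniform remainder estimates, which is exactly where the paper points.

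One correction to your route to item (1): the exact composition identity $Q_h^B(f)Q_h^B(g)=Q_h^B(f\star_h g)$ is not available for Berezin--Toeplitz quantization, because the product of two Toeplitz operators is in general not a Toeplitz operator; there is no contravariant symbol $f\star_h g$ to quantize (the deconvolution against the Gaussian that would produce it does not exist for generic $C_c^\infty$ symbols). The statements actually proved in \cite{Coburn_1992} are the direct operator-norm estimates $\norm{Q_h^B(f)Q_h^B(g)-Q_h^B(fg)}=O(h)$ and $\norm{\tfrac{i}{h}\comm{Q_h^B(f)}{Q_h^B(g)}-Q_h^B(\{f,g\})}=O(h)$, obtained by expanding the Gaussian overlap kernel at the operator level rather than through an intermediate symbol. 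Your fallback --- relating the Berezin and Weyl symbols by a Gaussian smoothing of width $O(\sqrt h)$ and importing Rieffel's strict deformation quantization for the Weyl calculus --- is sound and avoids the issue. With that adjustment the argument is fine.
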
}

\subsection{Construction of the quantization map}\label{subsec: definition of stricti deformation quantization}

	In this subsection we construct a positive and continuous quantization map, satisfying the SDQ requirements \ref{def: SDQ}. We start by writing down its action directly on  Weyl elements {\begin{definition}(\textbf{Abstract quantization map})\label{def: abstract quantization} We define a net of linear maps \begin{equation}(Q_h\colon \Delta(E,0) \to \Delta(E,h\sigma))_{h\in [0,+\infty)}\end{equation} by linearly extending the following action on Weyl elements\begin{align}\label{eq: quantizzazione astratta sugli elementi di weyl}
	   & Q_{h}(W^{0}(f)) := e^{-\frac{h}{4} \norm{f}^{2}} W^{h}(f),\quad h\in [0,+\infty),\; f\in E
       .
	\end{align}  
	\end{definition}}

 To discuss the positivity of the quantization map, we will need the following simple lemma

  \begin{lemma}\label{lemma: positività indotta da rappresentazioni}
 				 	Given a unital $ C^*$-algebra $\mathfrak{A}$ and a faithful representation $(\mathcal{H},\pi)$, it follows that $A\in\mathfrak{A}$ is a positive element if and only if $\pi(A)\in\mathcal{B}(\mathcal{H})$ is positive.
 			\end{lemma}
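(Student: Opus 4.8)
The plan is to establish the equivalence of positivity by invoking the basic $C^*$-algebraic fact that a faithful representation is an isometric $^*$-isomorphism onto its image, together with the spectral characterization of positive elements. First I would recall that in any unital $C^*$-algebra $\mathfrak{A}$, an element $A$ is positive if and only if $A = A^*$ and $\mathrm{sp}(A) \subset [0,+\infty)$, where $\mathrm{sp}$ denotes the spectrum. Since $(\mathcal{H},\pi)$ is faithful, $\pi$ is an injective $^*$-homomorphism; because injective $^*$-homomorphisms between $C^*$-algebras are isometric, $\pi$ is a $^*$-isomorphism of $\mathfrak{A}$ onto the $C^*$-subalgebra $\pi(\mathfrak{A}) \subset \mathcal{B}(\mathcal{H})$.

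The key step is then the observation that $^*$-isomorphisms preserve both the adjoint operation and the spectrum. Preservation of the adjoint is immediate from $\pi$ being a $^*$-homomorphism: $A = A^*$ if and only if $\pi(A) = \pi(A^*) = \pi(A)^*$. For the spectrum, I would use that $\pi$ maps the unit of $\mathfrak{A}$ to the unit of $\pi(\mathfrak{A})$ and that invertibility is preserved in both directions by an isomorphism onto its image, so $\mathrm{sp}_{\mathfrak{A}}(A) = \mathrm{sp}_{\pi(\mathfrak{A})}(\pi(A))$. Moreover, the spectrum of an element of a $C^*$-subalgebra coincides with its spectrum in the larger algebra (spectral permanence for $C^*$-algebras), so $\mathrm{sp}_{\pi(\mathfrak{A})}(\pi(A)) = \mathrm{sp}_{\mathcal{B}(\mathcal{H})}(\pi(A))$. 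Combining these, $A$ is self-adjoint with spectrum in $[0,+\infty)$ if and only if $\pi(A)$ is; that is, $A \geq 0$ in $\mathfrak{A}$ if and only if $\pi(A) \geq 0$ in $\mathcal{B}(\mathcal{H})$.

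Alternatively, one can give a slightly more elementary argument avoiding spectral permanence: if $A \geq 0$ then $A = B^*B$ for some $B \in \mathfrak{A}$, hence $\pi(A) = \pi(B)^*\pi(B) \geq 0$; conversely, if $\pi(A) \geq 0$ then $\pi(A) = C^*C$ with $C = \pi(A)^{1/2} \in \pi(\mathfrak{A})$ since $\pi(\mathfrak{A})$ is closed under continuous functional calculus, so $C = \pi(B)$ for a unique $B \in \mathfrak{A}$, and $\pi(A) = \pi(B^*B)$ forces $A = B^*B \geq 0$ by injectivity of $\pi$. I would probably present this second, self-contained version in the paper. There is no real obstacle here: the only point requiring a moment's care is that one genuinely needs $\pi$ to be \emph{faithful} (injective) for the reverse implication — without it, $\pi(A) \geq 0$ only gives $A$ positive modulo the kernel — and that the forward implication alone holds for any $^*$-homomorphism. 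This is exactly why the lemma will later be applicable to the Schrödinger representation, which is faithful by non-degeneracy of $h\sigma$ when $h > 0$.
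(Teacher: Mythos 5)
Your proposal is correct and takes essentially the same route as the paper: the forward implication via $A=B^*B\Rightarrow\pi(A)=\pi(B)^*\pi(B)\geq 0$ is identical, and your reverse implication rests on the same two ingredients the paper uses, namely spectral permanence for the $C^*$-subalgebra $\pi(\mathfrak{A})\subset\mathcal{B}(\mathcal{H})$ and faithfulness of $\pi$ (the paper phrases this as a proof by contradiction with the resolvent $(A-\lambda I)^{-1}$, you phrase it via the square root $\pi(A)^{1/2}\in\pi(\mathfrak{A})$, but note that the functional calculus in the subalgebra agreeing with that in $\mathcal{B}(\mathcal{H})$ is itself spectral permanence, so your ``more elementary'' variant does not actually avoid it). Your remark on where faithfulness is genuinely needed is accurate and matches the paper's use of the lemma.
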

 			\begin{proof}
 					$(\Rightarrow)$ $A\in \mathfrak{A}$ is positive if and only if $A = CC$ for some $C\in\mathfrak{A}$ self-adjoint operator (see \cite[Th. 2.10]{Bratteli_Robinson_87}) and $\pi(A) = \pi(CC) = \pi(C)\pi(C)$, which is positive in $\mathcal{B}(\mathcal{H})$. 
 				$(\Leftarrow)$ If $A\in\mathfrak{A}$ was not positive but $\pi(A)$ was, then, there would exists a $\lambda\notin  \mathbb{R}_{+}$ such that $A-\lambda I$ is not invertibile in $\mathfrak{A}$, whereas $\pi(A)-\lambda I = \pi(A - \lambda I)$ is invertible in $\mathcal{B}(\mathcal{H})$. We know \cite[Prop. 2.3.1]{Bratteli_Robinson_87} that $\pi(\mathfrak{A})$ is a $C^*$-subalgebra of $\mathcal{B}(\mathcal{H})$ and that \cite[Prop. 2.2.7]{Bratteli_Robinson_87}
  $\text{Sp}_{\mathcal{B}(\mathcal{H})}(\pi(A)) = \text{Sp}_{\pi(\mathfrak{A})}(\pi(A))$ 
     for every $A\in\mathfrak{A}$. Then, $\lambda\notin \text{Sp}_{\mathcal{B}(\mathcal{H})}(A) = \text{Sp}_{\pi(\mathfrak{A})}(A)$ and so, $\pi(A) -\lambda I$ is invertible in $\pi(\mathfrak{A})$. But  since $\pi$ is faithful, if $\pi(B) = \pi(A -\lambda I)^{-1}$, it follows that $B = (A-\lambda I)^{-1}\in \mathfrak{A}$ and this is absurd. 
 			\end{proof}

	To extend $Q_{h}$, we estimate the norm of generic elements $Q_{h}(c)$ for $c\in \Delta(E,0)$. We have the following
	\begin{proposition}\label{prop: positivita e continuita della quantizzazione astratta} 
				The abstract quantization map defined on \eqref{eq: quantizzazione astratta sugli elementi di weyl} satisfies the following. \begin{itemize}
				    \item [(i)] $\norm{Q_h(c)}_h \leq \norm{c}_0$ for every $c\in\Delta(E,0)$;
                    \item [(ii)] for every $c\in\Delta(E,0)$, $c\geq 0$ implies that $Q_h(c)\geq 0$ as elements of the respective $C^*$-algebras $\mathcal{W}(E,0)$ and $\mathcal{W}(E,h\sigma)$.
				\end{itemize}
			As a consequence, $Q_h$ on $\Delta(E,0)$ can be extended to a positive map\begin{equation}
				Q_{h}: \weylEO\rightarrow \mathcal{W}(E,h\sigma).
			\end{equation}
 	\end{proposition}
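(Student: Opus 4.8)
The plan is to reduce everything to the finite-dimensional Berezin picture of Lemma~\ref{lemma: berezin properties} via the subalgebra-compatibility property (f) of the Weyl norm, and then patch the finite-dimensional estimates together. Any $c\in\Delta(E,0)$ is a finite linear combination $c=\sum_{i=1}^n\alpha_i W^0(f_i)$; let $G\subset E$ be the real-linear span of $\{f_1,\dots,f_n\}$, a finite-dimensional subspace which, since $\sigma$ descends from a Hilbert space inner product, can be enlarged slightly if necessary so that $(G,\sigma|_G)$ is a symplectic subspace (even-dimensional, non-degenerate). By property~(f), $\mathcal{W}(G,0)$ and $\mathcal{W}(G,h\sigma|_G)$ sit isometrically inside $\mathcal{W}(E,0)$ and $\mathcal{W}(E,h\sigma)$ respectively, and $Q_h$ visibly maps $\Delta(G,0)$ into $\Delta(G,h\sigma|_G)$ and respects these inclusions. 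So it suffices to prove (i) and (ii) with $E$ replaced by the finite-dimensional symplectic space $G$.

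On $\mathcal{W}(G,h\sigma|_G)$ with $h>0$ we use the Schr\"odinger representation $\pi_S$ of Example~\ref{ex: schrodinger representation}, which is faithful. The key identity I would establish is that, under $\pi_S$, the composite $\pi_S\circ Q_h$ applied to a Weyl element $W^0(f)$ reproduces exactly the Berezin--Toeplitz operator $Q_h^B$ applied to the corresponding bounded exponential symbol: writing $f\in G$ in canonical coordinates as $f\leftrightarrow(\lambda,\mu)\in\mathbb{R}^{2\ell}$ (with $\ell=\dim_{\mathbb C}G$), the Gaussian prefactor $e^{-\frac{h}{4}\|f\|^2}=e^{-\frac{h}{4}(|\lambda|^2+|\mu|^2)}$ is precisely the factor produced by integrating the classical symbol $W^0(\lambda,\mu)(q,p)=e^{i(\lambda\cdot q+\mu\cdot p)}$ against the coherent-state rank-one projectors, i.e.
\[
  Q_h^B\big(e^{i(\lambda\cdot q+\mu\cdot p)}\big)=e^{-\frac{h}{4}(|\lambda|^2+|\mu|^2)}\,\pi_S\big(W^h(f)\big)=\pi_S\big(Q_h(W^0(f))\big).
\]
This is the classical coherent-state computation of the Berezin covariant symbol of a Weyl operator and should follow by a direct Gaussian integral; I would state it as a short lemma and carry out the integral explicitly in one display. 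Extending linearly, $\pi_S\circ Q_h=Q_h^B\circ(\,\cdot\,)$ as maps from $\Delta(G,0)$ (viewed as exponential polynomials in $(q,p)$, i.e. trigonometric polynomials, which lie in $\mathbb{L}^\infty(\mathbb{R}^{2\ell})$) into $\mathcal{B}(\mathbb{L}^2(\mathbb{R}^\ell))$.

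Granting that identification, both claims are immediate from Lemma~\ref{lemma: berezin properties}. For (i): $\|Q_h(c)\|_h=\|\pi_S(Q_h(c))\|=\|Q_h^B(\hat c)\|\le\|\hat c\|_\infty$ by norm-continuity (2) of the Berezin map, where $\hat c$ is the trigonometric-polynomial symbol of $c$; and $\|\hat c\|_\infty=\|c\|_0$ because the classical Weyl $C^*$-algebra $\mathcal{W}(G,0)\simeq\mathrm{AP}(G)$ is realized as bounded functions with the sup norm (Remark~\ref{ex: struttura di weyl classica per L2}), and restricting a function in $\mathrm{AP}(G)$ to the coordinate plane does not change its supremum since it is already almost periodic of this exponential type — this last point needs a sentence of care, but amounts to the fact that the sup of $|\hat c|$ over $E'_{\|\cdot\|}$ equals the sup over the finite-dimensional slice. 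For (ii): if $c\ge0$ in $\mathcal{W}(G,0)$, i.e. $\hat c\ge0$ as a function, then $Q_h^B(\hat c)\ge0$ in $\mathcal{B}(\mathbb{L}^2(\mathbb{R}^\ell))$ by positivity (3), hence $\pi_S(Q_h(c))\ge0$, hence $Q_h(c)\ge0$ in $\mathcal{W}(G,h\sigma|_G)$ by Lemma~\ref{lemma: positività indotta da rappresentazioni} ($\pi_S$ faithful), and this positivity persists in the larger algebra $\mathcal{W}(E,h\sigma)$ since a $C^*$-subalgebra inherits the order structure. Finally, (i) shows $Q_h$ is bounded by $\|\cdot\|_0$ on the dense $^*$-subalgebra $\Delta(E,0)$, so it extends uniquely to a bounded linear map $Q_h\colon\mathcal{W}(E,0)\to\mathcal{W}(E,h\sigma)$; positivity of the extension follows because the positive cone is norm-closed and $Q_h$ is continuous.

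The main obstacle, and the step I would spend the most care on, is the coherent-state Gaussian integral identifying $\pi_S\circ Q_h$ with $Q_h^B$ on exponential symbols — getting the normalization of the coherent states, the $(2\pi h)^{-\ell}$ measure factor, and the phase conventions to conspire into exactly $e^{-\frac{h}{4}\|f\|^2}$ — together with the accompanying bookkeeping that the sup norm of the trigonometric-polynomial symbol over $\mathbb{R}^{2\ell}$ agrees with its $\mathcal{W}(E,0)$-norm. Everything else is a formal consequence of Lemma~\ref{lemma: berezin properties}, property~(f), Lemma~\ref{lemma: positività indotta da rappresentazioni}, and density.
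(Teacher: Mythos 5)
Your proposal is correct and follows essentially the same route as the paper's proof: reduce to the finite-dimensional symplectic subspace spanned by the test functions appearing in $c$, identify $\pi_S\circ Q_h$ with the Berezin map $Q^B_h$ on exponential symbols via the coherent-state Gaussian integral (which the paper carries out explicitly in Eq.~\eqref{eq: computation Berezin}), and then deduce (i) and (ii) from Lemma~\ref{lemma: berezin properties} together with faithfulness of $\pi_S$, Lemma~\ref{lemma: positività indotta da rappresentazioni}, and a density argument. The only cosmetic difference is that the paper takes the complex span of $\{f_1,\dots,f_n\}$, which is automatically symplectic, rather than enlarging the real span.
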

 	\begin{proof}
 
 		{Given an arbitrary $c\in \Delta(E,0)$, this can be written  as a finite linear combination of classical Weyl elements
 		\begin{equation}\label{eq: general element c}
 			c= \sum_{k=1}^{n}z_{k}W^{0}(f_{k}),
 		\end{equation}
         with $(f_k)_{k=1}^n\subset E$, $(z_k)_{k=1}^n\subset \mathbb{C}$.   The action of $Q_{h}$ on $c$ is read by linearity from equation \eqref{eq: quantizzazione astratta sugli elementi di weyl} as\begin{equation}\label{eq: general element c}
 			Q_{h}(c) = \sum_{k=1}^{n}z_{k}e^{-h\frac{\norm{f_{k}} }{4}} W^{h}(f_{k}).
 		\end{equation}
        Now, we construct a real vector space of dimension $2\ell$,  $E_{\ell} := \mathbb{C}\text{-span}\{f_{1},\dots, f_{n}\} $, where $\ell$ is the dimension of the complex span just defined. Then,we consider the Weyl $C^*$-algebra $\mathcal{W}(E_{\ell}, 0)$ and we identify a symplectic basis for $E_{\ell}$:  $\{g_{1},ig_{1},\dots, g_{\ell},ig_{\ell}\}$.  Let us focus on the generic term  $W^{h}(f_{k})$ in Eq. \eqref{eq: general element c} for a fixed $k$. We can expand $f_k$ as \begin{equation}\label{eq: espansione in base}
            f_k = \sum_{j=1}^\ell \left(\lambda_j g_j +\mu_j ig_j\right),
        \end{equation} for some real coefficients $(lambda_j,\mu_j)_{j=1}^\ell$. Now, we can exploit the basis expansion \eqref{eq: espansione in base} together with the chain of isomorphisms \begin{equation} \mathcal{W}(E_\ell,0)\simeq \text{AP}(E_\ell) \simeq \text{AP}(\mathbb{R}^{2\ell}),\end{equation}  to identify the classical Weyl element $W^0(f_k)$ with the almost periodic function \begin{equation}
          \mathbb{R}^\ell \ni (q,p)\to  W^0(f_k)(q,p) = \exp{ i\sum_{j=1}^\ell \lambda_j q_j + \mu_j p_j}\in\mathbb{C}.
        \end{equation}
 Similarly, the quantized Weyl element $W^h(f_h)$ can be written as a unitary operator on $\mathbb{L}^{2}(\mathbb{R}^{\ell})$ by means of the Schr\"{o}dinger representation \ref{ex: schrodinger representation} \begin{equation}
 			\pi_{S}(W^{ h}(f_{k}))  = \exp{i\overline{\sum_{j=1}^{\ell} \lambda_{j}\hat{X}_{j} + \mu_{j}\hat{P}_{j}}}.
 		\end{equation}
  Now, we verify that $\pi_S(W^h(f_k))$ can be obtained from the Berezin quantization \ref{lemma: berezin properties} of $W^0(f_k)\in \text{AP}(\mathbb{R}^{2\ell})\subset \mathbb{L}^\infty(\mathbb{R}^2\ell, \frac{d^\ell q d^\ell p}{(2\pi h)^\ell})$, so to obtain an identification between $\pi_S\circ Q_h$ and $Q_h^B$.}
        To do so, it is sufficient to evaluate the sesquilinear form 
 		\begin{equation} \mathbb{L}^{2}(\mathbb{R}^{\ell})\cross \mathbb{L}^{2}(\mathbb{R}^{\ell}) \ni (\varphi,\psi)\rightarrow \langle \varphi,Q_{ h}^{B}(W^0(f_k))\psi\rangle,
 		\end{equation}for $\psi,\varphi \in\mathcal{S}(\mathbb{R}^{\ell})$.
  By direct inspection we have:
   \begin{align}\label{eq: computation Berezin}
 			\langle \varphi, Q^{B}_{ h}(W^0(f_k))\psi\rangle &= ( h\pi)^{-\ell/2}\int \frac{\dd^{\ell}q\dd^{\ell}p}{(2\pi h)^{\ell}} e^{i(\lambda\cdot q +\mu\cdot p)}\int \dd^{\ell} x e^{-i\frac{p\cdot x}{h}}e^{-\frac{(q-x)^{2}}{2 h}}\psi(x)\int \dd^{\ell} y e^{i\frac{p\cdot y}{ h}}e^{-\frac{(q-y)^{2}}{2 h}}\overline{\varphi(y)}\nonumber\\
 			&= (\pi h)^{-\ell/2}\int \frac{\dd p^{\ell}}{(2\pi h)^{\ell}} \int \frac{\dd^{\ell} x}{(2\pi h)^{\ell/2}} e^{-ip\cdot(x -y -\mu h)/ h} \int \dd^{\ell}y \dd^{\ell}q e^{\frac{-(q-x)^{2}}{2 h}} e^{\frac{-(q-y)^{2}}{2 h}}e^{i\lambda \cdot q }\psi(x)\overline{\varphi(y)} \nonumber \\
 			&=(\pi h)^{-\ell/2}\int \dd^{\ell} y\dd^{\ell}q e^{-\frac{(q-y-\mu h)^{2}}{2 h}}e^{-\frac{-(q-y)^{2}}{2 h}}e^{i\lambda\cdot q}\psi(y+ h\mu)\overline{\varphi(y)} \nonumber\\
 			&= e^{-\frac{ h}{4}(\lambda^{2} + \mu^{2})}\int \dd^{\ell}y \overline{\varphi(y)}e^{i\lambda\cdot y}e^{i h \frac{\lambda\cdot\mu}{2}}\psi(y + h \mu).
 		\end{align}
 		The equality in Eq. \eqref{eq: computation Berezin} can be read as follows
 		\begin{align}
 			\langle\varphi, Q_{ h}^{B}(W^0(f_k)) \psi\rangle & = e^{-\frac{ h}{4}(\lambda^{2} + \mu^{2})}\langle \varphi, \pi_{S}(W^{ h}(f_{k}))\psi\rangle \nonumber \\
 			& = \langle\varphi, \pi_{S}(Q_{h}(W^{0}(f_{k})))\psi \rangle,\quad\quad \text{for all } \varphi,\psi\in \mathcal{S}(\mathbb{R}^{\ell}).
 		\end{align}
 		By density of $\mathcal{S}(\mathbb{R}^{n})$ we have $Q^{B}_{ h}(W^0(f_k)) = \pi_{S}(Q_{ h}(W^{0}(f_k)))$. Then, by linearity of $Q_h, Q^B_h$ and $\pi_S$, we conclude that $\pi_{S}(Q_{h}(c)) = Q^{B}_{ h}(c)$. {Thanks to the faithfulness of Schr\"{o}dinger representation and to the equality $\norm{\cdot}_{0} = \norm{\cdot}_{\infty}$ between the norms of $\mathcal{W}(E_\ell,0)$ and $C_b(\mathbb{R}^{2\ell})$, we can prove the norm continuity (i) as 
   \begin{equation}\label{eq: norm estimate with sch}
 			\norm{Q_{h}(c)}_{ h} = \norm{\pi_{S}(Q_{ h}(c))} = \norm{Q^{B}_{ h}(c)} \leq \norm{c}_{\infty} = \norm{c}_{0},
 		\end{equation} where the last inequality exploit the norm continuity property, lemma \ref{lemma: berezin properties} (2).  Since $c$ was arbitrary, estimate \eqref{eq: norm estimate with sch} is valid on all $\Delta(E,0)$.}
   
   Thanks to the norm-continuity of the abstract quantization we can extend this map to the whole $\mathcal{W}(E,0)$ by a standard density argument.
   Indeed, if $(c_{n})_{n\in\mathbb{N}}\subset \Delta(E,0)$ is a Cauchy sequence, converging to some element $c\in \mathcal{W}(E,0)$, then the sequence $\{Q_{ h}(c_{n})\}_{n\in\mathbb{N}}\subset \Delta(E,h\sigma)$ satisfies \begin{equation}
 			\norm{Q_{h}(c_{n})- Q_{h}(c_{m})} _{ h}\leq \norm{c_{n}-c_{m}}_{0}< \epsilon,
 		\end{equation} for $n,m$ sufficiently large. So, the sequence $	(Q_{h}(c_{n}))_{n\in\mathbb{N}}$ is Cauchy in $\mathcal{W}(E,  h\sigma)$. If we define $Q_{ h}(c) := \lim_{n\rightarrow\infty}Q_{h}(c_{n})$, we obtain a well-defined quantization satisfying \begin{equation}
 			\norm{Q_{ h}(c)}_{ h} = \lim_{n\rightarrow\infty}\norm{Q_{ h}(c_{n})}_{ h}\leq \lim_{n\rightarrow\infty}\norm{c_{n}}_{0} = \norm{c}_{0}
 		\end{equation}

 		{Positivity of $Q_h$ follows from an application of lemma \ref{lemma: berezin properties} (3) and lemma \ref{lemma: positività indotta da rappresentazioni}: if $c\in \Delta(E,0)\subset AP(E)$ is a positive function we have $\pi_{S}(Q_{ h}(c))=Q^{B}(c)\geq 0$, which implies $Q_{ h}(c)\geq 0$.}
 	\end{proof}

 	In \cite[Th. 5.6]{Binz_Honegger_Roecker_2024_2} the authors proved that Weyl quantization $(Q^{W}_{ h})_{ h\in [0,+\infty)}$, together with the Poisson bracket \eqref{eq: Poisson parenthesis}, constitutes a strict deformation quantization of $(\Delta(E,0),\{\cdot,\cdot\},(\mathcal{W}(E,h\sigma))_{h\in [0,+\infty)})$. Now, the function
 	\begin{equation}\label{eq: quotient factor}
 		w\colon\mathbb{R}\cross E \ni ( h,f)\rightarrow w( h,f):= \exp{- h\frac{\norm{f}^{2}}{4}} \in \mathbb{C},
 	\end{equation}
 	satisfies $w( h,f)\in(0,1]$, $w( h,f) = w( h, -f)$, $w(0,f) = w( h, 0) = 1$ for every $f\in E$, $ h\in[0,+\infty)$, and moreover $\mathbb{R}\ni h\rightarrow w( h, f)\in \mathbb{C}$ is continuous for fixed $f$ and {locally bounded}. These properties make the function in \eqref{eq: quotient factor} a \textit{quantization factor}, as defined in \cite[Def. 4.1]{Honegger_Rieckers_2005}. {Then, it follows from \cite[Th. 4.4, Cor. 4.7]{Honegger_Rieckers_2005} that}
  \begin{theorem} \label{th: la quantizzazione astratta è SDQ}The data $(\mathcal{W}(E, h\sigma), Q_{ h})_{ h\in [0,+\infty)}$ together with the Poisson sub-algebra $(\Delta(E,0),\{\cdot,\cdot\})$ define a strict deformation quantization as in definition \ref{def: SDQ}.
 	\end{theorem}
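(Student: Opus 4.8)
The plan is to recognize $Q_{h}$ as the ordinary Weyl quantization twisted by the factor $w$ of \eqref{eq: quotient factor}, and then to invoke the general principle that such a twist carries a strict deformation quantization to another one. Let $Q^{W}_{h}\colon\Delta(E,0)\to\Delta(E,h\sigma)$ denote the Weyl quantization, fixed on generators by $Q^{W}_{h}(W^{0}(f)) = W^{h}(f)$ and extended linearly; by \cite[Th.~5.6]{Binz_Honegger_Roecker_2024_2} the family $(Q^{W}_{h})_{h\in[0,+\infty)}$, together with the Poisson bracket \eqref{eq: Poisson parenthesis} on $\Delta(E,0)$, is a strict deformation quantization over $(\mathcal{W}(E,h\sigma))_{h\in[0,+\infty)}$. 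Comparing with \eqref{eq: quantizzazione astratta sugli elementi di weyl}, on each Weyl element, and hence, by linearity, on all of $\Delta(E,0)$,
\[
Q_{h}(W^{0}(f)) = w(h,f)\,W^{h}(f) = w(h,f)\,Q^{W}_{h}(W^{0}(f)).
\]

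Next I would verify, as in the paragraph preceding the statement, that $w$ is a \emph{quantization factor} in the sense of \cite[Def.~4.1]{Honegger_Rieckers_2005}: from $w(h,f) = e^{-h\norm{f}^{2}/4}$ one reads off that $w(h,f)\in(0,1]$, that $w(h,f) = w(h,-f)$, that $w(0,f) = w(h,0) = 1$ for all $f\in E$ and $h\ge0$, and that $h\mapsto w(h,f)$ is continuous and locally bounded. Granting this, \cite[Th.~4.4]{Honegger_Rieckers_2005} gives that the twisted family $(Q_{h})_{h\in[0,+\infty)}$ satisfies Rieffel's, von Neumann's and Dirac's conditions with respect to the \emph{same} Poisson subalgebra $(\Delta(E,0),\{\cdot,\cdot\})$, while \cite[Cor.~4.7]{Honegger_Rieckers_2005}, together with $w(0,\cdot)\equiv1$, identifies $Q_{0}$ with the identical embedding $\Delta(E,0)\hookrightarrow\mathcal{W}(E,0)$; this is exactly the content of Definition~\ref{def: SDQ}. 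The injectivity clause of the deformation part is moreover immediate, since $w$ is nowhere zero and $Q^{W}_{h}$ is injective by faithfulness of the Schr\"odinger representation.

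If one prefers a self-contained argument, the three conditions can instead be checked directly on Weyl elements and then propagated by density: Rieffel's condition is $\norm{Q_{h}(W^{0}(f))}_{h} = w(h,f)$, using unitarity of $W^{h}(f)$, which is continuous in $h$, together with the $h$-uniform estimate $\norm{Q_{h}(c)}_{h}\le\norm{c}_{0}$ of Proposition~\ref{prop: positivita e continuita della quantizzazione astratta} for general $c\in\Delta(E,0)$; von Neumann's and Dirac's conditions reduce, after multiplying out the Weyl relations \eqref{eq: relazioni generiche per Weyl}, to elementary limits of scalar prefactors as $h\to0^{+}$, in which the extra factors $w(h,f)w(h,g)$ are harmless because $w(0,\cdot)\equiv1$ and $w$ is continuous in $h$.

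I expect the only point that requires genuine verification, rather than quotation, to be that $w$ satisfies \emph{every} clause in the definition of a quantization factor, since that is the precise hypothesis under which \cite[Th.~4.4]{Honegger_Rieckers_2005} applies; assembling the explicit list of properties of $w$ before the statement is exactly what settles this. The remaining point is the routine one of propagating the estimates from the dense $^{*}$-algebra $\Delta(E,0)$ to all of $\mathcal{W}(E,0)$, for which the uniformity in $h$ of the norm bound of Proposition~\ref{prop: positivita e continuita della quantizzazione astratta} is precisely what is needed — and it is this uniformity that the quantization-factor formalism of \cite{Honegger_Rieckers_2005} is built to exploit.
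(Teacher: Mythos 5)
Your proposal is correct and follows essentially the same route as the paper: the paper likewise identifies $Q_{h}$ as the Weyl quantization $Q^{W}_{h}$ of \cite[Th.~5.6]{Binz_Honegger_Roecker_2024_2} twisted by the factor $w(h,f)=e^{-h\norm{f}^{2}/4}$, checks the same list of properties making $w$ a quantization factor in the sense of \cite[Def.~4.1]{Honegger_Rieckers_2005}, and then cites \cite[Th.~4.4, Cor.~4.7]{Honegger_Rieckers_2005} to conclude. Your additional self-contained sketch and the remark on injectivity are consistent with, but go slightly beyond, what the paper records.
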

 	In particular, by Theorem \ref{th: la quantizzazione astratta è SDQ}, the quantization map $Q_h$ is injective on $\Delta(E,0)$. 

    \subsection{Further properties of the quantization map}\label{sec: further properties}
 	In this subsection, we prove the injectivity and non-surjectivity on the full $C^*$-algebra $\mathcal{W}(E,0)$ of the quantization map defined in Def. \ref{def: abstract quantization}.
 To this avail, we pick the \textit{canonical central state} $\omega_{c}^{h}$ \cite[Prop. 2.17]{Manuceau_Sirugue_Testard_Verbeure_73}.
  The latter is defined by linear extension of the functional
  \begin{equation}\label{eq: canonical central state}
 		\omega_{c}^{h}(W^{h}(f)) = \begin{cases}
 			1,\quad f = 0\\
 			0,\quad f\neq 0
 		\end{cases},\quad f\in E,\; h\in [0,\infty).
 	\end{equation} Since $\omega_{c}^{h}$ is faithful on $\Delta(E,h\sigma)$ the following positive map \begin{equation}
 	\norm{C}_{h,2}:= \sqrt{\omega_{c}^{h}(C^* C)},\quad C\in \mathcal{W}(E,h\sigma),
 	\end{equation} is a norm \cite[Lem. 3.1]{Manuceau_Sirugue_Testard_Verbeure_73}. By completing the $^*$-algebra $\Delta(E,h\sigma)$ with the foregoing norm we obtain a Banach $^*$-algebra $\overline{\Delta(E,h\sigma)}^{2} \supset \mathcal{W}(E,h\sigma)$, with $\norm{\cdot}_{2}\leq \norm{\cdot}$. Moreover, we have the following\begin{lemma}\label{lemma: espansione degli elementi in Weyl}
 	Every element $A\in \overline{\Delta(E,h\sigma)}^{2}$ can be written as \begin{equation}
 		\sum_{f\in E} \mu(f)W^{h}(f) = \sum_{f\in E}\omega_{c}^{h}(W^{h}(-f)A) W^{h}(f),
 	\end{equation} for some $\{\mu(f),\; f\in E\}\subset \mathbb{C}$ where the sum converges in norm $\norm{\cdot}_{h,2}$.
 	 	\end{lemma}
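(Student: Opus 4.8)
The plan is to recognize $\overline{\Delta(E,h\sigma)}^{2}$ as a Hilbert space in which the Weyl generators $\{W^{h}(f)\}_{f\in E}$ form an orthonormal basis, and then to read off the statement as the abstract Fourier expansion with respect to that basis.

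First I would note that the canonical central state equips $\Delta(E,h\sigma)$ with the sesquilinear form $\langle A,B\rangle_{h,2}:=\omega_{c}^{h}(A^{*}B)$, which is positive semidefinite since $\omega_{c}^{h}$ is a state and definite since $\omega_{c}^{h}$ is faithful; its induced norm is exactly $\norm{\cdot}_{h,2}$. Hence the completion $\overline{\Delta(E,h\sigma)}^{2}$ is a Hilbert space. Using the Weyl relations \eqref{eq: relazioni generiche per Weyl} together with $W^{h}(f)^{*}=W^{h}(-f)$ and the defining property \eqref{eq: canonical central state}, one computes
\[
  \langle W^{h}(f),W^{h}(g)\rangle_{h,2}
  =\omega_{c}^{h}\!\big(W^{h}(-f)W^{h}(g)\big)
  =e^{\frac{i}{2}h\sigma(f,g)}\,\omega_{c}^{h}\!\big(W^{h}(g-f)\big)
  =\delta_{fg},
\]
so $\{W^{h}(f)\}_{f\in E}$ is an orthonormal system. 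Because $\Delta(E,h\sigma)=\textsc{LH}\{W^{h}(f):f\in E\}$ is by construction dense in $\overline{\Delta(E,h\sigma)}^{2}$, this orthonormal system is total, i.e.\ it is an orthonormal basis of the Hilbert space $\overline{\Delta(E,h\sigma)}^{2}$.

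Finally I would invoke the standard structure theory of orthonormal bases in a Hilbert space. For $A\in\overline{\Delta(E,h\sigma)}^{2}$ set $\mu(f):=\langle W^{h}(f),A\rangle_{h,2}=\omega_{c}^{h}(W^{h}(-f)A)$. Bessel's identity gives $\sum_{f\in E}|\mu(f)|^{2}=\norm{A}_{h,2}^{2}<\infty$, whence $\{f\in E:\mu(f)\neq0\}$ is at most countable; the finite partial sums $\sum_{f\in F}\mu(f)W^{h}(f)$, $F\subset E$ finite, are the orthogonal projections of $A$ onto the spans of the corresponding generators, form a Cauchy net, and converge in $\norm{\cdot}_{h,2}$ to $A$ by totality of the system. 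This is precisely the asserted expansion $A=\sum_{f\in E}\mu(f)W^{h}(f)=\sum_{f\in E}\omega_{c}^{h}(W^{h}(-f)A)W^{h}(f)$. I do not expect a genuine obstacle here; the only point deserving a line of care is that the index set $E$ may be uncountable, which is resolved exactly by the Bessel argument forcing the Fourier coefficients to have countable support, so that the displayed series is an honest norm limit of finite partial sums.
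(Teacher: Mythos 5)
Your proposal is correct and follows essentially the same route as the paper: exhibit $\overline{\Delta(E,h\sigma)}^{2}$ as a Hilbert space with inner product $\omega_{c}^{h}(A^{*}B)$, check that the Weyl elements form an orthonormal (total) system, and then invoke the standard Fourier expansion. Your version merely spells out the orthonormality computation and the Bessel/countable-support point, which the paper leaves to ``standard results of Hilbert space theory.''
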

 	 	\begin{proof}
 	 		We verify that $\Delta(E,h\sigma)\cross \Delta(E,h\sigma)\ni (A,B)\rightarrow \omega_{c}^{h}(A^{*}B)$ induces a scalar product on $\overline{\Delta(E,h\sigma)}^{2}$ and that the set $\{W^{h}(f),\; f\in E\}$ forms an orthogonal basis. The first step is clear by the properties of states and the faithfulness of $\omega^{h}_{c}$, while for the second step we have orthogonality by definition of the state \eqref{eq: canonical central state} and by the density of $\Delta(E,h\sigma)$ in $\overline{\Delta(E,h\sigma)}^{2}$.
     In particular, $\overline{\Delta(E,h\sigma)}^{2}$ is an Hilbert space with scalar product $\langle A|B\rangle:=\omega_h^c(A^*B)$ and the lemma follows from standard results of Hilbert space theory \cite[Chap. 2]{Kadison_Ringrose_91}.
 	 	\end{proof}
    Now, we can extend the abstract quantization map in a continuous way to a linear map  \begin{equation}Q_{h}\colon\overline{\Delta(E,0)}^{2}\rightarrow\overline{\Delta(E,h\sigma)}^{2}\,.\end{equation}
    Indeed, for every element $c\in \Delta(E,0)$ we have an expansion in terms of a finite number of Weyl elements $c= \sum_{k=1}^{n}z_{k}W^{0}(f_{k})$ and the following estimate for the norms\begin{equation}\label{eq: continuita con norma 2}
 	 	\norm{Q_{h}(c)}_{h,2}^{2} = \sum_{k=1}^{n}|z_{k}|^{2} e^{-h\frac{\norm{f_{k}}^{2}}{2}} \leq  \sum_{k=1}^{n}|z_{k}|^{2} = \norm{c}_{0,2}^{2}.
 	 	\end{equation} Thanks to  this extension we are able to prove the following
 	
 	 \begin{proposition}\label{prop: iniettività della mappa}
 		Let $Q_{h}\colon \mathcal{W}(E,0)\rightarrow \mathcal{W}(E,h\sigma)$ be the abstract quantization map defined in \eqref{eq: quantizzazione astratta sugli elementi di weyl}. This map is injective on the full $C^*$ algebra $\mathcal{W}(E,0)$, but it is not surjective.
 	\end{proposition}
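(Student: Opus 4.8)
The plan is to transport everything into the Hilbert space $\overline{\Delta(E,0)}^{2}$ built with the canonical central state, where $Q_h$ acts as a diagonal operator with nowhere‑vanishing entries, and then read off both assertions. The first thing to settle is that $Q_h$ carries two a priori distinct continuous extensions that must be reconciled: the $C^*$‑extension $Q_h\colon\mathcal{W}(E,0)\to\mathcal{W}(E,h\sigma)$ from Proposition~\ref{prop: positivita e continuita della quantizzazione astratta}, and the $\norm{\cdot}_{\cdot,2}$‑extension $Q_h\colon\overline{\Delta(E,0)}^{2}\to\overline{\Delta(E,h\sigma)}^{2}$ furnished by \eqref{eq: continuita con norma 2}. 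Since $\Delta(E,0)$ is dense in both completions and $\norm{\cdot}_{0,2}\le\norm{\cdot}_{0}$, every $\norm{\cdot}_0$‑Cauchy net in $\Delta(E,0)$ is $\norm{\cdot}_{0,2}$‑Cauchy, so the two extensions agree under the continuous injections $\mathcal{W}(E,\cdot)\hookrightarrow\overline{\Delta(E,\cdot)}^{2}$, the injectivity of these inclusions being precisely the containments $\overline{\Delta(E,\cdot)}^{2}\supset\mathcal{W}(E,\cdot)$ recalled above (i.e.\ faithfulness of $\omega_c^{0}$ and $\omega_c^{h}$). This bookkeeping is the only genuinely delicate point; once it is in place, both halves of the statement are short, so it suffices to analyze $Q_h$ on $\overline{\Delta(E,0)}^{2}$.

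For injectivity, Lemma~\ref{lemma: espansione degli elementi in Weyl} yields that $\{W^{0}(f)\}_{f\in E}$ and $\{W^{h}(f)\}_{f\in E}$ are orthonormal bases and that every $c\in\overline{\Delta(E,0)}^{2}$ has the norm‑convergent expansion $c=\sum_{f\in E}\mu_c(f)W^{0}(f)$ with $\mu_c(f)=\omega_c^{0}(W^{0}(-f)c)$ and $\sum_{f}|\mu_c(f)|^{2}=\norm{c}_{0,2}^{2}<\infty$. Applying $Q_h$ along finite partial sums and using its continuity together with continuity of the inner product gives $Q_h(c)=\sum_{f}e^{-h\norm{f}^{2}/4}\mu_c(f)W^{h}(f)$, so the $f$‑th Fourier coefficient of $Q_h(c)$ is $e^{-h\norm{f}^{2}/4}\mu_c(f)$. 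If $Q_h(c)=0$ then, as $e^{-h\norm{f}^{2}/4}>0$ for every $f$, all $\mu_c(f)$ vanish and $c=0$; restricting back to $\mathcal{W}(E,0)\subset\overline{\Delta(E,0)}^{2}$ via the compatibility above gives injectivity on the full $C^*$‑algebra.

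For non‑surjectivity (the case $h>0$; for $h=0$ the map is the identity, and we may of course assume $E\neq\{0\}$), observe that $Q_h\colon\mathcal{W}(E,0)\to\mathcal{W}(E,h\sigma)$ is a bounded, injective linear map between Banach spaces with $\norm{Q_h}\le1$. Were it onto, the bounded inverse theorem would force $Q_h^{-1}$ bounded; but $Q_h^{-1}(W^{h}(f))=e^{h\norm{f}^{2}/4}W^{0}(f)$ has norm $e^{h\norm{f}^{2}/4}$ whereas $\norm{W^{h}(f)}_{h}=1$, and taking $f=\sqrt{n}\,e$ for a fixed unit vector $e\in E$ and $n\to\infty$ contradicts boundedness. (More explicitly, the series $B:=\sum_{n\ge1}e^{-hn/4}n^{-1/2}W^{h}(\sqrt{n}\,e)$ is absolutely convergent in $\norm{\cdot}_{h}$, hence $B\in\mathcal{W}(E,h\sigma)$, but any preimage $c$ would need Fourier coefficients $\mu_c(\sqrt{n}\,e)=n^{-1/2}$, contradicting $\sum_f|\mu_c(f)|^2<\infty$.) I expect the reconciliation of the two extensions of $Q_h$, sketched in the first paragraph, to be the main (and only) obstacle; both the injectivity computation and the open‑mapping argument are then routine.
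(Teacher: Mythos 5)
Your proposal is correct and follows essentially the same route as the paper: injectivity via the orthonormal expansion in $\overline{\Delta(E,0)}^{2}$ with respect to the canonical central state (so that $Q_h$ acts diagonally with nonvanishing entries $e^{-h\norm{f}^2/4}$), and non-surjectivity via the bounded inverse theorem applied to $Q_h^{-1}(W^h(f)) = e^{h\norm{f}^2/4}W^0(f)$ along a sequence of Weyl elements with growing argument. Your explicit element $B$ without preimage mirrors the paper's Remark \ref{remark: surgettività di Berezin}, and your reconciliation of the two continuous extensions is a useful bookkeeping point the paper leaves implicit.
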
\begin{proof}
  Injectivity of the map comes from its injectivity when acting on $\overline{\Delta(E,0)}^{2}$. Indeed, given $c\in \overline{\Delta(E,0)}^{2}$, by lemma  \ref{lemma: espansione degli elementi in Weyl} and the continuity expressed in equation \eqref{eq: continuita con norma 2} we have \begin{equation}
  	\norm{Q_{h}(c)}_{h,2}^{2} = \sum_{f\in E} |\omega_{c}^{0}(W^{0}(-f)c)|^{2}e^{-h\frac{\norm{f}^{2}}{2}},
  \end{equation} which is equal to $0$ iff $\omega_{c}^{0}(W^{0}(-f)c) = 0$ for every $f\in E$, that is, iff $c =0$. Thanks to the inclusion $\mathcal{W}(E,0)\subset \overline{\Delta(E,0)}^{2}$, we have injectivity also for $Q_{h}|_{\mathcal{W}(E,0)}$. To discuss surjectivity we argue by contradiction. If the map $Q_{h}: \mathcal{W}(E,0)\rightarrow \mathcal{W}(E,h\sigma)$ was onto, then it would be an invertible, continuous linear map from a Banach space onto another Banach space.
  By \textit{Banach inversion theorem} this implies \cite[Th. 1.8.5]{Kadison_Ringrose_91} that also the inverse map $(Q_{h})^{-1} : \mathcal{W}(E,h\sigma)\rightarrow \mathcal{W}(E,0)$ is continuous, hence bounded. In other words, there exists some $Q>0$, such that $\norm{(Q_{h})^{-1}(A)}_{0}\leq Q\norm{A}_{h}$, for every $A\in \mathcal{W}(E,h\sigma)$. Now, it suffices to take an arbitrary $n\in\mathbb{N}$ and $ f\in E$, $f\neq 0$ to have \begin{equation}
  \norm{(Q_{h})^{-1}(W^{h}(nf))}_{0} = e^{h\frac{n^{2}\norm{f}^{2}}{4}} \leq Q\norm{W^{h}(nf)}_{h} = Q,
  \end{equation} which is clearly absurd if we pick $n\in\mathbb{N}$ large enough.
 	\end{proof}
 	
 	\begin{remark}\label{remark: surgettività di Berezin}
 	    We can exhibit  explicitly an element in $\mathcal{W}(E,h\sigma)$ without a counter-image in $\mathcal{W}(E,0)$:\begin{equation}\label{eq: esempio di elemento non raggiungibile}
 		C = \sum_{n=1}^{\infty}\frac{1}{n^{2}}W^{h}(nf),
 	\end{equation} where $f\in E$ is different from zero, but otherwise arbitrary. By direct inspection, the series in equation \eqref{eq: esempio di elemento non raggiungibile} converges with respect to the norm $\norm{\cdot}_{h}$. If there existed an element $c_{h}\in \mathcal{W}(E,0)$ such that $Q_{h}(c_{h}) = C $, then, its expansion would be 
  \begin{equation}
 	c_{h} = \sum_{n=1}^{\infty}e^{h\frac{n^{2}}{4}\norm{f}^{2}}\frac{1}{n^{2}}W^{0}(nf).
 	\end{equation} However, the latter series is divergent with respect to the norm $\norm{\cdot}_{0,2}$, which proves that $c_h\notin\mathcal{W}(E,0)$.
 \end{remark}

{\subsection{De-quantization procedure}\label{sec: dequantization} In this short subsection we present one of the major applications for $Q_h$. The quantization map sends classical observable into quantum ones, i.e. $Q_h\colon \mathcal{W}(E,0)\to \mathcal{W}(E,h\sigma)$. Hence, we can define its pullback action on the algebraic dual of the  Weyl $C^*$-algebras \begin{equation}
    Q_h^* \colon \mathcal{W}(E,h\sigma)^* \to \mathcal{W}(E,0)^*,\quad Q_h^*(\eta_h):= \eta_h\circ Q_h\quad \text{for }\eta_h\in \mathcal{W}(E,h\sigma)^*  .  
\end{equation} Now, as $Q_h$ is norm-continuous, positive and normalized as $Q_h(I_{\mathcal{W}(E,0)}) = I_{\mathcal{W}(E,h\sigma)}$, its pullback action maps quantum states to classical states \begin{equation}
    Q_h^* \colon \mathcal{S}(\mathcal{W}(E,h\sigma)) \to \mathcal{S}(\mathcal{W}(E,0)).
\end{equation}Hence, given a quantum state $\omega_h\in \mathcal{S}(\mathcal{W}(E,h\sigma))$, we can always define the classical state $\omega_h\circ Q_h$. The only inconvenience in the previous definition is the residual dependence on the semi-classical parameter $h$. This dependence can be \textit{removed} by taking the limit $h\to 0^+$. More precisely, for any net of quantum states $(\omega_{h})_{h\in(0,+\infty)}$, with $\omega_h\in \mathcal{S}(\mathcal{W}(E,h\sigma))$, we obtain a net of classical states as $(\omega_{h}\circ Q_h)_{h\in(0,+\infty)}$. Then, since $\mathcal{S}(\mathcal{W}(E,0))$ is $*$-weakly compact, we can always extract a subsequence $(\omega_{h_n}\circ Q_{h_n})_{n\in\mathbb{N}}$, with  $\lim_{n\to +\infty} h_n = 0$, converging $*$-weakly to some classical state $\omega_0\in \mathcal{S}(\mathcal{W}(E,0))$.}

{Other approaches to the \textit{de-quantization} of states, can be found in \cite{Ammari_Sohinger_2023,Falconi_2018, Falconi_Fratini_2024, Folland_1989} and references therein.}
	
	\section{Classical limit of a free Bose Gas}\label{sec: limite classico del gas libero}

    {This section is devoted to the study of classical equilibrium states for a Bose gas. The classical character of these states arises as a consequence of the infinite-density limits that we shall consider. More precisely, our results can be viewed as an extreme instance of the mean-field approach for a weakly interacting Bose gas, in which the interaction potential is set to zero and the local density diverges. There is a vast literature addressing the high-density regime of interacting Bose gases. 
The seminal work of Bogoliubov \cite{Bog_47}, which marked the starting point for the study of interactions, 
was originally intended to explain the behavior of liquid helium in this regime. 
More recent results were obtained by Lieb and Solovej \cite{Lieb_Solovej_2001,Lieb_Solovej_2004}, 
who rigorously validated Foldy’s and Dyson’s formulas for the ground-state energy of a high-density Bose gas with Coulomb interactions. 
Building on these, Giuliani and Seiringer \cite{GiulSeir2009} derived, in the mean-field and high-density setting at zero temperature, 
the celebrated Lee–Huang–Yang formula for the ground-state energy. 
Another approach was developed in \cite{Lewin_Nam_Rougerie_2014}, 
where the authors studied generic mean-field systems in the large particle-number limit 
and derived the Hartree approximation, which describes the dynamics of the gas via an effective nonlinear differential equation. }

{In the opposite limit, one finds the \textit{dilute gas}, or weak-coupling regime, 
characterized by the condition $\rho a^3 \ll 1$, where $\rho$ is the density and $a$ the scattering length of the interaction potential. 
A particular scaling consistent with this condition is the Gross-Pitaevskii  regime, 
where the interaction potential is rescaled with the number of particles $N$ as $N^2 V(N(x-y))$, 
corresponding to short-range repulsive forces and very rarefied gases. 
In this context, Lieb and Seiringer proved the occurrence of Bose-Einstein condensation in the zero-temperature limit \cite{Lieb_Seiringer_2002}. 
Recently, this result has been generalized and extended in several directions 
\cite{Boccato_Brennecke_Cenatiempo_Schlein_2018,Boccato_Brennecke_Cenatiempo_Schlein_2019,COSS_25,Deuchert_Seiringer_Yngvason_2019,Nam_Triay_2023}. }

{Another way to satisfy the dilute gas approximation is by considering interaction potentials of the form $\frac{1}{N} V(x-y)$. 
In \cite{GS_2013}, the authors computed the excitation spectrum of a weakly interacting Bose gas to leading order in the number of particles, 
a result which was later refined in \cite{Brietzke_Solovej_2020,DerNa_2013}.}

{The novelty of our analysis lies in the use of algebraic techniques to construct examples of non-trivial finite-temperature equilibrium states  within the infinite-density regime, where neither a local number operator nor a local density can be defined. These states are obtained, in a one-to-one correspondence, as semi-classical limits of the equilibrium states introduced in the seminal work of Araki and Woods\cite{Araki_Woods_63}. In particular, we will identify a class of classical infinite density states $\omega^{\alpha,0}_0$ describing a macroscopic occupation of the ground-state.}
	
{Before proceeding with the semi-classical analysis in \ref{sec: sistemi a volume finito},\ref{sec: limiti classici degli stati}, \ref{sec: limite termodinamico classico}, we introduce the standard algebraic formalism of quantum Bose gases in the following section.} 
 
	\subsection{Bose-Einstein condensation in quantum systems}\label{sec: condensazione quantistica} {In this subsection we discuss the algebraic description of a quantum free Bose gas. One of the first works to deal with this formulation is the seminal paper by Araki and Woods \cite{Araki_Woods_63}. In this, the authors constructed inequivalent representation of the CCR describing an infinite number of particles in the thermodynamic limit, i.e. representation of $\mathcal{W}(E,h\sigma)$, for specific symplectic spaces $E$, for which the number operator can be defined only locally; these are called \textit{strange representations}. Later, this work was expanded by Cannon, Lewis and Pulé \cite{Cannon_73, LP_73} (see also \cite{Berg_Lewis_86} and for a summary of the main results \cite[Ch. 5.2.5]{Bratteli_Robinson_97}) which derived the representations introduced in \cite{Araki_Woods_63} as specific GNS representation relative to infinite-volume states obtained by means of a thermodynamic limit of the finite volume Gibbs states.}
	
	 We fix a single-particle Hilbert space $(\mathcal{H},\langle\cdot,\cdot\rangle)$ and the symplectic space $(E,h\sigma)$, $h>0$, descending from $\mathcal{H}$. Then, $\mathcal{W}(E,h\sigma)$ can be represented on the \textit{Fock-Cook} Hilbert space \begin{equation}
		\mathbb{F}(\mathcal{H}): = \bigoplus_{n=0}^{\infty}\mathcal{H}_{n},
	\end{equation} where $\mathcal{H}_{n}$ is the symmetrized tensor product of $\mathcal{H}$, taken $n$-times. {This representation can be constructed by the mapping \begin{equation}
	    W^h(f) \to W^h_F(f) = e^{i\Phi_h(f)}, \quad \Phi_h(f) = \frac{a_h(f)+ a^*_h(f)}{\sqrt{2}},
	\end{equation} where $a_h(f),\; a_h^*(f)$ are the creation and annihilation operators relative to the cyclic vacuum state $\Omega_h$ \cite[Sec. 2]{Araki_Woods_63}:\begin{equation}
	    \langle \Omega_h, W^h_F(f) \Omega_h\rangle = e^{-\frac{h}{4}\norm{f}^2}.
	\end{equation}Moreover, the Fock-Cook representation is unitary equivalent to the GNS representation of the state \begin{equation}
	    \omega_F(\cdot):= \langle \Omega_F,\cdot\Omega_F\rangle.
	\end{equation}
    The operators $a_h(f),\;a^*_h(f)$ are unbounded, closed and satisfy $a_h(f)^* = a^*_h(f)$. These are completely specified by the CCR \begin{equation}
	    [a_h(f), a_h^*(g)] = h\langle f,g\rangle, \quad [a_h(f),a_h(g)] = [a_h^*(f),a_h^*(g)]  =0,\quad f,g\in E,
	\end{equation} and by their action on the vacuum state \begin{equation}
	    a_h(f)\Omega_h = 0,\quad \frac{1}{\sqrt{n! h^n}}a^*_h(f_n)\dots a^*_h(f_1)\Omega_h = P_+(f_n\otimes\dots \otimes f_1), \quad f,f_1,\dots, f_n\in E,
	\end{equation} where $P_+$ is the projection onto $\mathcal{H}_n$. Independently from $E$, there is always a well defined number operator on $\mathbb{F}(\mathcal{H})$ given by \begin{equation}
	    N_F = \frac{1}{h}\sum_{n=1}^{+\infty}a^*_h(f_n)a_h(f_n), 
	\end{equation} where $(f_n)_{n>0}$ is an arbitrary orthonormal basis of $E$.}
    
    {Now, we discuss states and representations which describes a non-trivial particle distribution in momentum space $\rho({\bm p})$. We will address this situation by using the \textit{gran canonical formalism}; details concerning the canonical formalism can be found in \cite{Araki_Woods_63}[Sec. 4] and \cite{Cannon_73}.}

   { We start with the finite volume case. Take $E_L:= \mathbb{L}^2(\Lambda_L)$, where $\Lambda_L:= [-L,L]^\nu\subset \mathbb{R}^\nu$, $L>0$} \footnote{{It is possible to generalize all the results to more general net of finite volumes, e.g. rectangular boxes with suitable conditions on the length of the edges, see \cite{Berg_Lewis_86} and \cite{Bratteli_Robinson_97}[Th. 5.2.32] }} and $E:= \mathbb{L}^2(\mathbb{R}^\nu)$. On $E_L$, we can define an Hamiltonian $H_L$ given by some self-adjoint extension on $\mathbb{L}^2(\Lambda_L)$ of the operator $-\Delta/2|_{C_c^\infty(\Lambda_L)}$. $\Delta$ has many self-adjoint extensions; for clarity we will employ the one obtained by \textit{Dirichlet boundary conditions}\cite[Th. 5.2.30]{Bratteli_Robinson_97}. The spectrum of $H_L$ is characterized by the following eigenvectors and eigenvalues, labeled by $\underline{n}\in \mathbb{N}^{*,\nu}$\footnote{With the notation $\mathbb{N}^{*,\nu}$ we mean the Cartesian product of $\mathbb{N}^{*}$ $\nu$-times, where $\mathbb{N}^{*}$ are the natural numbers, zero excluded.
 }
	\begin{equation}\label{eq: autovettori e energie del laplaciano}
		\psi_{\underline{n}}(x) = \sqrt{\frac{|\underline{n}|2^{\nu}}{|\Lambda_{L}|}}\prod_{i=1}^{\nu}\sin(\pi  \frac{n_{i}(x_{i}- L)}{2L}),\quad E_{\underline{n}}(L) = \frac{1}{2}\sum_{i=1}^{\nu}\frac{\pi^{2}n_{i}^{2}}{(2L)^{2}}.
	\end{equation} The ground state eigenfunction and energy will be denoted by $\psi_{0}$ and $E_{0}(L)$.
	 The resulting Hamiltonian on the Fock-Cook space $\mathbb{F}(E_L)$ is $\dd\Gamma(K_{\mu,L}) = \dd\Gamma(H_{L}) -\mu N_{L}$, where $N_{L}$ is the self-adjoint number operator on $E_L$, $\mu\in\mathbb{R}$ is the \textit{chemical potential}, while \begin{equation}
		\dd\Gamma (H_{L}) = \bigoplus_{n=0}^{\infty} H_{L}\otimes I \otimes \dots \otimes I + \dots + I\otimes \dots \otimes I\otimes H_{L},
	\end{equation} is the second quantization of the Hamiltonian $H_{L}$. Then, if $\mu < E_{0}(L)$, the operator $ e^{-\beta h\dd\Gamma(K_{\mu,L})}$ is trace class, and we can introduce the \textit{Gibbs-von Neumann states}  on the Weyl $C^*$-algebra\begin{equation}\label{eq: Gibbs state}
	\omega^{L,\mu}_{h}(W^h(f)) = \frac{\Tr{e^{-\beta h\dd\Gamma(K_{\mu,L})} W^h_F(f)}}{\Tr{e^{-\beta h \dd\Gamma(K_{\mu,L}) }}}, \quad W^h(f)\in \mathcal{W}(E_{L},h\sigma).
	\end{equation} 
    {We have inserted a factor of $h$ in front $\beta$ at the exponent in \eqref{eq: Gibbs state}; as we will see, this choice connects the semi-classical parameter to the mean density of the gas.
	From the explicit expression of $\omega_h^{L,\mu}$, it is clear that these states define a normal representation which is normal to the Fock-Cook one. In particular, it is possible to define a number operator $N_{\omega_h^{L,\mu}}$.} These states can be extended to arbitrary products of the field operators $\{\Phi_{h}(f),\;f\in E_{L}\}$. An easy computation      \cite[Appendix 1]{Araki_Woods_63} shows that two-point functions are expressed in terms of bi-linear forms on $E_{L}$:
    \begin{align}
		\omega^{L,\mu}_{h}(\Phi_{h}(f)\Phi_{h}(g))
        & = \frac{h}{2}\Re{\langle f ,(I +e^{-\beta h K_{\mu,L}})(I - e^{-\beta h K_{\mu,L}})^{-1}g\rangle} +\frac{ih}{2}\sigma(f,g)\,,
        \\
        \omega^{L,\mu}_{h}(W^{h}(f))
        &= \exp{-\frac{h}{4}\langle f, (I +e^{-\beta h K_{\mu,L}})(I - e^{-\beta h K_{\mu,L}})^{-1}f\rangle}\,,
    \end{align}
    for every $f,g\in E_{L}$. With these values it is possible to calculate the associated density of the system as
    \begin{multline}\label{eq: densità quantistica su scatola}
        \rho_{h}^{\mu,L}
        :=\frac{1}{|\Lambda_{L}|}\omega^{L,\mu}_{h}(N_{L})
        = \frac{1}{h|\Lambda_{L}|}\sum_{\underline{n}\in\mathbb{N}^{\nu}}\omega^{L,\mu}_{h}(a_h^*(\Psi_{\underline{n}})a_h(\Psi_{\underline{n}}))
        \\
        = \frac{1}{|\Lambda_{L}|}\sum_{\underline{n}\in\mathbb{N}^{\nu}}z_{h} e^{-\beta h E_{\underline{n}(L)}}(1- z_{h}e^{-\beta h E_{\underline{n}}(L)})^{-1}\,,
    \end{multline}
 where we have defined $z_{h}:= e^{\beta h \mu}$.

{Now, we introduce infinite-volume CCR representations. Existence of the number operator, reducibility and existence of a cyclic vector were discussed by Araki and Woods \cite[Sec. 4 \& 5]{Araki_Woods_63}, while Cannon \cite{Cannon_73} derived the latter representations from a thermodynamic limit of the states in Eq. \eqref{eq: Gibbs state}. We will follow this latter point of view.}

{We consider the Weyl algebra $\mathcal{W}(E_L,h\sigma)$ as a sub $C^*$-algebra of the inductive limit \begin{equation}
    \mathcal{W}(E_L, h\sigma)\subset \overline{\bigcup_{L'>0} \mathcal{W}(E_{L'},h\sigma)} \subset \mathcal{W}(E,h\sigma)
\end{equation} where the inclusion is obtain ed by sending $ E_L\ni f\to f \in E$ by extending $f$ equal to $0$ outside of $\Lambda_L$. Similarly, the Hamiltonian $H_L$ can be extended to a self-adjoint operator parametrized by $L$ on $E$  \cite{Bratteli_Robinson_87}[Ex. 3.1.29]. }

	\paragraph{Chemical potential $\mu< 0$ fixed, variable density $\rho_{h}^{\mu,L}$:} If we fix a chemical potential $\mu<0$ for all $L>0$, we will always have $K_{\mu,L} = H_{L}-\mu I \geq -\mu I>0 $.  Moreover, for every compactly supported $f$ \begin{equation}\label{eq: quantum Gibbs con potenziale chimico fissato}
		\lim_{L\to+\infty}\omega^{L,\mu}_{h}(W^{h}(f)) = \exp{-\frac{h}{4}\langle f, (I +z_{h}e^{-\beta h H})(I- z_{h}e^{-\beta h H})^{-1}\rangle f} =: \omega^{\mu}_{h}(W^{h}(f)),
	\end{equation} where $H = -\Delta/2 $ is the unique self-adjoint extension $-\Delta/2|_{C_c^\infty(\mathbb{R}^\nu)}$. The functional in \eqref{eq: quantum Gibbs con potenziale chimico fissato} can be extended to a state on the full Weyl $C^*$-algebra $\mathcal{W}(E,h\sigma)$. It was proved by Cannon \cite{Cannon_73} that this state coincides with the thermodynamic limit of the canonical state. {Furthermore, it is an equilibrium state for the dynamics $\tau^{\mu}_{h,t}(W^{h}(f)) = W^{h}(e^{i(H- I\mu)t}f)$ in the sense that,  moving to the $\omega^{\mu}_{h}$-GNS representation, it satisfies a KMS condition for the $W^*$-dynamical system $(\pi_{\omega^{\mu}_h}(\mathcal{W}(E,h\sigma)'', \tau_{\omega^\mu_h})$ \cite[Ex. 5.3.2]{Bratteli_Robinson_97}\footnote{Clearly, $\omega^\mu_h$ is a quasi-free state, so that we can define its value on field operators $\Phi_{h}^\mu(f),\; f\in E$.}. }

{The state $\omega^\mu_h$ defines a reducible representation of the CCR, which is inequivalent to the Fock-Cook one. In particular, it is not possible to define a global number operator. However, one can still construct a local number operator for  bounded regions of $\mathbb{R}^\nu$ \begin{equation}
        N_{\Lambda_L} := \frac{1}{h}\sum_{m\geq 0}a_{\mu,h}^*(f_{L,m})a_{\mu,h}(f_{L,m}),
    \end{equation} where $(f_{L,m})_{m\geq 0}$ is any orthonormal basis for $E_L$, extended to $E$.
	Then, we obtain a meaningful notion of local density as \begin{equation}\label{eq: densità nel continuo potenziale chimico fissato}
		\lim_{L\to+\infty}\omega^\mu_h\left(\frac{N_{\Lambda_L}}{|\Lambda_L|}\right) = \lim_{L\to+\infty}\rho^{\mu,L}_{h} = \int \frac{\dd^\nu p}{(2\pi)^{\nu}}z_{h}e^{-\beta h p^{2}/2}(1 - z_{h}e^{-\beta h p^{2}/2}) =: \rho^{\mu}_h.
	\end{equation}}
    Note that, if the spatial dimension $\nu$ is greater or equal than $3$, the density in \eqref{eq: densità nel continuo potenziale chimico fissato} is bounded for $z_{h}\in [0,1]$ and it has as a finite upper bound the value \begin{equation}\label{eq: critical density}\rho^{0}_{h} = \int \frac{\dd ^{\nu}p}{(2\pi)^{\nu}}e^{-\beta h p^{2}/2}(1- e^{-\beta h p^2/2}) =: \rho_{c}(\beta h),\end{equation} which is usually referred to as \textit{critical density}. This upper bound is physically absurd, since we should be able to increase the density of the gas by adding further particles. The resolution of this pickle leads to the understanding of Bose-Einstein condensation and it is briefly analyzed in the next paragraph.
    If $\nu =1,2$ the density diverges for $\mu \rightarrow 0$. This is usually interpreted as an absence of condensation, but in modern formulations \cite{Buchholz_2022}  it has been remarked how there might be signs of condensation also for these dimensions.

\paragraph{Fixed density $\overline{\rho}$, variable chemical potential $\mu_{L}$:}
 Here we take $\nu \geq 3$. {Physically, we wish to keep the gas density $\overline{\rho}$ fixed while performing the thermodynamic limit. Then, the chemical potential has to depend on the volume $\mu\to \mu_L$, and it is uniquely specified by the condition \cite[Eq. 1.20]{Cannon_73} \begin{equation}\label{eq: condition on the chemical pot}
     \overline{\rho} = \omega^{L,\mu}_h\left(\frac{N_{\Lambda_L}}{|\Lambda_L|}\right) = \rho^{\mu,L}_h,\quad \text{for all } L>0.
 \end{equation}
 We can distinguish two cases. In the first one , $\overline{\rho}<\rho_{c}(\beta h)$. Then, it is possible to show \cite[Th. 2']{Cannon_73} that $\lim_{L\to+\infty }\mu_L = \overline{\mu}<0$ and the resulting state is given by
 \begin{equation}\label{eq: llimit state se densita sotto quella critica}
			\lim_{L\to+\infty}\omega^{L,\mu}_{h}(W^{h}(f)) = \exp{-\frac{h}{4}\langle f, (I +\overline{z}_{h}e^{-\beta h H})(I- \overline{z}_{h}e^{-\beta h H})^{-1}\rangle f} = \omega^{\overline{\mu}}_{h}(W^{h}(f)),
	\end{equation} where $\omega^{\overline{\mu}}_h$ has the same properties of the state $\omega^\mu_h$ of the previous paragraph.}

	{If instead $\overline{\rho} \geq \rho_{c}(\beta h)$, then $\lim_{L\to+\infty}\mu_L =  0$. In this case, it is possible to show \cite[Th. 2']{Cannon_73} that the finite volume states converge on Weyl elements to\begin{multline}\label{eq: limit state se densita sopra quella critica}
	\lim_{L\to+\infty}\omega^{L,\mu}_{h}(W^{h}(f))\\ = \exp{-\frac{h}{4}\bigg{[}\langle f, (I +e^{-\beta h H})(I- e^{-\beta h H})^{-1}\rangle f + 2^{\nu}(\overline{\rho} -\rho_{c}(\beta h)\Big|\int_{\mathbb{R}^{\nu}}\dd^{\nu }xf(x)\Big|^{2})\bigg{]}} =: \omega^{\overline{\rho},0}_{h}(W^{h}(f)).
	\end{multline} While for $f\in E_{L}$, the exponent in \eqref{eq: limit state se densita sopra quella critica} is always finite, the operator $(I+e^{-\beta h H})(I- e^{-\beta h H})^{-1}$ is unbounded with domain equal to $D(\Delta^{-1})$ and the integral $\int_{\mathbb{R}^{\nu}}\dd^{\nu}xf(x)$ might not be well defined for $f\in E$. Indeed, $\omega^{\overline{\rho},0}_{h}$ can be extended to a state on $\weylSH$ or on $\overline{\bigcup_{\Lambda_{L}\subset\mathbb{R}^{\nu}}\mathcal{W}(E_{L},h\sigma)}$, but not to the whole $\mathcal{W}(E,h\sigma)$. These states are labeled by a continuous parameter $\overline{\rho}\in [\rho_c(\beta h),+\infty)$. They define inequivalent GNS representations for different values of $\overline{\rho}$ and they satisfy the $W^*$-KMS condition with respect to the same temperature and represented dynamics. In other words, we have non-uniqueness of the equilibrium states, which physically represents the occurrence of a phase transition. As before, it is not possible to define a global number operator for $\omega^{\overline{\rho},0}_h$.}
    
	{The condensate term \begin{equation}
	    \exp{-\frac{h}{4}2^\nu(\overline{\rho}-\rho_{c}(\beta h))\left| \int_{\mathbb{R}^\nu}f(x)\right|^2}
	\end{equation} corresponds to a macroscopic occupation of the ground state, i.e.\begin{equation}
		\lim_{L\to+\infty} \frac{1}{|\Lambda_{L}|} e^{-\beta h(E_{0}(L)- \mu_{L})}(1-e^{-\beta h(E_{0}(L)- \mu_{L})}) = \overline{\rho} - \rho_{c}(\beta h).
	\end{equation} This is the only energy level for which fluctuations persist in the thermodynamic limit. Physically, this reflects the fact that a macroscopic number of particles occupy the ground-state energy level, while the remaining particles form together a thermal background.}

{\begin{remark}(\textbf{Classical nature of the condensate})\label{rmk: classical nature of the condensate}
    The condensate interpretation is further supported by an argument due to Araki and Woods \cite[Sec.3 \& 4]{Araki_Woods_63}. In the case of a macroscopic occupation of the ground-state, they succeed in constructing, within suitable infinite-volume representations of the CCR, annihilation and creation operators associated with particles in the ground-state sector \begin{equation}
\lim_{L\to+\infty}\frac{1}{|\Lambda_L|}a_{\omega^{\overline{\rho},0}_h}(\chi_{\Lambda_L})=: a,\quad\lim_{L\to+\infty}\frac{1}{|\Lambda_L|}a_{\omega^{\overline{\rho},0}_h}^*(\chi_{\Lambda_L})=: a^*,
    \end{equation}
    where $\chi_{\Lambda_L}$ is the characteristic function of $\Lambda_L$. A striking property of these operators is that they commute with the represented Weyl algebra. Araki and Woods interpret this commutativity as reflecting the \textit{classical character} of the condensate, which arises from the infinite population of particles in the zero-momentum mode. The classical nature of the condensate term will be exploited in Sec. \ref{sec: limiti classici degli stati} to construct classical infinite-volume states having a macroscopic occupation of the ground state sector. This will be done by studying the Araki-Woods representation in the limit of infinite local density.\end{remark}}

	\subsection{Finite volume Classical Systems}\label{sec: sistemi a volume finito}
 
In this section we find classical analogous of the finite volume states $\omega^{L,\mu}_h$ introduced in the previous subsection. These belong to a family of weak KMS states relatively to a single particle Hamiltonian $H$.  {The main results of this section are Theorems \ref{th: KMS cilindrico se e solo se KMS come misura},\ref{th: weak KMS se e solo se KMS cilindrico},  and Corollary \ref{corollary: unicità dei KMS su volume finito}. The proofs of some auxiliary results, Lemma \ref{lemma: estensione a funzioni limitate} and Prop. \ref{prop: caratterizzazioni equivalenti della condizione KMS cilindrica}, are collected in Appendix \ref{sec: appendice B}.}

Since the classical dynamics is obtained from the quantum unitary evolution we are tempted to perform a formal limit  $h\rightarrow 0^{+}$ of Eq. \ref{eq: Gibbs state} and formulate the ansatz \begin{equation}\label{eq: ansatz classico per stati KMS}
		\omega_{0}^{\mu,L}(W^{0}(f)) = \exp{ -\frac{1}{2\beta}\langle f, (H_{L}- \mu I)^{-1} f\rangle},\quad f\in E_{L}.
	\end{equation} {This prescription can be linearly extended to a state on $\Delta(E_{L},0)$. Then, by continuity of of the sesquilinear, positive map appearing at the exponent, we can extend this definition to $\mathcal{W}(E_L,0)$} \cite[Th 3.4]{Petz_90}. In particular, they are quasi-free states, therefore, we can compute their values on arbitrary products of  generators $\{\Phi_{0}(f),\; f\in E_{L}\}$.

States of the form \eqref{eq: ansatz classico per stati KMS} resemble the characteristic functionals of Gibbs measure for infinite dimensional Hilbert spaces \cite{Ammari_Sohinger_2023}. Indeed, this analogy has already been  explored in other works \cite{Falconi_2018, Falconi_Fratini_2024}. Moreover, we know that for a finite number of classical particles, the unique equilibrium state is obtained by employing Gibbs distribution, which takes the form $\dd\rho(q,p) = Z^{-1}e^{-\beta h(q,p)}\dd q\dd p$.
 We would like to establish a similar result in the present setting. 

 To proceed forward we need some additional mathematical tools. In the following, we will discuss Sobolev spaces, infinite dimensional measures and cylindrical measures; more details can be found in \cite{Ammari_Sohinger_2023,Gel'fand_Vilenkin_,Huang_Yan_2000,Skorohod_1974}.
	
	\paragraph{Abstract setting: } We consider a  real symplectic space $E$, coming from a separable, complex Hilbert space $(\mathcal{H},\langle\cdot,\cdot\rangle)$. We take a self-adjoint operator $H > 0$, with compact resolvent, such that there exists $s>0$ for which $H^{-s}$ is trace class.
    
    We endow $E$ with a \textit{Rigged Hilbert space} structure: $\Phi \subset E \subset \Phi'$, where $\Phi$ is a dense subset in $E$ with respect to the Hilbert space norm, equipped with a topological vector space structure, while $\Phi'$ is the topological dual of $\Phi$. A relevant example is given by $\mathcal{S}(\mathbb{R}^{\nu})\subset \mathbb{L}^2(\mathbb{R}^\nu) \subset \mathcal{S}(\mathbb{R}^{\nu})'$. For the current discussion, we consider the \textit{Sobolev Hilbert spaces} $\mathcal{H}^{s}\subset E \subset \mathcal{H}^{-s}$.  We recall the definition of these spaces

 \begin{definition} (\textbf{Sobolev Hilbert spaces})\label{def: Sobolev Hilbert spaces}
		The Sobolev Hilbert space $\mathcal{H}^s$  is defined as the vector subspace $D(H^{s/2})\subset E$, with scalar product $\langle f, g\rangle_{s}:= \Re{\langle H^{s/2}f, H^{s/2} g\rangle},\; f,g\in D(H^{s/2}) $ and topology induced by the norm $\norm{f}_{s}:= \norm{H^{s/2}f}$. The  space $\mathcal{H}^{-s}$ is defined as the completion of $E$ with respect to the norm $\norm{f}_{-s}:= \norm{H^{-s}f}$, induced by the scalar product $\langle f, g\rangle_{-s}:= \Re{\langle H^{-s/2}f, H^{-s/2}g\rangle}$.
	\end{definition} 
 
        We summarize the basic properties of these spaces: (The proofs of the following assertions can be found in \cite[Lem. 4.5]{Huang_Yan_2000})
	\begin{enumerate}
		\item[(a)] $\mathcal{H}^{s}$ is a separable Hilbert space with respect to $\langle\cdot,\cdot\rangle_{s}$;
		\item[(b)] $H^{s}$ can be extended to an isometry from $\mathcal{H}^{s}$ onto $\mathcal{H}^{-s}$, equivalently $H^{-s}$ can be extended to an isometry from $\mathcal{H}^{-s}$ onto $\mathcal{H}^{s}$;
		\item[(c)] $H^{-s}$ is a positive, symmetric and trace class operator in $\mathcal{H}^{-s}$ with $\Tr H^{-s}= \Tr_{-s}{H^{-s}}$ ;
		\item[(d)] $\mathcal{H}^{s}$ e $\mathcal{H}^{-s}$ are mutually adjoint with canonical duality given by \begin{equation}
			\langle f, u\rangle := \langle H^{s}f, u\rangle_{-s},\quad f\in \mathcal{H}^{s},\; u\in \mathcal{H}^{-s}.
		\end{equation}
        
	\end{enumerate}

	We introduce  classes of functions which can be manipulated easily as in finite dimensional spaces. These are called \textit{smooth cylindrical functions}, where the word \textit{smooth} suggests the possibility of being differentiated, while \textit{cylindrical} says that they only depends on finite dimensional subspaces of $E$.
	\begin{definition}(\textbf{Smooth Cylindrical functions})\label{def: smooth cylindrical functions}
		We fix an orthonormal basis of $(E, \Re{\langle\cdot,\cdot\rangle})$ composed of $H$'s normalized eigenvectors $\{e_{k}, ie_{k}\}_{k\in\mathbb{N}} =: \{e_{k}, f_{k}\}_{k\in\mathbb{N}}\subset \mathcal{H}^{s}$. These form an orthogonal basis also for $\mathcal{H}^{s}$ and $\mathcal{H}^{-s}$. A smooth cylindrical function is a functional $F\colon \mathcal{H}^{-s}\rightarrow \mathbb{R}$ such that there exists $n\in\mathbb{N}_{0}$. a map $\varphi\in C^{\infty}_{c}(\mathbb{R}^{2n}),\; \mathcal{S}(\mathbb{R}^{2n}),\; \text{or } C^{\infty}_{b}(\mathbb{R}^{2n})$ and  \begin{equation}
		F(u) = \varphi(\pi_{n}(u)) =\varphi(\langle e_{1}, u\rangle,\ldots, \langle e_{n}, u\rangle,
		\langle f_{1}, u\rangle,\ldots,\langle f_{n}, u\rangle),\quad  u \in \mathcal{H}^{-s},
	\end{equation} where $\pi_{n}: \mathcal{H}^{-s}\ni u\rightarrow (\langle e_{j},u\rangle\dots \langle f_{n},u\rangle)\in\mathbb{R}^{2n} $.
We denote the spaces of smooth cylindrical functions respectively $C^{\infty}_{c,cyl}(\mathcal{H}^{-s})$, $\mathcal{S}_{cyl}(\mathcal{H}^{-s})$, $C^{\infty}_{b,cyl}(\mathcal{H}^{-s})$.
	\end{definition} Definition \ref{def: smooth cylindrical functions}  can be generalized for the rigged Hilbert space setting $\Phi \subset E \subset \Phi' $ \cite{Ammari_Sohinger_2023}. In view of definition \ref{def: smooth cylindrical functions} we call $\{L_{n}\}_{n\in\mathbb{N}}$, the finite dimensional subspaces spanned by $\{e_{k},f_{k}\}_{k\in\{1,\dots, n\}}$.
	
These functionals can be derived, and in particular, the gradient of a smooth cylindrical function is a well defined object
\begin{equation}\label{eq: gradient of smooth cf}
	\nabla F(u) = \sum_{j=1}^n(\partial^{1}_{j}\varphi)(\pi_{n}(u))e_{j} + (\partial^{2}_{j}\varphi)(\pi_{n}(u))f_{j}\in \mathcal{H}^{s},\end{equation}where $\partial^{1}_{j}\; (\partial^{2}_{j})$ is the directional derivative along the $j\; (j+n)$-direction.

 With \eqref{eq: gradient of smooth cf} we can also define a Poisson bracket for smooth cylindrical functions by simply taking \begin{equation}\label{def: poisson cylindrical}
	\{F, G\}(u):= \sigma(\nabla F(u), \nabla G (u)).
\end{equation}

We recall that smooth cylindrical functions can be \textit{integrated} with respect to cylindrical measures \cite{Falconi_2018,Skorohod_1974}. We call $\mathfrak{B}(\mathcal{H}^{-s})$ the Borel sigma algebra of $\mathcal{H}^{-s}$ and $\mathfrak{P}(\mathcal{H}^{-s}),\; \mathfrak{P}_{cyl}(E)$ the space of \textit{Borel probability measures} and \textit{normalized cylindrical measures} respectively. 

Within this setting, we can adapt \cite[Def. 2.3]{Ammari_Sohinger_2023} (see also definition  \ref{def: KMS per misure})  to construct a notion of KMS states for cylindrical measures.
	\begin{definition} \label{def: KMS cilindrici} (\textbf{Cylindrical KMS state})
 Let $X \colon \mathcal{H}^{-s} \rightarrow \mathcal{H}^{-s}$ be the Borel vector field $X(u):= iHu$ and take $\beta >0$. We say that $\mu_{*}\in \mathfrak{P}_{cyl}(E)$ is a $(X,\beta)$-\textit{cylinder KMS state} if , the function $\langle \varphi , X(\cdot)\rangle$ is $\mu_{L_{n}}$ integrable for all $\varphi\in  L_{n}$, for all $n\in\mathbb{N}$, and if for arbitrary $F,\; G\in C^{\infty}_{c,cyl}(\mathcal{H}^{-s})$  the equality \begin{equation}\label{eq: cylindrical KMS}
\int \{F, G\}(u)\dd\mu_{*}(u) = \beta \int \langle \nabla F(u), X(u)\rangle G(u)\dd \mu_{*}(u),
\end{equation} is satisfied
with the Poisson bracket defined in \eqref{def: poisson cylindrical}.
\end{definition}
\begin{remark}
A few remarks on Equation \eqref{eq: cylindrical KMS} are in order:
\begin{enumerate}
\item[(a)] The bracket in \ref{def: poisson cylindrical} leaves invariant the space of smooth cylindrical functions.
\item[(b)] The vector field $X(u) = iHu$ has to be interpreted in a distributional sense, i.e. for all $f\in \mathcal{H}^{-s}$, $\langle f, X(u)\rangle = \langle -iHf, u\rangle$. In particular, $\mathcal{H}^{-s}\ni u \rightarrow \langle\nabla F(u), X(u)\rangle\in\mathbb{R}$ is still a smooth cylindrical function.
\end{enumerate} 
\end{remark} 
Cylindrical measures are completely specified by finite dimensional integrals over the subspaces $\{L_{n}\}_{n\in\mathbb{N}}$; thus, the cylindrical KMS condition in \ref{def: KMS cilindrici} shares many properties of the one for Borel measures given in \cite[Def. 2.3]{Ammari_Sohinger_2023}: \begin{definition}
		\label{def: KMS per misure}(\textbf{KMS state}) Let $X: \Phi' \rightarrow \Phi'$ be a borel vector field  and $\beta >0$. We say that $\mu\in \mathfrak{P}(\Phi')$ is a $(\beta,X)$-\textit{KMS state} if and only if for all $F,\; G\in C^{\infty}_{c,cyl}(\Phi)$ the function $\langle \varphi , X(\cdot)\rangle$ is $\mu$-integrable for all $\varphi\in \Phi$ and if the equality \begin{equation}\label{eq: measure KMS condition}
			\int_{\Phi'} \{F, G\}(u)\dd\mu(u) = \beta \int_{\Phi'} \langle \nabla F(u), X(u)\rangle G(u)\dd \mu(u),
		\end{equation} is satisfied.
	\end{definition}
 First of all, one has the following lemma
 \begin{lemma} \label{lemma: estensione a funzioni limitate}If $\mu_{*}\in\mathfrak{P}_{cyl}(E)$ is a $(X,\beta)$-cylindrical KMS, then, identity \eqref{eq: cylindrical KMS} is true for all $F,G\in C^{\infty}_{b,cyl}(\mathcal{H}^{-s})$.
	
\end{lemma}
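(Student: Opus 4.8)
The plan is a standard cut-off and dominated-convergence argument, after reducing the identity to a finite-dimensional integral against which $\mu_*$ restricts to an honest probability measure.

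First I would fix $F,G\in C^{\infty}_{b,cyl}(\mathcal{H}^{-s})$ and, enlarging $n$ if necessary, write $F(u)=\varphi(\pi_{n}(u))$ and $G(u)=\psi(\pi_{n}(u))$ with $\varphi,\psi\in C^{\infty}_{b}(\mathbb{R}^{2n})$ depending on a common subspace $L_{n}$. By the definition of integration against a cylindrical measure, both sides of \eqref{eq: cylindrical KMS} are integrals over $\mathbb{R}^{2n}$ against the finite-dimensional marginal $\mu_{L_{n}}$ of $\mu_{*}$ on $L_{n}\simeq\mathbb{R}^{2n}$, which is a Borel probability measure. Using the distributional meaning of $X$, namely $\langle f,X(u)\rangle=\langle -iHf,u\rangle$, together with the invariance of $L_{n}$ under $H$ and under multiplication by $i$ (it is spanned by the $H$-eigenvectors $\{e_{j},f_{j}=ie_{j}\}_{j\le n}$, with $He_{j}=\lambda_{j}e_{j}$), the function $u\mapsto\langle\nabla F(u),X(u)\rangle$ is a smooth cylindrical function equal to a finite sum of bounded functions of $\pi_{n}(u)$ multiplied by the linear functionals $u\mapsto\langle e_{j},u\rangle$ and $u\mapsto\langle f_{j},u\rangle$, $j\le n$. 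The integrability clause in Definition \ref{def: KMS cilindrici} says precisely that these linear functionals are $\mu_{L_{n}}$-integrable, i.e. $\mu_{L_{n}}$ has finite first moments; hence $\langle\nabla F(u),X(u)\rangle G(u)$ is $\mu_{L_{n}}$-integrable, being dominated by $\|\psi\|_{\infty}$ times such a first-moment function, while $\{F,G\}(u)=\sigma(\nabla F(u),\nabla G(u))$ (see \eqref{def: poisson cylindrical}) is bounded and therefore integrable against the probability measure $\mu_{L_{n}}$. So both integrals in \eqref{eq: cylindrical KMS} are well defined already for $F,G\in C^{\infty}_{b,cyl}(\mathcal{H}^{-s})$.

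Next I would introduce the cut-off: choose $\chi\in C^{\infty}_{c}(\mathbb{R}^{2n})$ with $0\le\chi\le1$ and $\chi\equiv1$ on the unit ball, set $\chi_{k}(x):=\chi(x/k)$, $\varphi_{k}:=\varphi\chi_{k}$, $\psi_{k}:=\psi\chi_{k}$, and $F_{k}:=\varphi_{k}\circ\pi_{n}$, $G_{k}:=\psi_{k}\circ\pi_{n}\in C^{\infty}_{c,cyl}(\mathcal{H}^{-s})$. By hypothesis \eqref{eq: cylindrical KMS} holds for the pair $(F_{k},G_{k})$. For every fixed $x\in\mathbb{R}^{2n}$ one has $\varphi_{k}=\varphi$ and $\nabla\varphi_{k}=\nabla\varphi$ on a neighbourhood of $x$ as soon as $k>|x|$ (since $\nabla\chi\equiv0$ on the unit ball), and similarly for $\psi$; moreover the Leibniz rule gives the uniform bounds $\|\varphi_{k}\|_{\infty}\le\|\varphi\|_{\infty}$ and $\|\nabla\varphi_{k}\|_{\infty}\le\|\nabla\varphi\|_{\infty}+\|\varphi\|_{\infty}\|\nabla\chi\|_{\infty}$, and likewise for $\psi_{k}$. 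Hence $\{F_{k},G_{k}\}(u)\to\{F,G\}(u)$ pointwise with a uniform bound, and $\langle\nabla F_{k}(u),X(u)\rangle G_{k}(u)\to\langle\nabla F(u),X(u)\rangle G(u)$ pointwise, dominated uniformly in $k$ by $\|\psi\|_{\infty}$ times the fixed first-moment function built from $\{\lambda_{j}\}_{j\le n}$, $\|\nabla\varphi\|_{\infty}$, $\|\varphi\|_{\infty}\|\nabla\chi\|_{\infty}$ and $|\langle e_{j},u\rangle|$, $|\langle f_{j},u\rangle|$.

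Finally I would let $k\to\infty$ in \eqref{eq: cylindrical KMS} written for $(F_{k},G_{k})$ and apply dominated convergence on $(\mathbb{R}^{2n},\mu_{L_{n}})$ to each side — the left-hand side with a constant dominating function, legitimate because $\mu_{L_{n}}$ is finite, the right-hand side with the first-moment dominating function — thereby obtaining \eqref{eq: cylindrical KMS} for $F$ and $G$. I do not expect a genuine obstacle here; the only delicate point is the integrability of the right-hand side integrand, which is unbounded because $X(u)=iHu$ grows linearly in $u$, and this is exactly what the finite-first-moment hypothesis in Definition \ref{def: KMS cilindrici} is designed to supply.
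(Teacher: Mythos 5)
Your proof is correct and follows essentially the same route as the paper: reduce to a common finite-dimensional subspace $L_n$, approximate $\varphi,\psi\in C^\infty_b(\mathbb{R}^{2n})$ pointwise (with derivatives) by compactly supported functions, and pass to the limit by dominated convergence against the probability measure $\mu_{L_n}$. Your explicit dilation cutoff and your identification of the first-moment hypothesis as the dominating function for the unbounded right-hand integrand are in fact more careful than the paper's brief argument, but they realize the same strategy.
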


	Next, there is an equivalent characterization of cylindrical KMS-states similar to the one stated in \cite[Th. 2.7]{Ammari_Sohinger_2023} for Borel probability measures
\begin{proposition}\label{prop: caratterizzazioni equivalenti della condizione KMS cilindrica}
	Let $\mu_{*}\in \mathfrak{P}_{cyl}(E)$, $X(u) = iHu$ and $\beta>0$. Then the following properties are equivalent: \begin{enumerate}
		\item[(i)] $\mu_{*}$ is a $(\beta,X)$-cylindrical KMS state.
		\item[(ii)] For all $\varphi_{1},\varphi_{2}\in \mathbb{R}$-span$\{(e_{n},f_{n})_{n\in\mathbb{N}}\}$, the function $\langle\varphi_{1}, X(\cdot)\rangle$ is $\mu_{L_{n}}$ integrable, for $n$ sufficiently large, and we have the identity \begin{equation}\label{eq: second KMS cylinders}
			\Re{\langle i\varphi_{1},\varphi_{2}\rangle}\int e^{i\langle \varphi_{2}, u\rangle}\dd\mu_{*} +i\beta\int \langle \varphi_{1}, X(u)\rangle e^{i\langle \varphi_{2},u\rangle}\dd\mu_{*}=0
		\end{equation}
        
	\end{enumerate}
\end{proposition}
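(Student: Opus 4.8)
The plan is to prove the two implications separately. The bridge between them is that the compactly supported smooth cylindrical test functions of Definition~\ref{def: KMS cilindrici} and the plane waves $u\mapsto e^{i\langle\varphi,u\rangle}$ (with $\varphi$ in the algebraic span of the eigenbasis) carry the same information about $\mu_*$. For $(i)\Rightarrow(ii)$ one only needs to \emph{enlarge} the class of admissible test functions from $C^\infty_{c,cyl}(\mathcal{H}^{-s})$ to $C^\infty_{b,cyl}(\mathcal{H}^{-s})$ — which is precisely the content of Lemma~\ref{lemma: estensione a funzioni limitate} — and then specialise to exponentials; for $(ii)\Rightarrow(i)$ one has to go the other way, \emph{re-synthesising} an arbitrary $C^\infty_{c,cyl}$ function out of plane waves by Fourier inversion.

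For $(i)\Rightarrow(ii)$: assuming $\mu_*$ is a $(\beta,X)$-cylindrical KMS state, Lemma~\ref{lemma: estensione a funzioni limitate} extends \eqref{eq: cylindrical KMS} to all $F,G\in C^\infty_{b,cyl}(\mathcal{H}^{-s})$. Given $\varphi_1,\varphi_2\in\mathbb{R}$-span$\{(e_n,f_n)_{n\in\mathbb{N}}\}$, apply it to the bounded smooth cylindrical functions $F(u):=e^{i\langle\varphi_1,u\rangle}$ and $G(u):=e^{i\langle\varphi_2-\varphi_1,u\rangle}$. From the gradient formula \eqref{eq: gradient of smooth cf} one reads off $\nabla F(u)=i\varphi_1\,e^{i\langle\varphi_1,u\rangle}$ (in the complexification of $\mathcal{H}^{s}$), likewise for $G$, so that, using $\sigma(\varphi_1,\varphi_1)=0$,
\[
\{F,G\}(u)=-\sigma(\varphi_1,\varphi_2)\,e^{i\langle\varphi_2,u\rangle},\qquad
\langle\nabla F(u),X(u)\rangle\,G(u)=i\,\langle\varphi_1,X(u)\rangle\,e^{i\langle\varphi_2,u\rangle}.
\]
Substituting into \eqref{eq: cylindrical KMS} and using $\Re\langle i\varphi_1,\varphi_2\rangle=\Im\langle\varphi_1,\varphi_2\rangle=\sigma(\varphi_1,\varphi_2)$ gives exactly \eqref{eq: second KMS cylinders}. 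The $\mu_{L_n}$-integrability of $\langle\varphi_1,X(\cdot)\rangle$ for $n$ large enough that $\varphi_1\in L_n$ is part of the hypothesis in Definition~\ref{def: KMS cilindrici}, since $\langle\varphi_1,X(\cdot)\rangle$ depends only on $\pi_n(u)$ (the eigenbasis diagonalises $H$, hence $X=iH$ preserves $L_n$-dependence).

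For $(ii)\Rightarrow(i)$: let $F,G\in C^\infty_{c,cyl}(\mathcal{H}^{-s})$ and choose $n$ so large that both factor through $\pi_n$, say $F=\varphi\circ\pi_n$, $G=\psi\circ\pi_n$ with $\varphi,\psi\in C^\infty_c(\mathbb{R}^{2n})$. As $X(u)=iHu$ preserves $L_n$-dependence, $u\mapsto\langle\nabla F(u),X(u)\rangle$ is again cylindrical over $\pi_n(u)$, and \eqref{eq: cylindrical KMS} for $(F,G)$ reduces to an identity for the finite-dimensional image measure $\mu_{L_n}$ on $\mathbb{R}^{2n}$, with $iH$ replaced by its bounded restriction to $L_n$. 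Writing $\varphi$ and $\psi$ by Fourier inversion as superpositions of plane waves with Schwartz amplitudes $\hat\varphi,\hat\psi$, inserting these into both sides of the finite-dimensional identity, and interchanging the $\dd\mu_{L_n}$-integration with the two Fourier integrations, one is left with a double integral whose integrand, for each fixed pair of modes $(\xi,\eta)$, is (after the gradient and Poisson-bracket computation above) exactly a multiple of the left-hand side of \eqref{eq: second KMS cylinders} with $\varphi_1,\varphi_2$ the vectors in $L_n$ dual to $\xi$ and $\xi+\eta$; by hypothesis $(ii)$ this vanishes identically, hence so does the double integral, giving \eqref{eq: cylindrical KMS}. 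The interchange is legitimate by Fubini: $|e^{i\xi\cdot x}|=1$, $\hat\varphi$ and $\hat\psi$ decay rapidly, and — crucially — every linear functional of $x$ is $\mu_{L_n}$-integrable, which is the integrability hypothesis of $(ii)$ (as $\varphi_1$ ranges over $L_n$ and $H|_{L_n}$ is invertible, $\langle\varphi_1,X(\cdot)\rangle$ runs over all such functionals); in particular the first moment of $\mu_{L_n}$ is finite and dominates the relevant $\xi$-integrand uniformly in $x$.

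\textbf{Main obstacle.} The delicate step is the Fourier synthesis in $(ii)\Rightarrow(i)$: one cannot invoke Lemma~\ref{lemma: estensione a funzioni limitate} here — the plane waves are bounded but not compactly supported, and it is precisely a $C^\infty_{c,cyl}$ statement that must be reconstructed. One therefore has to verify by hand the domination estimate $\int|\hat\varphi(\xi)|\,|\hat\psi(\eta)|\bigl(1+\int|\langle\varphi_1(\xi),X(u)\rangle|\,\dd\mu_{L_n}(u)\bigr)\,\dd\xi\,\dd\eta<\infty$, combining the Schwartz decay of $\hat\varphi,\hat\psi$ with the integrability of linear functionals against $\mu_{L_n}$ provided by $(ii)$. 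Everything else — the gradient and bracket computations, the reduction to finite dimensions, and the bookkeeping of the complex structure inside $\sigma$ and the duality pairing — is routine.
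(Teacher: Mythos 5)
Your proof is correct and follows essentially the same route as the paper's: for $(i)\Rightarrow(ii)$ both use Lemma~\ref{lemma: estensione a funzioni limitate} to admit the bounded cylindrical exponentials $F=e^{i\langle\varphi_1,\cdot\rangle}$, $G=e^{i\langle\varphi_2-\varphi_1,\cdot\rangle}$ and read off \eqref{eq: second KMS cylinders} from the bracket identity $\Re\langle i\varphi_1,\varphi_2\rangle e^{i\langle\varphi_2,u\rangle}=-\{e^{i\langle\varphi_1,u\rangle},e^{i\langle\varphi_2-\varphi_1,u\rangle}\}$, and for $(ii)\Rightarrow(i)$ both resynthesise arbitrary $F,G\in C^\infty_{c,cyl}$ by Fourier inversion, multiply \eqref{eq: second KMS cylinders} by $\hat\varphi\hat\psi$ and integrate, with Fubini justified by the Schwartz decay of the amplitudes and the first-moment integrability in $(ii)$. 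Your explicit domination estimate for the Fubini step is a point the paper leaves implicit, but it is the same argument.
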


With the equivalent formulation in \eqref{eq: second KMS cylinders}, we are able to show the uniqueness of cylindrical KMS-states when the Borel field is induced by a Hamiltonians $H$ with the previously mentioned properties. Cylindrical measures can be characterized with their characteristic functionals \begin{equation}
    \hat{\mu}_{*}(f) = \int e^{i\langle f,u\rangle}\dd\mu_{*}(u),
\end{equation} and we will show that the only possibility for KMS ones is given by\begin{equation}\label{eq: fourier di gaussiana}
	\theta(f) = \exp{-\frac{1}{2\beta}\langle f, H^{-1} f\rangle},\quad f\in E.
\end{equation} The cylindrical measure induced by $\theta$ can be \textit{radonified} on $\mathcal{H}^{-s}$ \cite{Ammari_Sohinger_2023,Bogachev_1998,Huang_Yan_2000, Skorohod_1974} to obtain a Borel probability measure $\mu\in \mathfrak{P}(\mathcal{H}^{-s})$ with Fourier transform $\hat{\mu}(f) = \theta(f)$, for every $f\in \mathcal{H}^{s}$.
By  \cite[Th. 4.8]{Ammari_Sohinger_2023}, the Borel probability measure associated with the latter Fourier transform is the unique  one satisfying the KMS condition for measures \ref{def: KMS per misure}.

	\begin{theorem}\label{th: KMS cilindrico se e solo se KMS come misura}
	Let $\mu_{*}\in\mathfrak{P}_{cyl}(E),$ $X(u) = iHu$ and $\beta >0$. Then, $\mu_{*}$ is a $(X,\beta)$-cylindrical KMS state iff its characteristic functional takes the form \eqref{eq: fourier di gaussiana}. In particular, there is only one such state.
	\end{theorem}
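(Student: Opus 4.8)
My plan is to route the proof through the functional-equation reformulation of Proposition~\ref{prop: caratterizzazioni equivalenti della condizione KMS cilindrica}: a cylindrical measure $\mu_*$ is a $(X,\beta)$-cylindrical KMS state iff its characteristic functional $\hat\mu_*$ satisfies \eqref{eq: second KMS cylinders} along all eigenbasis directions. The observation that makes this manageable is that everything can be read off on the finite-dimensional subspaces $L_n$ (the real span of $\{e_k,f_k\}_{k\le n}$): each $L_n$ is invariant under $H$ (being spanned by eigenvectors) and under the complex structure, so $\langle\varphi_1,X(u)\rangle=\langle -iH\varphi_1,u\rangle$ with $-iH\varphi_1\in L_n$, and \eqref{eq: second KMS cylinders} tested on $\varphi_1,\varphi_2\in L_n$ involves only the marginal $\mu_{L_n}$ and its characteristic function $g_n:=\hat\mu_*|_{L_n}$.

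For the ``if'' direction I would first note that $\theta$ genuinely is the characteristic functional of a cylindrical measure: since $H>0$, the restriction $\theta|_{L_n}$ is the characteristic function of the centred Gaussian on $L_n\cong\mathbb R^{2n}$ with covariance $\mathrm{diag}\big((\beta\lambda_k)^{-1}\big)$ (the $\lambda_k$ being the eigenvalues of $H$); these marginals are consistent, and linear functionals are integrable against Gaussians, so the integrability hypothesis in Definition~\ref{def: KMS cilindrici} is met. I would then verify \eqref{eq: second KMS cylinders} with $\hat\mu_*=\theta$ by writing $\int\langle\varphi_1,X(u)\rangle e^{i\langle\varphi_2,u\rangle}\,\dd\mu_*$ as $-i$ times the $t$-derivative at $0$ of $\theta(\varphi_2+t(-iH\varphi_1))$; differentiating the Gaussian brings down the pairing $\langle\varphi_2,H^{-1}(-iH\varphi_1)\rangle=\langle\varphi_2,-i\varphi_1\rangle$, and after collecting the prefactors of $i$ and $\beta$ one finds that this reproduces the first term of \eqref{eq: second KMS cylinders} with the opposite sign, so the two summands cancel. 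Hence $\theta$ is the characteristic functional of a $(X,\beta)$-cylindrical KMS state, and such a state exists.

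For the ``only if'' direction and the uniqueness claim, let $\mu_*$ be an arbitrary $(X,\beta)$-cylindrical KMS state; by Proposition~\ref{prop: caratterizzazioni equivalenti della condizione KMS cilindrica} its marginal characteristic functions $g_n$ satisfy \eqref{eq: second KMS cylinders}. Rewriting the $X$-integral as a directional derivative of $g_n$ and substituting $\varphi_1=H^{-1}\psi$ (admissible since $H$ is invertible on $L_n$) turns \eqref{eq: second KMS cylinders} into a closed first-order linear equation of the shape $\beta\,D_\psi g_n(\varphi)=-\langle\psi,H^{-1}\varphi\rangle\,g_n(\varphi)$ for all $\psi,\varphi\in L_n$. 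Specialising to $\psi=\varphi$ and integrating the resulting scalar ODE $\beta\tfrac{\dd}{\dd t}g_n(t\varphi)=-t\,\langle\varphi,H^{-1}\varphi\rangle\,g_n(t\varphi)$ with $g_n(0)=1$ gives $g_n(\varphi)=\exp\!\big(-\tfrac1{2\beta}\langle\varphi,H^{-1}\varphi\rangle\big)=\theta|_{L_n}(\varphi)$ for every $n$. Since a cylindrical measure is determined by its characteristic functional — equivalently, by the family $(\mu_{L_n})_n$ of its marginals — this forces $\hat\mu_*=\theta$, which yields both existence and uniqueness; radonifying $\theta$ on $\mathcal H^{-s}$ and invoking \cite[Th.~4.8]{Ammari_Sohinger_2023} then identifies this state with the unique Borel KMS measure of Definition~\ref{def: KMS per misure}.

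I expect the delicate point to be the cancellation in the ``if'' direction rather than the ODE argument. One has to treat the distributional vector field $X(u)=iHu$ through the adjoint identity with respect to the real $\mathcal H^{s}$--$\mathcal H^{-s}$ duality, and to keep careful track of the several occurrences of $i$ — those coming from $X$, from the oscillatory factor $e^{i\langle\varphi_2,u\rangle}$, and from the complex structure on $E$ that enters $\Re{\langle i\varphi_1,\varphi_2\rangle}$ — so that the quadratic-form derivative of the Gaussian lines up exactly with the first term of \eqref{eq: second KMS cylinders}. Once the identities are written out on each $L_n$ separately, the remaining ingredients — Bochner on $\mathbb R^{2n}$, integrability of linear functionals against Gaussians, and integration of the ray ODE — are routine.
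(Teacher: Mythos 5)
Your argument is correct in its overall structure, but it takes a genuinely different route from the paper's. The paper never touches the functional equation \eqref{eq: second KMS cylinders} in this proof: it observes that the cylindrical KMS condition restricts to the finite--dimensional KMS condition for each marginal $\mu_{L_n}$, invokes the finite--dimensional uniqueness theorem of Ammari--Sohinger to identify $\mu_{L_n}$ with the Gibbs density $Z^{-1}e^{-\frac{\beta}{2}\langle f,Hf\rangle}\dd\mathcal{L}_{2n}(f)$, and then computes the Gaussian integral explicitly to recover $\theta|_{L_n}$; the converse is read off from the same identification. You instead stay entirely at the level of characteristic functionals: Proposition \ref{prop: caratterizzazioni equivalenti della condizione KMS cilindrica} converts the KMS condition into a first-order linear equation for $g_n=\hat\mu_*|_{L_n}$, which you integrate along rays. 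This buys self-containedness --- no appeal to the external measure-level uniqueness result, and no need to know that the marginals admit densities --- at the price of the complex-structure bookkeeping you already flag as delicate. Two remarks on that bookkeeping. First, if $\psi$ is to be the direction of differentiation, the substitution must be $\varphi_1=iH^{-1}\psi$ rather than $\varphi_1=H^{-1}\psi$: with the literal choice $\varphi_1=H^{-1}\psi$ the derivative term becomes $D_{-i\psi}g_n$, and specialising to $\psi=\varphi_2$ makes the prefactor $\Re{\langle i\varphi_1,\varphi_2\rangle}=-\Im{\langle H^{-1}\varphi_2,\varphi_2\rangle}$ vanish (since $H^{-1}$ is self-adjoint), so the equation degenerates to a tautology; with $\varphi_1=iH^{-1}\psi$ one gets the nondegenerate closed equation you state, and the ray ODE determines $g_n$ on all of $L_n$ (to integrate it without assuming $g_n\neq 0$, multiply by the reciprocal Gaussian and note the product has vanishing derivative). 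Second, the surjectivity of $\varphi_1\mapsto -iH\varphi_1$ on $L_n$, which you implicitly use to obtain differentiability of $g_n$ in every direction from the integrability hypothesis of Definition \ref{def: KMS cilindrici}, deserves a sentence ($L_n$ is spanned by eigenvectors and is $i$-invariant, so this is immediate). Your ``if'' direction (direct differentiation of $\theta$ and the consistency/integrability checks via Bochner) and the closing appeal to the determination of a cylindrical measure by its marginals are fine.
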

	\begin{proof}
		$(\Rightarrow)$ We assume that $\mu_{*}\in \mathfrak{P}_{cyl}(E)$ satisfies the cylindrical KMS condition. Then, for every $F,\; G\in C_{c,cyl}^{\infty}(\mathcal{H}^{-s})$ such that $F(u) = \varphi(\pi_{n}(u)),\; G(u) = \psi(\pi_{n}(u))$, we have \begin{equation}
		\int_{L_{n}}\{F,G\}(u)\dd\mu_{L_{n}}(u) = \beta \int_{L_{n}} \Re{\langle \nabla F(u),  X_{0}(u) \rangle}\dd\mu_{L_{n}}(u).
	\end{equation}
 This implies that for all finite dimensional subspaces $L_{n}$, the measures $\mu_{L_{n}}\in \mathfrak{P}(L_{n})$ satisfy the $(\beta,X)$-KMS condition. By \cite[Th. 4.2]{Ammari_Sohinger_2023}, we conclude that \begin{equation}\label{eq: weak distribution}
		\dd\mu_{L_{n}}(f) = \frac{e^{-\frac{\beta}{2}\langle f, H f\rangle }\dd\mathcal{L}_{2n}(f)}{\int_{L_{n}}e^{-\frac{\beta}{2}\langle f, H f\rangle }\dd\mathcal{L}_{2n}(f)},
	\end{equation} Where $\mathcal{L}_{2n}$ is the Lebesgue measure of $L_{n}$.  A cylindrical measure on an Hilbert space is fully specified by its weak distribution $\{\mu_{L_{n}}\}_{n\in\mathbb{N}}$ \cite{Skorohod_1974} and by a simple calculation {\begin{equation}
	\hat{\mu}_*(\sum_{k=1}^{n}\alpha_{k}e_{k} +\mu_{k}f_{k}) = \int_{\mathbb{R}^{2n}} e^{i\sum_{k=1}^{n}\alpha_{k}q_{k} + \mu_{k}p_{k}}e^{-\frac{\beta}{2}\sum_{k=1}^{n}\lambda_{k}(q_{k}^{2}+p_{k}^{2})}\dd\mathcal{L}_{2n}(q,p)\prod_{k=1}^{n}\frac{\lambda_{k}\beta}{\pi}  
 =\theta(\sum_{k=1}^{n}\alpha_{k}e_{k} +\mu_{k}f_{k}).
	\end{equation} }This implies that for arbitrary $n\in\mathbb{N}$, the functional $\theta$ restricted to $L_{n}$ induces the Borel probability measure $\mu_{L_{n}}$ and so, it is equal to the characteristic functional of $\mu_{*}$.
 
	$(\Leftarrow)$ if $\mu_{*}$ has as characteristic functional \eqref{eq: fourier di gaussiana}, then its weak distribution is given by \eqref{eq: weak distribution}, which satisfies the KMS condition for measures.
	\end{proof} 
    
    With the previous hypothesis on $H$, the measure which extends the unique cylindrical $(X,\beta)$-KMS $\mu_{*}$ is usually referred to as Gibbs measure \cite{Ammari_Sohinger_2023}. The resemblance with the usual Gibbs distribution $\dd\rho(q,p)$ is clear from equation \eqref{eq: weak distribution} where we identify the classical Hamiltonian $h(q,p) \leftrightarrow h(\alpha, \mu):= \langle \sum_{j=1}^{n}\alpha_{k}e_{k} +\mu_{k}f_{k}, H\sum_{k=1}^{n}\alpha_{k}e_{k} +\mu_{k}f_{k} ) \rangle$.
	
	{\paragraph{States on Weyl $C^*$-algebras:}  The values of the  characteristic functional $\theta(\cdot)$  in \eqref{eq: fourier di gaussiana} coincide with the expectation values of the quasi-free state \begin{equation}\label{eq: quasi free first version}\omega_{0}(W^0(f)):= \exp{-\frac{1}{2\beta}\langle f, H^{-1} f\rangle},\quad f\in E.\end{equation} Moreover, we can immediately verify that $\omega_0$ satisfies a weak KMS condition 
    \begin{proposition}\label{prop: verifica KMS weak}
   Let $\delta_0$ be the pointwise closed, linear, continuous weak derivation defined by \begin{equation}\delta_0(W^0(f)) := i\Phi_0(iHf)W^0(f),\end{equation} with $H>0$ self-adjoint operator, such that there exists an $s>0$ for which\begin{equation}\Tr{H^{-s}}<+\infty.\end{equation} Then the classical quasi-free state $\omega_{0}$ \eqref{eq: quasi free first version} is a $(\delta_{0},\beta)$-weak KMS state.
\end{proposition}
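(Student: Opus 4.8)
The plan is to verify the identity \eqref{eq: KMS twisting} of Definition~\ref{def: classical KMS condition} directly on Weyl generators. First I would check that the hypotheses of that definition are met: since $H^{-1}$ is bounded (the trace-class assumption on $H^{-s}$ forces $\inf\operatorname{spec}(H)>0$), the functional $\omega_0$ in \eqref{eq: quasi free first version} extends by continuity of the sesquilinear form to a state on all of $\mathcal{W}(E,0)$; being quasi-free, $t\mapsto\omega_0(W^0(tf))=\exp(-\tfrac{t^2}{2\beta}\langle f,H^{-1}f\rangle)$ is real-analytic, so its GNS cyclic vector is analytic for the generators $\Phi_{\omega_0}(f)$ and $\omega_0$ is in particular a $C^2$ state in the sense of Remark~\ref{remark: regularity of states}. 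Moreover $\delta_0$ as defined in the statement is, by Propositions~\ref{prop: caratterizzazione derivazioni continue I} and \ref{prop: chiusura pointwise I} and the remark following them, a pointwise closed, continuous, linear weak derivation with associated operator $H=H^*$ and dense domain $E(\delta_0)=D(H)$. Since both sides of \eqref{eq: KMS twisting} are bilinear in $a$ and $b$ and $\Delta(E,0)$ is the linear span of the $W^0(f)$, it suffices to prove the identity for $a=W^0(f)$, $b=W^0(g)$ with $f,g\in D(H)$.

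For the left-hand side, \eqref{eq: Poisson parenthesis} gives $\{W^0(f),W^0(g)\}=\sigma(g,f)W^0(f+g)$, so that $\omega_0(\{W^0(f),W^0(g)\})=\sigma(g,f)\,\omega_0(W^0(f+g))$. For the right-hand side I would use commutativity of $\mathcal{W}(E,0)$ together with $W^0(g)W^0(f)=W^0(f+g)$ to write $W^0(g)\,\delta_0(W^0(f))=i\,\Phi_0(iHf)\,W^0(f+g)$, and then the extension \eqref{eq: estensione stato classico} of the analytic state to field functions. Differentiating the Gaussian and using the symmetry of the real bilinear form $\Re\langle\cdot,H^{-1}\cdot\rangle$ yields, for all $k,\ell\in E$,
\[
\omega_0\big(\Phi_0(k)W^0(\ell)\big)
= -i\,\frac{d}{dt}\Big|_{t=0}\omega_0\big(W^0(tk+\ell)\big)
= \frac{i}{\beta}\,\Re\langle k,H^{-1}\ell\rangle\;\omega_0\big(W^0(\ell)\big).
\]

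The key step is then the evaluation of $\Re\langle iHf,H^{-1}(f+g)\rangle$ for $f\in D(H)$. Using self-adjointness of $H$ on $D(H)$ one has $\langle Hf,H^{-1}(f+g)\rangle=\langle f,f+g\rangle$, and the extra factor $i$ converts the relevant real part into $\Im\langle f,f+g\rangle=\Im\langle f,g\rangle=\sigma(f,g)$ (the norm and symmetric parts drop out). Substituting $k=iHf$, $\ell=f+g$ above, and collecting the prefactor $i$ coming from $\delta_0$ together with the $1/\beta$ from the Gaussian derivative, one obtains $\beta\,\omega_0\big(W^0(g)\delta_0(W^0(f))\big)=\sigma(g,f)\,\omega_0(W^0(f+g))$, which equals the left-hand side; this proves the weak $(\delta_0,\beta)$-KMS property.

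The argument is essentially bookkeeping rather than conceptual, and I do not expect a genuine obstacle. The only points that require care are: the legitimacy of the field-function extension \eqref{eq: estensione stato classico}, which is guaranteed precisely by the analyticity of the quasi-free state; the domain restriction $f\in D(H)$, which is exactly what makes the cancellation $\langle Hf,H^{-1}\,\cdot\,\rangle=\langle f,\cdot\rangle$ legal and pinpoints where the symplectic form emerges; and the consistent tracking of the sign conventions for $\sigma$, $\Phi_0$ and the sesquilinear product, which is where a computational slip is most likely. As a cross-check one could instead deduce the statement from Theorem~\ref{th: KMS cilindrico se e solo se KMS come misura} combined with the equivalence between the weak and cylindrical KMS conditions (Theorem~\ref{th: weak KMS se e solo se KMS cilindrico}), but the direct computation above is shorter and self-contained.
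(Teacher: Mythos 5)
Your proof is correct and follows essentially the same route as the paper's: both reduce to Weyl generators, compute $\omega_0(\Phi_0(h)W^0(k))$ by differentiating the Gaussian, and then substitute $h\to iHf$, $k\to f+g$, using self-adjointness of $H$ to cancel $H$ against $H^{-1}$ and produce $\sigma(g,f)$. The extra preliminary checks you include (regularity of $\omega_0$, well-posedness of $\delta_0$) are consistent with what the paper asserts more briefly.
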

\begin{proof}
    The functional $\omega_{0}$ is quasi-free and in particular of class $C^2$, so, we can test whether it is a $(\delta_{0},\beta)$-weak KMS state. We need to show that the identity \begin{equation}	\label{eq: KMS wanna be}
		\sigma(g,f)\omega_{0}(W^{0}(f+g)) = i\beta\omega_{0}(\Phi_{0}(iHf)W^{0}(f+g)),
	\end{equation} is satisfied for $f\in D(H) = E(\delta_{0})$; then the conclusion follows by linearity. We compute for arbitrary $h,k\in E$
 \begin{align}\label{eq: state on fields}
		\omega_{0}(\Phi_{0}(h)W^{0}(k)) &= -i\frac{\dd}{\dd t}\omega_{0}(W^{0}(k+th))|_{t = 0}\nonumber\\
				&= \frac{i}{\beta}\Re{\langle k, H^{-1} h\rangle}\omega_{0}(W^{0}(k)).
	\end{align} By substituting $h\rightarrow iHf$ and $k\rightarrow f+g$, for $f,g\in D(H)$, we obtain \begin{align}\label{eq: verifica della condizione KMS su volume finito}
		i\beta\omega_{0}(\Phi_{0}(iHf)W^{0}(f+g)) &= -\Re{\langle f+g, if\rangle}\omega_{0}(W^{0}(f+g)) \nonumber\\ & = \sigma(g,f)\omega_{0}(W^{0}(f+g)).
	\end{align}
\end{proof}}
    
    We can now formulate the main theorem of this section \begin{theorem}\label{th: weak KMS se e solo se KMS cilindrico}
	Let us consider the dynamics induced on the Weyl $C^*$-algebra $\mathcal{W}(E,0)$ by the derivation $\delta_0$ as in Prop. \ref{prop: verifica KMS weak}. If  $\omega_{0}\in \mathcal{S}(\mathcal{W}(E,0))$ is a $(\delta_{0},\beta)$-weak KMS state, then there exists a Borel probability measure $\mu_{0}\in\mathfrak{P}(\mathcal{H}^{-s})$  satisfying the KMS condition \ref{def: KMS per misure}  with Fourier transform given by \begin{equation}
		\hat{\mu}_0(f)= \omega_{0}(W^{0}(f)),\quad f\in \mathcal{H}^{s}.
	\end{equation} Viceversa, if $\mu_{0}$ is a $(X,\beta)$-KMS state, the algebraic state obtained by extending \begin{equation}
		\omega_{0}(W^{0}(f)): = \hat{\mu}_0(f),\quad f\in \mathcal{H}^{s}
	\end{equation} is a $(\delta_{0},\beta)$-weak KMS state.
	\end{theorem}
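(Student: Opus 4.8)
\emph{Proof idea.} The two implications are bridged by the notion of cylindrical KMS state (Def.~\ref{def: KMS cilindrici}) together with the equivalences already in place: Proposition~\ref{prop: caratterizzazioni equivalenti della condizione KMS cilindrica}, Theorem~\ref{th: KMS cilindrico se e solo se KMS come misura} and the verification carried out in Proposition~\ref{prop: verifica KMS weak}. For the direct implication I would first attach to $\omega_0$ a cylindrical measure $\mu_*\in\mathfrak{P}_{cyl}(E)$ with characteristic functional $\hat\mu_*(f)=\omega_0(W^0(f))$. Since $\omega_0$ is a $C^2$ state, its GNS representation is regular, so its restriction to each $\mathcal{W}(L_n,0)$ corresponds, via Bochner's theorem on $\mathbb{R}^{2n}$, to a probability measure $\mu_{L_n}$ with $\hat\mu_{L_n}=\omega_0(W^0(\cdot))|_{L_n}$; the Weyl relations force $\{\mu_{L_n}\}_{n}$ to be a consistent family and hence to define $\mu_*$. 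The $C^2$ hypothesis is used once more here: on $L_n$ the field $X(u)=iHu$ acts through the finitely many eigenvalues of $H$, so $\langle\varphi,X(\cdot)\rangle$ is a linear functional on $L_n$, and finiteness of the first (indeed second) moments of $\mu_{L_n}$ — equivalent to $C^2$-regularity, cf.\ Rmk.~\ref{remark: regularity of states} — gives exactly the integrability demanded in Def.~\ref{def: KMS cilindrici}.

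The heart of the matter is the translation of the algebraic weak KMS identity into the cylindrical one. Differentiating the characteristic functional at $t=0$ (legitimate by $C^2$-regularity) yields the dictionary
\[
  \omega_0\big(\Phi_0(h)W^0(k)\big)=\int \Phi_0(h)[u]\, e^{i\langle k,u\rangle}\,\dd\mu_*(u),\qquad h,k\in E .
\]
Inserting $a=W^0(f)$, $b=W^0(g)$ into \eqref{eq: KMS twisting}, using commutativity of $\mathcal{W}(E,0)$, the Poisson bracket $\{W^0(f),W^0(g)\}=\sigma(g,f)W^0(f+g)$, and $\delta_0(W^0(f))=i\Phi_0(iHf)W^0(f)$, the weak KMS identity becomes an equation between the two sides of this dictionary; restricting $f,g$ to the dense span of the eigenvectors of $H$ and matching the real symplectic pairing $\sigma(g,f)$ with $\Re{\langle i\varphi_1,\varphi_2\rangle}$ and the field pairing with the distributional evaluation $\langle\varphi_1,X(u)\rangle$, this is precisely Eq.~\eqref{eq: second KMS cylinders} of Proposition~\ref{prop: caratterizzazioni equivalenti della condizione KMS cilindrica} with $\varphi_2=f+g$ and $\varphi_1$ determined by $f$. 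Hence $\mu_*$ is a $(X,\beta)$-cylindrical KMS state, and Theorem~\ref{th: KMS cilindrico se e solo se KMS come misura} forces $\hat\mu_*=\theta$ with $\theta$ as in \eqref{eq: fourier di gaussiana}. Since $H^{-s}$ is trace class, the cylindrical measure attached to $\theta$ radonifies on $\mathcal{H}^{-s}$ (Minlos--Sazonov-type theorem, see \cite{Ammari_Sohinger_2023,Huang_Yan_2000,Skorohod_1974}) to a Borel probability measure $\mu_0\in\mathfrak{P}(\mathcal{H}^{-s})$ with $\hat\mu_0(f)=\theta(f)=\omega_0(W^0(f))$ for $f\in\mathcal{H}^s$; its finite-dimensional marginals are again the Gaussians \eqref{eq: weak distribution}, which satisfy the finite-dimensional KMS condition by \cite[Th.~4.2]{Ammari_Sohinger_2023}, and since Def.~\ref{def: KMS per misure} tests only cylindrical functions, $\mu_0$ is a $(X,\beta)$-KMS measure (equivalently, invoke \cite[Th.~4.8]{Ammari_Sohinger_2023}).

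For the converse, given a $(X,\beta)$-KMS measure $\mu_0\in\mathfrak{P}(\mathcal{H}^{-s})$, the uniqueness part of \cite[Th.~4.8]{Ammari_Sohinger_2023} (or Theorem~\ref{th: KMS cilindrico se e solo se KMS come misura} applied to the cylindrical restriction of $\mu_0$) forces $\hat\mu_0(f)=\theta(f)$ for $f\in\mathcal{H}^s$. One then sets $\omega_0(W^0(f)):=\theta(f)$ and extends it to all $f\in E$ by the same Gaussian formula; by \cite[Th.~3.4]{Petz_90} this is a well-defined quasi-free — hence $C^2$ — state on $\mathcal{W}(E,0)$, namely the state \eqref{eq: quasi free first version}, and Proposition~\ref{prop: verifica KMS weak} guarantees that it satisfies the weak $(\delta_0,\beta)$-KMS condition. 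The step I expect to be the main obstacle is the dictionary computation: carefully matching the algebraic identity \eqref{eq: KMS twisting} with \eqref{eq: second KMS cylinders} while keeping track of the distributional meaning of the vector field $X(u)=iHu$ and of the three nested spaces $\mathcal{H}^s\subset E\subset\mathcal{H}^{-s}$, and rigorously justifying the differentiation under the integral sign — it is exactly there that the $C^2$-regularity of $\omega_0$ enters and cannot be weakened.
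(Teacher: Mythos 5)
Your proposal is correct and follows essentially the same route as the paper: Bochner's theorem to attach a cylindrical measure to $\omega_0$, $C^2$-regularity for the moment/integrability conditions, the differentiation dictionary translating the weak KMS identity \eqref{eq: KMS twisting} into the equivalent cylindrical condition \eqref{eq: second KMS cylinders}, uniqueness via Theorem~\ref{th: KMS cilindrico se e solo se KMS come misura} and radonification, and for the converse the Gaussian quasi-free extension together with Proposition~\ref{prop: verifica KMS weak}. The step you flag as the main obstacle is indeed where the paper spends its effort, and your handling of it matches the paper's computation.
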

	\begin{proof}
	$(\Rightarrow)$ Let $\omega_{0}$ be a $(\delta_{0},\beta)$- weak KMS state and define $\theta(f):= \omega_{0}(W^{0}(f)),\; f\in E$. Then, since $\omega_{0}$ is a $C^{2}$ state on $\mathcal{W}(E,0)$, it is regular and we know that $\theta: E\rightarrow \mathbb{C}$ is a positive definite function with $\theta(0) = 1$, which in addition is continuous on any finite dimensional subspace $E_{n}\subset E$. Bochner theorem (see \cite[Th 1.8 \& Th. 1.9]{Falconi_2018} and references therein) ensures that these are sufficient conditions for $\theta$ to completely specify a cylindrical measure $\mu_{0,*}$ on $E$. The same cylindrical measure  is also characterized by its weak distribution $\{\mu_{0,L_{n}}\in \mathfrak{P}(L_{n})\}$. The measures $\mu_{0,L_{n}}$ have Fourier transforms given by\begin{equation}\label{eq: finite KMS measures}
		\hat{\mu}_{0,L_{n}}(f) = \omega_{0}(W^{0}(f)),\quad f\in L_{n}.
	\end{equation}
	In addition, since the state $\omega_{0}$ is $C^{2}$, it follows that for every fixed $n\in\mathbb{N}$ the characteristic functional $\mathbb{R}^{2n}\ni (\lambda,\mu) \rightarrow \theta(\sum_{i=1}^{n}\lambda_{i}e_{i} + \mu_{i}f_{i})\in \mathbb{C}$ is twice differentiable.
	This implies \cite[Th. 2.3.2]{Lukacs_70} that the measures $\{\mu_{L_{n}}\}_{n\in\mathbb{N}}$ have finite first moments, i.e. $\langle\varphi, \cdot\rangle$ is $\mu_{L_{n}}$-integrable for every $\varphi\in L_{n}$ and for every $n\in\mathbb{N}$.
	Now, we show that $\mu_{0,*}$ is a cylindrical KMS state. Let us take $\varphi_{1},\varphi_{2}\in L_n$. Then, we can compute
 {\begin{align}
 \int \langle \varphi_{1}, X(u)\rangle e^{i\langle \varphi_{2}, u\rangle}\dd \mu_{0,*}(u)
  &= -i\frac{\dd}{\dd t} \omega_{0}(W^{0}(\varphi_{2}-itH\varphi_{1}))|_{t=0}\nonumber\\ &= -\omega_{0}(\Phi_{0}(iH\varphi_{1})W^{0}(\varphi_{2}))\nonumber\\
		&=	\frac{i}{\beta}\sigma(\varphi_{1},\varphi_{2})\omega_{0}(W^{0}(\varphi_{2})) = \frac{i}{\beta}\langle i\varphi_{1},\varphi_{2}\rangle_{\mathbb{R}}\int_{L_{n}} e^{i\langle \varphi_{2},u\rangle} \dd\mu_{0,*}(u).
	\end{align}}
	
	According to \ref{prop: caratterizzazioni equivalenti della condizione KMS cilindrica}, the previous computation proves that $\mu_{0,*}$ is a $(X,\beta)$-cylindrical KMS. However, we know from theorem \ref{th: KMS cilindrico se e solo se KMS come misura} that the only possible KMS cylindrical state is the Gaussian one, with characteristic functional\begin{equation}
		\theta(f) = e^{-\frac{1}{2\beta}\langle f, H^{-1} f\rangle},\quad f\in E.
	\end{equation} This defines uniquely a Borel probability measure $\mu_{0}\in \mathfrak{P}(\mathcal{H}^{-s})$, which is the unique \cite[Th. 4.8]{Ammari_Sohinger_2023} $(X,\beta)$-KMS state and has Fourier transform given by \begin{equation}\label{eq: gaussian characteristic}
		\hat{\mu}_0(f) = e^{-\frac{1}{2\beta}\langle f, H^{-1}f\rangle},\quad f\in \mathcal{H}^{s}.
	\end{equation}
	
	$(\Leftarrow)$ Conversely, assume that $\mu_{0}\in\mathfrak{P}(\mathcal{H}^{-s})$ is a $(\beta,X_{0})$-KMS state.
	Then, by uniqueness, its characteristic functional is given by \eqref{eq: gaussian characteristic}.
	Since $H^{-1}$ is bounded, $\hat{\mu}_0$ is norm-continuous and defined on a dense subspace of $E$, so, it can be extended uniquely in a continuous way to all of $E$. By defining\begin{equation}\label{eq: KMS algebraic from masure}
		\omega_{0}(W^{0}(f)) := e^{-\frac{1}{2\beta}\langle f, H^{-1} f\rangle},\quad f\in E,
	\end{equation} and extending the action of $\omega_{0}$ by linearity and continuity, we obtain a quasi-free state on $\mathcal{W}(E,0)$. We have already verified that this state satisfies the $(\delta_{0},\beta)$-weak KMS condition with equation \eqref{eq: verifica della condizione KMS su volume finito}.
	\end{proof} 
	
	As an important corollary we obtain
	\begin{corollary}(\textbf{Uniqueness of finite volume weak KMS states})\label{corollary: unicità dei KMS su volume finito}
			Let $\delta_{0}$ be the pointwise closed, linear, continuous weak derivation defined by $\delta_{0}(W^{0}(f)):= i\Phi_{0}(iHf)W^{0}(f)$ with $H$ strictly positive, self-adjoint, with compact resolvent such that we can find an $s\geq 0$ for which \begin{equation}
			\Tr{H^{-s}}< +\infty.
		\end{equation} Then, there exists a unique $(\delta_{0},\beta)$-weak KMS state defined by \begin{equation}
			\omega_{0}(W^{0}(f)) = e^{-\frac{1}{2\beta}\langle f, H^{-1} f\rangle}, \quad f\in E.
		\end{equation}
		In particular the state $\omega^{L,\mu}_{0}$ defined in \eqref{eq: ansatz classico per stati KMS} on $\mathcal{W}(E_{L},0)$ is the unique $(\delta_{0}^{\mu,L},\beta)$-weak KMS state.
	\end{corollary}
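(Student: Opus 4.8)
The plan is to obtain the corollary as a straightforward synthesis of the three results just proved, with no genuinely new computation. First I would dispatch \textbf{existence}: by Proposition~\ref{prop: verifica KMS weak}, since $H>0$ is self-adjoint and admits some $s\ge 0$ with $\Tr{H^{-s}}<+\infty$, the quasi-free functional $\omega_0(W^0(f))=\exp(-\tfrac{1}{2\beta}\langle f,H^{-1}f\rangle)$ is a bona fide $(\delta_0,\beta)$-weak KMS state. Here boundedness of $H^{-1}$ (which follows from $H$ being strictly positive with discrete spectrum) guarantees that $f\mapsto\langle f,H^{-1}f\rangle$ is a continuous bilinear form on all of $E$, so that the prescription extends by linearity and continuity to a $C^2$ state on $\mathcal{W}(E,0)$.

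For \textbf{uniqueness}, I would take an arbitrary $(\delta_0,\beta)$-weak KMS state $\omega_0\in\mathcal{S}(\mathcal{W}(E,0))$; by Definition~\ref{def: classical KMS condition} it is automatically a $C^2$ state, so the hypotheses of the forward implication of Theorem~\ref{th: weak KMS se e solo se KMS cilindrico} are met. That theorem (together with Theorem~\ref{th: KMS cilindrico se e solo se KMS come misura} and \cite[Th.~4.8]{Ammari_Sohinger_2023}, which are used inside its proof) forces the characteristic functional $f\mapsto\omega_0(W^0(f))$ to coincide, on all of $E$, with the Gaussian $\exp(-\tfrac{1}{2\beta}\langle f,H^{-1}f\rangle)$. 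Hence $\omega_0$ agrees with the state produced in the existence step on every Weyl generator $W^0(f)$; since states are linear on the dense $^*$-algebra $\Delta(E,0)$ and norm-continuous, the two coincide on $\mathcal{W}(E,0)=\overline{\Delta(E,0)}$.

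The last step is to verify that the concrete finite-volume operator meets the abstract hypotheses. I would set $E=E_L=\mathbb{L}^2(\Lambda_L)$ and $H=K_{\mu,L}=H_L-\mu I$ with $\mu<E_0(L)$, so that $\delta_0^{\mu,L}$ is the weak derivation generated by $\tau_{0,t}^{\mu,L}(W^0(f))=W^0(e^{iK_{\mu,L}t}f)$. Then $K_{\mu,L}\ge(E_0(L)-\mu)I>0$ is strictly positive and self-adjoint; the Dirichlet Laplacian $H_L$ on the bounded box $\Lambda_L$ has purely discrete spectrum and hence compact resolvent; and from the explicit eigenvalues in \eqref{eq: autovettori e energie del laplaciano}, which grow like $|\underline{n}|^2$, one obtains $\Tr{K_{\mu,L}^{-s}}<+\infty$ for every $s>\nu/2$. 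The general statement then applies verbatim and identifies the unique $(\delta_0^{\mu,L},\beta)$-weak KMS state with $\omega_0^{L,\mu}(W^0(f))=\exp(-\tfrac{1}{2\beta}\langle f,(H_L-\mu I)^{-1}f\rangle)$, i.e. exactly the ansatz \eqref{eq: ansatz classico per stati KMS}.

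I do not anticipate a real obstacle here, since all the analytic content already resides in Theorems~\ref{th: weak KMS se e solo se KMS cilindrico} and \ref{th: KMS cilindrico se e solo se KMS come misura}. The only points requiring care are bookkeeping: checking that the derivation $\delta_0$ in the corollary's hypotheses is literally the one covered both by Proposition~\ref{prop: verifica KMS weak} and by Theorem~\ref{th: weak KMS se e solo se KMS cilindrico} (it is, as all three are phrased for $\delta_0(W^0(f))=i\Phi_0(iHf)W^0(f)$ under the same assumptions on $H$), and confirming the trace-class exponent $s>\nu/2$ in the explicit box geometry.
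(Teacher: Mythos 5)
Your proposal is correct and follows exactly the route the paper intends: the corollary is stated without a separate proof precisely because existence is Proposition~\ref{prop: verifica KMS weak}, uniqueness is the forward implication of Theorem~\ref{th: weak KMS se e solo se KMS cilindrico} (which pins the characteristic functional to the Gaussian one and hence the state on all of $\mathcal{W}(E,0)$ by linearity and continuity), and the finite-volume case is the specialization $H=H_L-\mu I$, which is strictly positive with compact resolvent and satisfies $\Tr{(H_L-\mu I)^{-s}}<+\infty$ for $s>\nu/2$ by the explicit eigenvalue asymptotics in \eqref{eq: autovettori e energie del laplaciano}. Your bookkeeping checks (the $C^2$ regularity built into Definition~\ref{def: classical KMS condition}, and the identification of the derivation across the three cited results) are the right ones and all go through.
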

	
	Thanks to the identification between Fourier transforms of Gibbs measures and weak KMS-states, we have a nice interpretation of $\omega_{0}(W^{0}(f))$ as the mean value of the (extended) function $W^{0}(f) \colon \mathcal{H}^{s}\rightarrow \mathbb{C},\; f\in \mathcal{H}^{s}$ in the Gibbs state\begin{equation}
	\omega_{0}(W^{0}(f)) = \int_{\mathcal{H}^{-s}}W^{0}(f)[u] \dd \mu_{0} (u). 
	\end{equation}
	This motivates the definition as \textit{Gibbs states}, for the finite volume weak KMS states $\omega^{L,\mu}_{0}$.

	\subsection{Classical Limits}\label{sec: limiti classici degli stati} 
    {In this section, we construct classical representations of the CCR by means of  infinite-density equilibrium states. Our approach relies on the semi-classical formalism developed in Sec.\ref{sec: dequantization}. Specifically, we consider the Araki-Woods and Cannon equilibrium states, $\omega^{L,\mu}_h,\; \omega^{\mu}_h$ and $\omega^{\overline{\rho},0}_h$, introduced in \ref{sec: condensazione quantistica} and we compose them with the quantization map $Q_h$. The semi-classical parameter is connected to the critical density \eqref{eq: critical density} of the state as \begin{equation}\label{eq: connection to the critical density}
		h/h_{0} = (\rho_{c}(\beta h_{0})/\rho_{c}(\beta h))^{2/3}, 
	\end{equation} where $h_0$ is a given reference scale. Thus, sending $h\to0^+$ corresponds to increasing the critical density $\rho_c(\beta h) \to +\infty$. The limiting states all satisfy the weak KMS condition \ref{def: classical KMS condition} and yield  commutative, infinite-volume representations of the CCR without requiring the finiteness of the local density or the existence of a local number operator $N_{\Lambda_L}$. A remarkable property is that with this procedure we are also able to construct a classical state $\omega^{\alpha,0}_0$ describing a macroscopic occupation of the ground-state sector of the theory.}

	\paragraph{No macroscopic occupation of the ground state: } We start with the finite volume Gibbs states, $(\omega^{L,\mu}_{h})_{h\in(0,+\infty)}$ \eqref{eq: Gibbs state} for $\mu < E_0(L)$.
    Thanks to the explicit expressions of states and $Q_h$, we can compute the limit for $h\to 0^+$ of the state $\omega_h^{L,\mu}\circ Q_h $. Given $f\in E_L$ we have
	\begin{equation}
		\omega^{L,\mu}_{h}\circ Q_{h}(W^{0}(f)) = e^{-\frac{h}{4}\norm{f}^{2}}\exp{-\frac{1}{4}\sum_{\underline{n}\in \mathbb{N}^{\nu}} \langle f, P_{\underline{n}}(L) f\rangle h \frac{1+z_{h}e^{-\beta h E_{\underline{n}}(L)}}{1-z_{h}e^{-\beta h E_{\underline{n}}(L)} }}.
	\end{equation} Now, for small $h$ and $\mu< E_{0}(L)$, we have  \begin{equation}\label{eq: stima nel discreto}
	\langle f, P_{\underline{n}}(L) f\rangle h \frac{1+z_{h}e^{-\beta h E_{\underline{n}}(L)}}{1-z_{h}e^{-\beta h E_{\underline{n}}(L)} } \leq \langle f, P_{\underline{n}}(L) f\rangle \frac{2}{E_{\underline{n} }(L) -\mu },
	\end{equation}and the term on the right is summable. So, by dominated convergence\begin{equation}
	\lim_{h\rightarrow 0^{+}} \omega^{L,\mu}_{h}\circ Q_{h}(W^{0}(f)) = \exp{-\frac{1}{2\beta}\langle f, (H_{L}- \mu I)^{-1} f\rangle} = \omega^{L,\mu}_{0}(W^{0}(f)),
	\end{equation} where $\omega_0^{L,\mu}$ is the state introduced in Eq. \eqref{eq: ansatz classico per stati KMS}. By continuity and linearity arguments, we deduce that the state $\omega^{L,\mu}_h\circ Q_h$ converges in the weak$^*$-topology, i.e. \begin{equation}
	\omega^{L,\mu}_{h}\circ Q_{h}(c)\rightarrow \omega^{L,\mu}_{0}(c),\quad \text{for all } c\in \mathcal{W}(E_{L},0).
	\end{equation} 
    By Corollary \ref{corollary: unicità dei KMS su volume finito}, we know that $\omega_0^{L,\mu}$ is the unique state satisfying the weak KMS condition at temperature $\beta$ for the dynamics implemented by $H_L-\mu I$.
    
    The argument remains unaltered for the infinite volume states $\{\omega^{\mu}_{h}\}_{h\in (0,+\infty)}$ \eqref{eq: quantum Gibbs con potenziale chimico fissato} with $\mu<0$, the only difference being that the summation on eigenvalues becomes an integral
	\begin{equation}
		\omega^{\mu}_{h}\circ Q_{h}(W^{0}(f))  = e^{-\frac{h}{4}\norm{f}^{2}}\exp{-\frac{1}{4}\int_{\mathbb{R}^{\nu}}\frac{\dd^{\nu}p}{(2\pi)^{\nu}}h\frac{1+ z_{h}e^{-\beta h p^{2}/2}}{1-z_{h}e^{-\beta h p^{2}/2}}|\hat{f}(p)|^{2}},\quad f\in E.
	\end{equation}
  With the continuum analogous of estimate \eqref{eq: stima nel discreto}
	\begin{equation}
	h\frac{1+ z_{h}e^{-\beta h p^{2}/2}}{1-z_{h}e^{-\beta h p^{2}/2}}|\hat{f}(p)|^{2}	\leq |\hat{f}(p)|^{2}	\frac{1}{p^{2}/2 -\mu},
	\end{equation} we can compute by dominated convergence the limit \begin{equation}
	\lim_{h\rightarrow 0^{+}}\omega^{\mu}_{h}\circ Q_{h}(c) =\exp{-\frac{1}{2\beta}\langle f, (H-\mu I)^{-1} f\rangle}= :\omega^{\mu}_{0}(c),\quad \forall\; c\in \mathcal{W}(E,0).
	\end{equation} The same algebraic computation of Prop. \ref{prop: verifica KMS weak} shows that $\omega^\mu_0$ satisfies the weak KMS condition at inverse temperature $\beta$ for \begin{equation}
	    \delta_0^\mu(W^0(f)) = i\Phi_0(i(H-\mu I)f)W^0(f),\quad f\in D(H).
	\end{equation}

    \paragraph{Macroscopic occupation of the ground-state:} In what follows we will assume $\nu\geq3$ and work directly with the $C^*$-algebras $(\mathcal{W}(\mathcal{S}(\mathbb{R}^{\nu}), h\sigma))_{h\in[0,+\infty)}$.
    The quantization map $Q_{h}$ on $\mathcal{W}(\mathcal{S}(\mathbb{R}^{\nu}),0)$ is just the restriction of the one for $\mathcal{W}(E,0)$. 
	
	 {We are interested in the weak$^*$-limit points of the states  $\omega_{h}^{\overline{\rho},0}\circ Q_{h}$, where $\omega^{\overline{\rho},0}_{h}$ is the condensate state defined in Eq. \eqref{eq: limit state se densita sopra quella critica}. These are KMS states for the dynamics \begin{equation}
	     \tau^0_{h,t}(W^h(f)) = W^h(e^{itH}f),
	 \end{equation} thus, the relevant classical dynamics is given by the weak derivations \begin{equation}
	     \delta^0_0(W^0(f)):= i\Phi_0(iH f)W^0(f),\quad  f\in \mathcal{S}(\mathbb{R}^\nu),
	 \end{equation} which is linear, continuous and pointwise closed.}
     
     As already noted in Eq. \eqref{eq: connection to the critical density} \begin{equation}
		\rho_{c}(\beta h) = \frac{1}{h^{\nu/2}}\int_{\mathbb{R}^{\nu}}\frac{\dd^{\nu }p }{(2\pi)^{\nu}} e^{-\beta p^{2}/2}(1- e^{-\beta  p^{2}/2}) \to +\infty,\quad \text{as } h\to 0^+.
	\end{equation}
  Hence, there exists an $h_0$ such that for $h< h_0$,  $\overline{\rho}<\rho_c(\beta h)$ and we cannot perform a semi-classical limit of  $\omega_{h}^{\overline{\rho},0}\circ Q_{h}$ simply because we cannot define a condensate state $\omega_{h}^{\overline{\rho},0}$ for $h<h_0$. {Physically, if we increase the critical density of the gas, the condensate fraction decreases down to $0$. To obtain a macroscopic occupation of the ground state, we modify the net by allowing the density $\overline{\rho}$ to depend on $h$ i.e. $\overline{\rho}\rightarrow \overline{\rho}(h)$ in such a way that \begin{equation}\overline{\rho}(h)> \rho_{c}(\beta h),\quad \text{for all }h\in(0,+\infty). \end{equation} The asymptotic behavior of the construction is then governed by the limit (which we assume exists, possibly after passing to a subsequence) \begin{equation}\label{eq: insospettabile densità}
	\lim_{h\rightarrow 0}h(\overline{\rho}(h)-\rho_{c}(\beta h)) =: \alpha \in [0,+\infty].
	\end{equation}
    \paragraph{Case $\alpha \in [0,+\infty)$:} 
    In this limit, the condensate term converges to\begin{equation}
	\lim_{h\to 0^+}\exp{-\frac{h}{4}2^{\nu+1}(\overline{\rho}(h) - \rho_{c}(\beta h))|\int_{\mathbb{R}^{\nu}}\dd^{\nu} xf(x)|^{2}} = \exp{-\frac{1}{2}2^{\nu}\alpha|\int_{\mathbb{R}^{\nu}}\dd^{\nu}x f(x)|^{2}}.
	\end{equation} This exponential term corresponds to a quasi-free state, where the quadratic form at the exponent is given by projection $\left| 1 \rangle \langle 1 \right|$ onto the infinite volume ground state\footnote{{By $|1\rangle$ we mean the effective wave-function given by the constant $1$. Note that $\left| 1 \rangle \langle 1 \right|$ is well defined as a quadratic form on $\mathcal{S}(\mathbb{R}^\nu)$, but the vector $|1\rangle$ is not in $\mathbb{L}^2(\mathbb{R}^\nu)$ and the form is not closable.}}, i.e. : 
    \begin{equation}
	    \exp{-\frac{1}{2}2^{\nu}\alpha|\int_{\mathbb{R}^{\nu}}\dd^{\nu}x f(x)|^{2}} = \exp{-\frac{1}{2}2^{\nu}\alpha\langle f,\left| 1 \rangle \langle 1 \right| f\rangle }
        \end{equation}}
For the thermal background term, since $\nu\geq 3$ and $\mathcal{F}(\mathcal{S}(\mathbb{R}^{\nu}))= \mathcal{S}(\mathbb{R}^{\nu})$, the integral\begin{equation}
	h\langle f, (I +e^{-\beta h H})(I - e^{-\beta h H})^{-1} f\rangle =h \int_{\mathbb{R}^{\nu}}\frac{\dd^{\nu} p}{(2\pi)^{\nu}}|\hat{f}(p)|^{2}\frac{1+e^{-\beta h p^{2}/2}}{1- e^{-\beta h p^{2}/2}},\quad f\in\mathcal{S}(\mathbb{R}^{\nu}),
	\end{equation} is finite. Moreover, the integrand can be bounded from above, as \begin{equation}\label{eq: dominated convergence per transizioni di fase}
	h|\hat{f}(p)|^{2}\frac{1+e^{-\beta h p^{2}/2}}{1-e^{-\beta h p^{2}/2}}\leq \frac{M}{(1+p^{2})^{n}(1-e^{-\beta p^{2}/2})},
	\end{equation} for some constant $M>0$ and $n\in\mathbb{N}$ arbitrarily large. The function to the right-hand side of \eqref{eq: dominated convergence per transizioni di fase} is integrable and we can exploit dominated convergence to obtain \begin{equation}
	\lim_{h\rightarrow 0^{+}}h\langle f, (I +e^{-\beta h H})(I - e^{-\beta h H})^{-1} f\rangle  = 2\langle f, (\beta H)^{-1} f\rangle = \frac{2}{\beta}\int_{\mathbb{R}^{\nu}}\frac{\dd^{\nu} p}{(2\pi)^{\nu}}\frac{|\hat{f}(p)|^{2}}{p^{2}}.
 	\end{equation}
In conclusion, we have
	\begin{equation}\label{eq: classical KMS post transition}
	\omega^{\overline{\rho}(h),0}_{h}\circ Q_{h}(W^{0}(f))
	\xrightarrow{h\rightarrow 0}
	\exp{-\frac{1}{2}(\langle f, (\beta H)^{-1} f\rangle + 2^{\nu} \alpha\Big|\int_{\mathbb{R}^{\nu}}\dd^{\nu} x f(x)\Big|^{2})}
	=:\omega^{\alpha,0}_{0}(W^{0}(f)).
	\end{equation}
	 By linearity, density and continuity we obtain convergence for every element of $\mathcal{W}(\mathcal{S}(\mathbb{R}^\nu),0)$. We proceed to verify the $(\delta^{0}_{0}, \beta)$-weak KMS condition on $\weylSO$. As in Section \ref{sec: sistemi a volume finito}, we are interested in the value of $\omega^{\alpha}_{0}(\Phi_{0}(h)W^{0}(k))$, for generic $h,k \in \mathcal{S}(\mathbb{R}^{\nu})$. This can be computed as in equation \eqref{eq: state on fields}
		\begin{align}
		\omega^{\alpha}_{0}(\Phi_{0}(k)W^{0}(h)) &= -i\frac{\dd}{\dd t}\omega^{\alpha}_{0}(W^{0}(h+tk))|_{t=0}\nonumber\\
		&=\frac{i}{\beta}\Re{\langle k, H^{-1} h\rangle}\omega^{\alpha}_{0}(W^{0}(k)) \nonumber \\&+ \frac{i}{\beta}\Re{2^{\nu}\alpha\int\dd^{\nu}x \overline{k}(x)\int\dd^{\nu}x h(x)}\omega^{\alpha}_{0}(W^{0}(k)).
		\end{align} By making the substitutions $k \rightarrow iHf$ and $h\rightarrow f+g$ we obtain\begin{multline}\label{eq: derivazione e stato transizione}
		\omega^{\alpha}_{0}(W^{0}(g)\delta_{0}^{0}(W^{0}(f)))\\ = \frac{1}{\beta}[ \sigma(g,f) +\Im{\int\dd^{\nu}x (\overline{f(x)}+\overline{g(x)})\int\dd^{\nu} x (Hf)(x)}]\omega^{\alpha}_{0} (W^{0}(f+g)).
		\end{multline} Now, as $Hf\in \mathcal{S}(\mathbb{R}^{\nu})$, the second integral in \ref{eq: derivazione e stato transizione} is equal to $\mathcal{F}(Hf)(0)$, but as \begin{equation}
		(Hf)(x) = \int\frac{\dd^{\nu} p}{(2\pi)^{\nu/2}} e^{-ipx}\frac{p^{2}}{2}\hat{f}(p),
		\end{equation} it follows that $\mathcal{F}(Hf)(0) = 0$. This result proves the equality \begin{equation}
		\omega^{\alpha}_{0}(\{W^{0}(f),W^{0}(g)\}) = \beta\omega^{\alpha}_{0}(W^{0}(g)\delta^{0}_{0}(W^{0}(f))).
		\end{equation} for every $f,g\in \mathcal{S}(\mathbb{R}^{\nu})$, independently from $\alpha$.
    In short, we have proved the following theorem
	
	 \begin{theorem}\label{th: Classical limit of the phase transition states} Let the dynamics of the Weyl algebra $\weylSH$ be implemented by the $^*$-automorphism $\tau^{0}_{ h}(W^{ h}(f)) = W^{ h}(e^{iHt}f)$ for all $f\in\mathcal{S}(\mathbb{R}^{\nu})$, where $H = -\Delta/2$. Then, 
		if $\overline{\rho}( h)> \rho_{c}(\beta h)$  for all $ h > 0$ and $ h(\overline{\rho}( h) -\rho_{c}(\beta, h))\xrightarrow{h\rightarrow 0^+} \alpha \geq 0$,  the net of quantum $(\tau^{ h}_{0},\beta h)$-KMS states $\omega^{\overline{\rho}(h),0}_{ h}$ with \begin{equation}
			\omega^{\overline{\rho}(h),0}_{ h}(W^{ h}(f)) = \exp{-\frac{h}{4}\bigg{(}\langle f, (I+e^{-\beta h H})(I- e^{-\beta hH})^{-1} f\rangle + 2^{\nu} (\overline{\rho}( h)-\rho_{c}(\beta h))|\int_{\mathbb{R}^{\nu}}\dd^{\nu} x f(x)|^{2}\bigg{)}} 
		\end{equation} converges {for $ h\rightarrow 0^{+}$ to the classical state $\omega^{\alpha,0}_{0}$:\begin{equation}
		\omega_{0}^{\alpha}(W^{0}(f)) = \exp{-\frac{1}{2}\bigg{(}\langle f (\beta H )^{-1} f\rangle + 2^{\nu}\alpha|\int_{\mathbb{R}^{\nu}}\dd^{\nu}x f(x)|^{2}\bigg{)}}
		\end{equation} in the sense that \begin{equation}
			\omega^{\overline{\rho}(h),0}_{ h}\circ Q_{ h}(c)\rightarrow \omega^{\alpha,0}_{0}(c),\quad  \text{for all }c \in \weylSO.
		\end{equation} }The weak$^*$-limit points $\omega^{\alpha,0}_{0}$ are labeled by the  parameter $\alpha\geq 0$ and they all satisfy the $(\delta_{0}^{0},\beta)$-weak KMS condition.
	\end{theorem}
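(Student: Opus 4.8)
The plan is to establish the two claims in turn: the weak$^{*}$-convergence $\omega^{\overline{\rho}(h),0}_{h}\circ Q_{h}\to\omega^{\alpha,0}_{0}$ on $\weylSO$, and then the weak $(\delta^{0}_{0},\beta)$-KMS property of each limit state. Under the standing hypothesis $\overline{\rho}(h)>\rho_{c}(\beta h)$ for all $h>0$ the condensate state $\omega^{\overline{\rho}(h),0}_{h}$ of Eq.~\eqref{eq: limit state se densita sopra quella critica} is defined for every $h\in(0,+\infty)$, so the net makes sense. For the convergence I would start from the explicit form of the composition: by Def.~\ref{def: abstract quantization}, $Q_{h}(W^{0}(f))=e^{-\frac{h}{4}\norm{f}^{2}}W^{h}(f)$, so for $f\in\mathcal{S}(\mathbb{R}^{\nu})$ the exponent of $\omega^{\overline{\rho}(h),0}_{h}\circ Q_{h}(W^{0}(f))$ is the sum of a quantization factor $-\tfrac{h}{4}\norm{f}^{2}$, a thermal piece $-\tfrac{h}{4}\langle f,(I+e^{-\beta hH})(I-e^{-\beta hH})^{-1}f\rangle$, and a condensate piece $\propto h\bigl(\overline{\rho}(h)-\rho_{c}(\beta h)\bigr)\bigl|\int_{\mathbb{R}^{\nu}}f\bigr|^{2}$. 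I would then pass to the limit $h\to0^{+}$ termwise: the quantization factor vanishes since $\norm{f}<\infty$; the condensate piece converges to the $\alpha$-term displayed in Eq.~\eqref{eq: classical KMS post transition} directly from the defining limit Eq.~\eqref{eq: insospettabile densità}; and the thermal piece, rewritten in momentum space as $-\tfrac{h}{4}\int\frac{\dd^{\nu}p}{(2\pi)^{\nu}}|\hat{f}(p)|^{2}\coth(\beta hp^{2}/4)$, converges by dominated convergence --- the integrand has pointwise limit $\tfrac{1}{\beta p^{2}}|\hat{f}(p)|^{2}$ and is dominated, uniformly in $h$, by the integrable majorant of Eq.~\eqref{eq: dominated convergence per transizioni di fase} --- to $-\tfrac12\langle f,(\beta H)^{-1}f\rangle$, using $H=-\Delta/2$. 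This gives $\omega^{\overline{\rho}(h),0}_{h}\circ Q_{h}(W^{0}(f))\to\omega^{\alpha,0}_{0}(W^{0}(f))$; convergence extends to $\Delta(\mathcal{S}(\mathbb{R}^{\nu}),0)$ by linearity and then to all of $\weylSO$ by an $\varepsilon/3$ density estimate, which is legitimate because $\omega^{\overline{\rho}(h),0}_{h}\circ Q_{h}$ is a state --- hence of norm one, uniformly in $h$ --- since $Q_{h}$ is positive and unital.

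For the weak KMS property I would first record that $\omega^{\alpha,0}_{0}$ is quasi-free: the quadratic form $\langle f,(\beta H)^{-1}f\rangle+2^{\nu}\alpha|\int f|^{2}$ is well defined and non-negative on $\mathcal{S}(\mathbb{R}^{\nu})$ (for $\nu\ge3$, $\langle f,(\beta H)^{-1}f\rangle=\tfrac{2}{\beta}\int\frac{\dd^{\nu}p}{(2\pi)^{\nu}}|\hat{f}(p)|^{2}/p^{2}<\infty$, while $|\int f|^{2}=\langle f,|1\rangle\langle 1|f\rangle\ge0$), so it extends to a state on $\weylSO$ which, being analytic, is in particular of class $C^{2}$ in the sense of Rmk.~\ref{remark: regularity of states}; thus the weak $(\delta^{0}_{0},\beta)$-KMS condition of Def.~\ref{def: classical KMS condition} is meaningful for it. The verification then parallels Prop.~\ref{prop: verifica KMS weak}: by commutativity $\{W^{0}(f),W^{0}(g)\}=\sigma(g,f)W^{0}(f+g)$ and $W^{0}(g)\delta^{0}_{0}(W^{0}(f))=i\Phi_{0}(iHf)W^{0}(f+g)$, so it suffices to prove $\sigma(g,f)\,\omega^{\alpha,0}_{0}(W^{0}(f+g))=i\beta\,\omega^{\alpha,0}_{0}(\Phi_{0}(iHf)W^{0}(f+g))$. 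Computing $\omega^{\alpha,0}_{0}(\Phi_{0}(a)W^{0}(b))=-i\tfrac{\dd}{\dd t}\omega^{\alpha,0}_{0}(W^{0}(b+ta))\big|_{t=0}$ as in Eq.~\eqref{eq: state on fields} and then substituting $a\to iHf$, $b\to f+g$ (with $f,g\in D(H)=E(\delta^{0}_{0})$), the thermal part reproduces exactly $-\tfrac{i}{\beta}\sigma(g,f)\,\omega^{\alpha,0}_{0}(W^{0}(f+g))$, while the condensate part carries a factor $\int_{\mathbb{R}^{\nu}}(Hf)(x)\,\dd^{\nu}x$, which is a multiple of $\widehat{Hf}(0)=0$ (since $\widehat{Hf}(p)=\tfrac{p^{2}}{2}\hat{f}(p)$) and hence drops out. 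The identity therefore holds for all $f,g\in\mathcal{S}(\mathbb{R}^{\nu})$ and every $\alpha\in[0,+\infty)$, and distinct values of $\alpha$ give distinct states (evaluate on an $f$ with $\int f\ne0$), which is the meaning of the limit points being labeled by $\alpha$.

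I expect the only genuinely delicate step to be the dominated-convergence argument for the thermal term: $h\coth(\beta hp^{2}/4)$ blows up like $4/(\beta p^{2})$ near $p=0$ and grows linearly in $h$ for large $|p|$, so one needs an $h$-uniform integrable majorant, which is exactly what Eq.~\eqref{eq: dominated convergence per transizioni di fase} provides and works precisely because $\hat{f}$ decays faster than any power and the residual $p^{-2}$ singularity is integrable in dimension $\nu\ge3$. Everything else is bookkeeping. The conceptual point --- and the reason the single derivation $\delta^{0}_{0}$ serves every limit state irrespective of $\alpha$ --- is the vanishing $\widehat{Hf}(0)=0$: it says that the infinite-volume ground state $|1\rangle$ is annihilated by $H=-\Delta/2$ in the appropriate distributional sense, so that the condensate form $2^{\nu}\alpha\,|1\rangle\langle 1|$ is dynamically inert and contributes nothing to the KMS twisting.
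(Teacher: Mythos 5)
Your proposal is correct and follows essentially the same route as the paper: termwise limit of the exponent (quantization factor $\to 1$, condensate term via the defining limit of $\alpha$, thermal term via dominated convergence with exactly the majorant of Eq.~\eqref{eq: dominated convergence per transizioni di fase}), extension by linearity and a uniform-norm density argument, and verification of the weak KMS condition through the derivative formula with the condensate contribution killed by $\widehat{Hf}(0)=0$. The only cosmetic differences are the $\coth$ rewriting of the thermal kernel and the explicit remark that distinct $\alpha$ yield distinct states, neither of which changes the argument.
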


     By an explicit computation, it is easy to verify that the value of quantum KMS states on field operators converge to the value of classical KMS states on  field functions. For example, the states $\{\omega^{\overline{\rho}(h),0}_{h}\}_{h\in(0,+\infty)}$ satisfy \begin{equation}
		\lim_{h\rightarrow 0^{+}}\omega^{\overline{\rho}(h),0}_{h}(\Phi_{h}(f_{1})\dots \Phi_{h}(f_{n})) = \omega^{\alpha,0}_{0}(\Phi_{0}(f_{1})\dots \Phi(f_{n})),
	\end{equation} for all $\{f_{1},\dots, f_{n}\}\subset \mathcal{S}(\mathbb{R}^{\nu})$, $n\in\mathbb{N}$.

{\paragraph{Infinite density limit:}
	 The most striking difference between the classical Gibbs states \eqref{eq: ansatz classico per stati KMS} and the quantum ones \ref{eq: Gibbs state} is that the latter are normal with respect to the Fock-Cook representation, allowing for a good definition of the number operator in the GNS representation while the former do not share this property for $\nu > 1$. Indeed, \begin{equation}
		\sum_{\underline{n}\in\mathbb{N}^{\nu}}\omega_{0}^{\mu, L}(a^{*}_{0}(\psi_{\underline{n}})a_{0}(\psi_{\underline{n}})) = \frac{1}{\beta}\sum_{n\in\mathbb{N}^{\nu}} \frac{1}{E_{\underline{n}(L)} - \mu},
	\end{equation} and since $E_{\underline{n}}(L)\propto \norm{\underline{n}}^{2}_{\mathbb{R}^{\nu}}$, the summation is finite if and only if $\nu = 1$. In particular, for high dimensions, the local number operator and the density are infinite. This consideration extends to  $\omega^{\mu}_0$ and $\omega^{\alpha,0}_0$. These states correspond to the infinite-density limits of the Araki-Woods quantum states, defining as such infinite-volume and infinite-density representation of the CCR.
  However, as the semi-classical parameter $h$ is dimensionless, $\alpha$ carries the dimensions of an inverse volume and can be interpreted as a \textit{renormalized density}. Indeed, $\alpha$  measures the fraction of excitations of a given type $f$ in the ground state. More precisely, for the \textit{single excitation} number operator $N_0(f):=a_0(f)^*a_0(f)$ one finds
\begin{equation}
\omega_0^{\alpha,0}(N_0(f))
= \langle f, (\beta H)^{-1} f\rangle + 2^\nu \alpha \langle f, |1\rangle\langle 1| f\rangle,
\end{equation} which decomposes as the sum of two quadratic forms: the first accounts for the contribution of the classical background, while the second quantifies the condensate fraction.}

 {What does the classical condensate represent? We have already discussed in Rmk. \ref{rmk: classical nature of the condensate} the classical nature of the quantum condensate and this same feature has been exploited in interacting systems via the successful Bogoliubov $c$-number substitution \cite{Bog_47}. In the previous limits, the condensate term is largely unaffected by the semi-classical scaling. Thus, the classical condensate still appears to describe the quantum condensate in the infinite-density limit.}

    \begin{remark}
	The operators $(H-\mu I)^{-1}$ and $H^{-1}$ are not compact, so we cannot apply the measure theoretic formalism of Sec. \ref{sec: sistemi a volume finito}. In particular, $\omega^{\mu}_{0}$ and $\omega_{0}^{\alpha,0}$ are not Gibbs states. However, focusing on $\omega^{\alpha,0}_{0}$, we can introduce the function $\theta^{\alpha}\colon \mathcal{S}(\mathbb{R})^{\nu}\ni f\rightarrow \omega^{\alpha}_{0}(W^{0}(f))\in \mathbb{C}$. $\theta^{\alpha}$ is continuous with respect to the locally convex topology of $\mathcal{S}(\mathbb{R}^{\nu})$, it satisfies $\sum_{j,k=1}^{n}\overline{z}_{j}z_{k}\theta^{\alpha}(f_{k}-f_{j}) \geq 0$ for arbitrary $\{f_{j}\}_{j\in\{1,\dots,n\}}\subset\mathcal{S}(\mathbb{R}^{\nu}),\;\{z_{j}\}_{j\in\{1,\dots,n\}}\subset \mathbb{C}, \;n\in\mathbb{N} $, and it is normalized as $\theta^{\alpha}(0) = 1$. So, by Minlos theorem \cite[Th. 4.7]{Huang_Yan_2000}, it is the Fourier transform of a Borel probability measure $\mu_{\alpha}\in \mathfrak{P}(\mathcal{S}(\mathbb{R}^{\nu})')$. Now, analyticity of the state implies that $\hat{\mu}_{\alpha}:= \theta^{\alpha}$ is differentiable on finite dimensional subspaces of $\mathcal{S}(\mathbb{R}^{\nu})$ and so, the function $\langle f, X(\cdot)\rangle$ is $\mu_{\alpha}$-integrable for all $f\in \mathcal{S}(\mathbb{R}^{\nu})$, where $\langle f, X(u)\rangle = \langle -iH f, u \rangle$. Then, by using the measure counterpart \cite[Th. 2.7]{Ammari_Sohinger_2023} of proposition  \ref{prop: caratterizzazioni equivalenti della condizione KMS cilindrica}, and exploiting the weak KMS condition satisfied by the states, we easily obtain that $\mu_{\alpha}$ is a KMS Borel probability measure as in definition \ref{def: KMS per misure}. 
\end{remark}

    {\paragraph{Case $\alpha = +\infty$ :} In this limit,  the statistic of excitations $f$ with non-zero projection onto the condensate state $|1\rangle\langle 1 | f\neq 0$ is trivial, i.e. there is always an infinite contribution to the number particles $N_0(f)$ coming from the condensate and as a result $\omega_0^{+\infty,0}(W^0(f)) = 0.$ This condition implies that the resulting state is not $C^k$ for any $k\geq 0$: formally, the expectation value of the number operator $N_0(f)$ above is infinite. Instead, excitations $g$ for which $|1\rangle\langle 1 | g = 0$ have  finite expectation values, characterized uniquely by the classical thermal background.}

    \subsection{Thermodynamic limit of the Classical System}\label{sec: limite termodinamico classico} {In the previous sections we have found states which are good candidates for describing thermodynamic equilibrium on classical Weyl algebras $\mathcal{W}(E,0)$. To complete this description, we recover the infinite-volume states as thermodynamic limits of the finite volume Gibbs ones $\omega^{L,\mu_L}_0$, in the spirit of the classical works by Cannon \cite{Cannon_73} and Lewin and Pulè \cite{LP_73}.  The main result of this section is Theorem \ref{th: Classical limit of the phase transition states}, which establishes that distinct thermodynamic limits may give rise to different infinite-volume equilibrium states for the same dynamics. The key distinction from the corresponding quantum result lies in the absence of the density as an order parameter: in the classical limit, the density diverges and is therefore no longer available to distinguish phases. A notable consequence is that, when $\mu = 0$, one can obtain more equilibrium states for every value of the temperature $1/\beta$. }

    {We take again $H_L$ to be the Dirichlet self-adjoint extension of $-\Delta/2|_{C_c^\infty(\Lambda_L)}$, $\Lambda_L:= [-L, L]^\nu$, on $\mathbb{L}^2(\mathbb{R}^\nu)$. Then , we have the following standard result which we state as a lemma for future convenience \begin{lemma}\label{lemma: strongly convergence lemma}
        Let $H_L\colon D(H_L)\subset \mathbb{L}^2(\mathbb{R}^\nu) \to \mathbb{L}^2(\mathbb{R}^\nu) $ be as above. Then, for every continuous, bounded function $h\in C_b(\mathbb{R}^\nu)$, \begin{equation}\lim_{L\to+\infty}h(H_L)g = h(H)g,\quad g \in \mathbb{L}^2(\mathbb{R}^\nu), \end{equation} that is, $h(H_L)$ strongly converges to $h(H)$ where $H$ is the unique self-adjoint extension of the restriction of $-\Delta/2$ to $C_c^\infty(\mathbb{R}^\nu)$.
    \end{lemma}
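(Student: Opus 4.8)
The plan is to deduce the lemma from the strong resolvent convergence $H_L\to H$ on $\mathbb{L}^2(\mathbb{R}^\nu)$. Here, following \cite[Ex. 3.1.29]{Bratteli_Robinson_87}, $H_L$ is regarded as the self-adjoint operator on $\mathbb{L}^2(\mathbb{R}^\nu)$ that acts as the Dirichlet Laplacian of $\Lambda_L$ on the subspace $\mathbb{L}^2(\Lambda_L)$ and leaves its orthogonal complement invariant; write $P_L$ for the orthogonal projection onto $\mathbb{L}^2(\Lambda_L)$. First I would invoke the standard fact that strong resolvent convergence of self-adjoint operators implies $h(H_L)g\to h(H)g$ for every $g$ and every bounded continuous $h$ (see, e.g., \cite{Bratteli_Robinson_87}); since $H_L,H\geq 0$ only the values of $h$ on $[0,+\infty)$ are relevant, and if one prefers not to quote this in the non-vanishing case one may instead deduce it from the strong convergence of the heat semigroups $e^{-tH_L}\to e^{-tH}$ together with a Stone--Weierstrass approximation. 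It therefore remains to establish $H_L\to H$ in the strong resolvent sense.

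To do this I would exhibit a dense subspace on which the resolvents eventually agree. By hypothesis $H$ is the unique self-adjoint extension of $-\Delta/2$ restricted to $C_c^\infty(\mathbb{R}^\nu)$, so $C_c^\infty(\mathbb{R}^\nu)$ is a core for $H$, and hence $\mathcal{R}:=(H+i)\,C_c^\infty(\mathbb{R}^\nu)$ is dense in $\mathbb{L}^2(\mathbb{R}^\nu)$. Given $\phi\in C_c^\infty(\mathbb{R}^\nu)$, choose $L_0$ so large that $\operatorname{supp}\phi$ lies in the interior of $\Lambda_{L_0}$. For every $L\geq L_0$ one has $\phi\in D(H_L)$ with $H_L\phi=-\tfrac12\Delta\phi=H\phi$, and $g:=(H+i)\phi$ is supported in $\Lambda_L$, so that $(H_L+i)^{-1}g=\phi=(H+i)^{-1}g$. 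Thus $(H_L+i)^{-1}g\to(H+i)^{-1}g$ (eventually an equality) for all $g\in\mathcal{R}$, and combining this with the uniform bound $\norm{(H_L+i)^{-1}}\leq 1$ and the density of $\mathcal{R}$, a routine $3\epsilon$ estimate extends the convergence to every $g\in\mathbb{L}^2(\mathbb{R}^\nu)$. This is the desired strong resolvent convergence, and the lemma follows.

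I do not anticipate a real obstacle, since this is a classical fact; the two points that require some care are (i) fixing the convention for how $H_L$ acts off $\mathbb{L}^2(\Lambda_L)$ --- with the block-diagonal extension above the complement is harmless, because $\norm{(I-P_L)g}\to 0$ for fixed $g$, so any vector eventually supported inside $\Lambda_L$ is acted on purely by the Dirichlet operator --- and (ii) using the functional-calculus continuity theorem in the version valid for all bounded continuous $h$ rather than only for $h$ vanishing at infinity, which is precisely why I would fall back on the heat-semigroup route above should a clean citation not be at hand.
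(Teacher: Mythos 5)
Your proof is correct. The one substantive step you prove from scratch --- strong resolvent convergence of $H_L$ to $H$ --- is sound: since $H$ is essentially self-adjoint on $C_c^\infty(\mathbb{R}^\nu)$, the range $(H+i)C_c^\infty(\mathbb{R}^\nu)$ is dense; for $\phi$ supported in the interior of $\Lambda_{L_0}$ one indeed has $\phi\in D(H_L)$ with $H_L\phi=H\phi$ for all $L\geq L_0$, so the resolvents agree on a dense set eventually in $L$, and the uniform bound $\norm{(H_L+i)^{-1}}\leq 1$ closes the argument. The passage from strong resolvent convergence to $h(H_L)\to h(H)$ strongly for all bounded continuous $h$ is the standard functional-calculus continuity theorem (e.g.\ Reed--Simon I, Thm.\ VIII.20(b)), and your caution about the $C_b$ versus $C_0$ distinction, with the semigroup fallback, is well placed.

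Your route differs from the paper's, which is purely citation-based: it quotes \cite[Ex.\ 3.1.29]{Bratteli_Robinson_87} for the strong convergence $e^{itH_L}\to e^{itH}$ uniformly for $t$ in compact sets, and then \cite[Lemma 5.2.25]{Bratteli_Robinson_97} to upgrade this to strong convergence of $h(H_L)$ for bounded continuous $h$. By the Trotter--Kato theorem the two intermediate notions (uniform-on-compacts strong convergence of the unitary groups versus strong resolvent convergence) are equivalent, so the two proofs are cousins; what yours buys is a self-contained verification of the convergence of the operators via an elementary core argument, at the cost of having to quote (or reprove) the functional-calculus continuity theorem in the version valid for all of $C_b$ rather than only $C_0$. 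Either is acceptable here.
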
\begin{proof}
       It follows from \cite[Ex. 3.1.29]{Bratteli_Robinson_87} that $e^{it H_{L}}$ strongly converges to $e^{itH}$, uniformly for $t$ in compact sets. Thus, by \cite[Lemma 5.2.25]{Bratteli_Robinson_97}, for any bounded and continuous function $h\in C_b(\mathbb{R}^\nu)$, $h(H_L)$ strongly converges to $h(H)$.
    \end{proof}}

	We focus on two possible limiting procedures.

	\paragraph{\textbf{Chemical potential $\mu< 0$ fixed:}} For $\mu <0$, consider  \begin{equation}h_\mu(x):= 1/(|x| -\mu)\in\mathbb{R},\quad h_\mu \in C_b(\mathbb{R}).\end{equation}Since $H_L\geq 0$, we have $h_\mu(H_L) = (H_L -\mu)^{-1}$ and $h_\mu(H_L) \to h_{\mu}(H)$ strongly by Lemma \ref{lemma: strongly convergence lemma}. Then, for any compactly supported $f\in \mathbb{L}^2(\mathbb{R}^\nu)$ we have the thermodynamic limit \begin{equation}\label{eq: thermodynamical limit before phase transition}
		\lim_{L\to +\infty} \omega^{L,\mu}_{0}(W^{0}(f)) = \lim_{L\to +\infty}\exp{-\frac{1}{2\beta}\langle f, h_\mu(H_L) f\rangle} =  \exp{-\frac{1}{2\beta}\langle f, h_\mu(H) f\rangle} = \omega^{\mu}_{0}(W^{0}(f)).
	\end{equation} By linearity and continuity, the limit in \eqref{eq: thermodynamical limit before phase transition} holds for the generic element of $\cup_{L>0}\mathcal{W}(E_{L},0)$. 
	\paragraph{Variable chemical potential $\mu_{L}$:} We take $\nu\geq 3$. If we consider a sequence of chemical potentials ${\mu_{L} < E_{0}(L)}$ converging in the Thermodynamic limit to some $\overline{\mu} <0$, we are in a similar situation to the one described in the previous paragraph. By the same reasoning, we obtain the infinite-volume state $\lim_{L\to+\infty}\omega_0^{L,\mu_L} = \omega_{0}^{\overline{\mu}}$ as a weak$^*$-limit.
	
{Crucially, the situation changes if we send the chemical potential to $0$ in the thermodynamic limit.  In the quantum regime, this is done by fixing the density $\overline{\rho}\geq \rho_{c}(\beta h)$  as in Eq. \eqref{eq: condition on the chemical pot}, while enlarging the volume $\Lambda_{L}$. In this classical framework, we can only require the chemical potential to be smaller than the ground-state energy $\mu_L<E_0(L)$ for defining Gibbs state for every value of $L$. As $E_0(L) = O(1/L^2)$ This condition ensures that the following quantity \begin{equation}\label{eq: classical condensate densityy}
     \frac{1}{|\Lambda_L|\beta(E_0(L)-\mu_L)},
\end{equation} can only have a positive limit (if it exists). $\alpha\geq 0$ as $L\to+\infty$. As we will see in Th. \ref{th: classical thermodynamical limit after phase transition},  \eqref{eq: classical condensate densityy} is related to the condensate fraction $\alpha$ in Eq. \eqref{eq: insospettabile densità}, and indeed, Eq. \eqref{eq: classical condensate densityy} corresponds to the classical limit of the renormalized ground state density \begin{equation}
    \lim_{h\to 0^+} \frac{1}{|\Lambda_L|}h\frac{e^{-\beta h (E_0(L)-\mu_L)}}{1- e^{-\beta h(E_0(L)-\mu_L)}} = \frac{1}{|\Lambda_L|\beta(E_0(L)-\mu_L)}.
\end{equation}}

To proceed, we need the following technical Lemma, which is a slight variation of \cite[Prop. 5.2.31]{Bratteli_Robinson_97} 
	
\begin{lemma}\label{lemma: convergenza dell'hamiltoniana inversa}
 {Consider the Hamiltonian operator $H_{L}\colon D(H_L)\to \mathbb{L}^2(\mathbb{R}^\nu)$, obtained as the self-adjoint extension with Dirichlet boundary conditions of $-\Delta/2|_{C_c^\infty(\Lambda_L)}$. Then, for every compactly supported $f\in \mathbb{L}^{2}(\mathbb{R}^\nu)$ we have \begin{equation}\label{eq: inverse laplacian action}
	\lim_{L\to+\infty}\langle f, H_{L}^{-1} f \rangle = \langle f, H^{-1} f\rangle := \int \frac{\dd p^{\nu}}{(2\pi)^{\nu}} \frac{2}{p^{2}}|\hat{f}(p)|^{2},
\end{equation} where $H$ is the unique self adjoint extension of $-\Delta/2$ on $C_c^\infty(\mathbb{R}^\nu)$.}
\end{lemma}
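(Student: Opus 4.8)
The plan is to compare $H_L^{-1}$ with the shifted resolvents $(H_L+\delta)^{-1}$, for which strong convergence is available from Lemma~\ref{lemma: strongly convergence lemma}, and then to send $\delta\to 0^+$ while keeping the error uniform in $L$. Throughout we fix $f\in\mathbb{L}^2(\mathbb{R}^\nu)$ with compact support and restrict to $L$ large enough that $\operatorname{supp}f\subset\Lambda_L$, so that $\langle f,H_L^{-1}f\rangle$ is unambiguous regardless of the way $H_L$ is extended to an operator on $\mathbb{L}^2(\mathbb{R}^\nu)$ (cf. \cite[Ex.~3.1.29]{Bratteli_Robinson_87}). Recall also that $\nu\ge 3$, hence the singularity at $p=0$ is integrable and, $\hat f$ being bounded and rapidly decreasing, $\langle f,H^{-1}f\rangle=\int\frac{\mathrm d^\nu p}{(2\pi)^\nu}\frac{2}{p^2}|\hat f(p)|^2<+\infty$.

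First, for each fixed $\delta>0$ the function $h_\delta(x):=(\max\{x,0\}+\delta)^{-1}$ lies in $C_b(\mathbb{R})$, and since both operators have non-negative spectrum, $h_\delta(H_L)=(H_L+\delta)^{-1}$ and $h_\delta(H)=(H+\delta)^{-1}$. By Lemma~\ref{lemma: strongly convergence lemma}, $h_\delta(H_L)\to h_\delta(H)$ strongly, whence
\begin{equation}\label{eq: plan step1}
\lim_{L\to+\infty}\langle f,(H_L+\delta)^{-1}f\rangle=\langle f,(H+\delta)^{-1}f\rangle=\int\frac{\mathrm d^\nu p}{(2\pi)^\nu}\frac{|\hat f(p)|^2}{p^2/2+\delta}.
\end{equation}
Letting $\delta\downarrow 0$, the integrand on the right increases pointwise to $\frac{2}{p^2}|\hat f(p)|^2$, so by monotone convergence $\langle f,(H+\delta)^{-1}f\rangle\to\langle f,H^{-1}f\rangle$; the same holds with $|f|$ in place of $f$, since $|f|\in\mathbb{L}^2(\mathbb{R}^\nu)$ is again compactly supported.

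The crux is a bound on $\langle f,H_L^{-1}f\rangle-\langle f,(H_L+\delta)^{-1}f\rangle$ uniform in $L$, and here I would adapt \cite[Prop.~5.2.31]{Bratteli_Robinson_97} through domain monotonicity of Dirichlet Green's functions. For $\mu\ge0$ and $\nu\ge3$ the operator $(-\tfrac12\Delta_L+\mu)^{-1}$ on $\mathbb{L}^2(\Lambda_L)$ has a non-negative integral kernel $G_L^\mu$, and the maximum principle yields $G_L^\mu(x,y)\le G^\mu(x,y)$ on $\Lambda_L\times\Lambda_L$, where $G^\mu$ is the non-negative kernel of $(-\tfrac12\Delta+\mu)^{-1}$ on $\mathbb{R}^\nu$. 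Writing $H_L^{-1}-(H_L+\delta)^{-1}=\delta\,H_L^{-1}(H_L+\delta)^{-1}$ at the level of kernels and enlarging the domain of integration, all integrands being non-negative, gives
\begin{equation}\label{eq: plan step3}
0\le\big(G_L^0-G_L^\delta\big)(x,y)=\delta\!\int_{\Lambda_L}\!G_L^0(x,z)\,G_L^\delta(z,y)\,\mathrm d z\le\delta\!\int_{\mathbb{R}^\nu}\!G^0(x,z)\,G^\delta(z,y)\,\mathrm d z=\big(G^0-G^\delta\big)(x,y).
\end{equation}
Pairing with $\overline{f(x)}f(y)$ and estimating $|\overline{f(x)}f(y)|\le|f(x)||f(y)|$,
\begin{equation}\label{eq: plan uniform}
0\le\langle f,H_L^{-1}f\rangle-\langle f,(H_L+\delta)^{-1}f\rangle\le\langle|f|,H^{-1}|f|\rangle-\langle|f|,(H+\delta)^{-1}|f|\rangle,
\end{equation}
and the right-hand side is independent of $L$ and tends to $0$ as $\delta\downarrow0$ by the previous paragraph.

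One then concludes by the standard three-term splitting: given $\varepsilon>0$, choose $\delta$ so that the $L$-independent quantity in \eqref{eq: plan uniform} and $\langle f,H^{-1}f\rangle-\langle f,(H+\delta)^{-1}f\rangle$ are each below $\varepsilon/3$, and then choose $L$ large using \eqref{eq: plan step1}. The main obstacle is exactly the uniform-in-$L$ control \eqref{eq: plan uniform}: a naive estimate fails because $\|H_L^{-1}\|\sim E_0(L)^{-1}=O(L^2)$ diverges, and it is the compact support of $f$ — felt through the Green's-function comparison and the finiteness of $\langle|f|,H^{-1}|f|\rangle$ — that rescues the argument. A secondary, purely bookkeeping, point is to make sure the kernels $G_L^\mu$, $G^\mu$ are well defined and that the operator identity passes to kernels via Tonelli, which is immediate since everything is non-negative.
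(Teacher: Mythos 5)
Your proof is correct and takes a genuinely different route from the paper's. The paper never touches kernels: it writes $\langle f,H_L^{-1}f\rangle=\inf_{h\in(0,1)}h\langle f,(I-e^{-hH_L})^{-1}f\rangle$, expands the resolvent as a Neumann series in the heat semigroup, inserts an auxiliary parameter $z\in(0,1)$, and concludes by interchanging $\sup_{\Lambda_L}$ with $\inf_h\sup_z$, the only nontrivial input being the monotonicity of $\Lambda_L\mapsto\langle f,e^{-\alpha H_L}f\rangle$ quoted from \cite[Cor.~6.3.13]{Bratteli_Robinson_97}. You instead regularize by the shifted resolvents $(H_L+\delta)^{-1}$, to which Lemma~\ref{lemma: strongly convergence lemma} applies directly, and obtain an $L$-uniform bound on $\langle f,H_L^{-1}f\rangle-\langle f,(H_L+\delta)^{-1}f\rangle$ from the domain monotonicity $G_L^\mu\le G^\mu$ of the Dirichlet Green's functions, closing with an $\varepsilon/3$ splitting. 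The essential positivity input is the same in both arguments --- domain monotonicity of the Dirichlet heat kernel, whether integrated in time to give $G_L^\mu\le G^\mu$ or used directly on $\langle f,e^{-\alpha H_L}f\rangle$ --- so neither proof is more elementary; yours buys an explicit, quantitative error term at the price of justifying the kernel-level resolvent identity and the maximum-principle comparison, which are indeed routine by Tonelli since everything is non-negative. One cosmetic correction: the Fourier transform of a compactly supported $\mathbb{L}^2$ function is entire and bounded but not rapidly decreasing in general (take $f=\chi_{[0,1]^\nu}$); boundedness of $\hat f$ near $p=0$ together with $\hat f\in\mathbb{L}^2(\mathbb{R}^\nu)$ already gives $\langle f,H^{-1}f\rangle<+\infty$ for $\nu\ge 3$, so this slip does not affect the argument. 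Note also that the lower bound $\langle f,H_L^{-1}f\rangle-\langle f,(H_L+\delta)^{-1}f\rangle\ge 0$ follows directly from $H_L^{-1}-(H_L+\delta)^{-1}=\delta H_L^{-1}(H_L+\delta)^{-1}\ge 0$ without passing to kernels.
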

\begin{proof}
Notice that the integral on the right of \eqref{eq: inverse laplacian action} is well-defined because $\nu\geq 3$ and $\hat{f}\in\mathbb{L}^2(\mathbb{R}^\nu)$ is analytic because by assumption $f$  is compactly supported.
        Secondly, we know from Lemma \ref{lemma: strongly convergence lemma} that for every bounded continuous function $h\in C_{b}(\mathbb{R})$, $h(H_{L})\to h(H)$ strongly .
        Finally, {note that our Hamiltonian trivially satisfies the assumptions of \cite[Th. 6.3.12]{Bratteli_Robinson_97}, since $H_L$ is only given by the kinetic term, closed with Dirichlet boundary conditions, with null potential}. Thus, \cite[Cor. 6.3.13]{Bratteli_Robinson_97} establishes that the function $\Lambda_{L} \rightarrow \langle f, e^{-\alpha H_{L}} f  \rangle \in \mathbb{R}$ is monotone increasing in $\Lambda_{L}$ for every $\alpha>0$.
By functional calculus we have
	\begin{equation}
		\langle f, H_{L}^{-1} f \rangle
		=\lim_{ h \rightarrow 0^{+}} h \langle f, (I - e^{- h  H_{L}})^{-1} f \rangle
		= \inf_{ h \in (0,1)}  h  \langle f, (I -e^{- h  H_{L}})^{-1}f\rangle.
	\end{equation}
	Hence, we can compute
	\begin{align}
		\lim_{L\to+\infty } \langle f, H_{L}^{-1} f\rangle
		&= \lim_{\Lambda_{L}\uparrow \mathbb{R}^{\nu}}\inf_{ h \in (0,1)}  h  \langle f, (I- e^{- h  H_{L}})^{-1} f\rangle\nonumber
        \\
		&= \sup_{\Lambda_{L}}\inf_{ h \in (0,1)}  h  \sum_{n=0}^{\infty} \langle f, e^{-n h  H_{L}} f\rangle\nonumber
        \\
		&= \sup_{\Lambda_{L}}\inf_{ h \in (0,1)} \sup_{z\in (0,1)} h \sum_{n=0}^{\infty}\langle f, z^{n}e^{-n h  H_{L}}f\rangle \nonumber \\
        &\label{eq: monotonicity in Lambda}=\inf_{ h \in (0,1)} \sup_{z\in (0,1)} h \sum_{n=0}^{\infty}\langle f, z^{n}e^{-n h  H}f\rangle\\
		&\label{eq: insertion of z}=\inf_{ h \in  (0,1)} \sup_{z\in (0,1)}  h  \langle f, (I- ze^{- h  H} )^{-1} f\rangle
        \\
        &\label{eq: final result}= \langle f, H^{-1} f\rangle,
        \end{align}
\end{proof} {where we have inserted a $z$ to be able to compute the series explicitly in Eq. \eqref{eq: insertion of z} for $\norm{ze^{-hH}}<1$, and in the fourth equality \eqref{eq: monotonicity in Lambda} we have exploited the monotonicity in $\Lambda_L$ to move the supremum inside the summation.}

Now, we can state the main theorem of this section 
	\begin{theorem}(\textbf{Phase transitions for classical states})\label{th: classical thermodynamical limit after phase transition}
{ Fix any $\beta >0$ and consider an increasing net of boxes $\Lambda_L:=[-L,L]^\nu$.  For any $\alpha\in [0,+\infty)$ and net of chemical potentials $(\mu_L)_{L>0}$ satisfying
  \begin{equation}\label{eq: chemical potential net}
    \mu_L < E_0(L),\quad \lim_{L\to+\infty} \frac{1}{|\Lambda_L|}\frac{1}{\beta(E_0(L)-\mu_L  )} = \alpha,
 \end{equation} 
 let $\omega_0^{L,\mu_{L}}$ be the $(\delta^{L,\mu_{L}}_{0},\beta)$-weak KMS state corresponding to the dynamics $\delta_{0}^{L,\mu_{L}}(W^{0}(f)) = i\Phi_{0}(i(H_{L} -\mu_{L})f)W^{0}(f)$. It follows that the limit \begin{equation}
		\overline{\omega^{\alpha,0}_{0}}(c) = \lim_{\Lambda_{L}\uparrow \mathbb{R}^{\nu}}\omega^{L,\mu_{L}}_{0}(c)
	\end{equation}  exists for all $c\in \bigcup_{\Lambda_{L}\subset\mathbb{R}^{\nu}}\mathcal{W}(E_{L},0)$ and defines an analytic state such that
 \begin{equation}
	\overline{\omega_{0}^{\alpha,0}}(W^{0}(f)) = \exp{-\frac{1}{2\beta}\bigg{(} \langle f, H^{-1} f\rangle + 2^{\nu}\alpha |\int_{\mathbb{R}^{\nu}} \dd^{\nu} x f(x)|^{2}\bigg{)}},\quad f\in \bigcup_{\Lambda_{L}\subset \mathbb{R}^{\nu}}E_{L}.
	\end{equation} This state satisfy the $(\delta^{0}_{0},\beta)$-weak KMS condition, \textit{regardless of the value of $\alpha$}, where $\delta_{0}^{0}(W^{0}(f)) = i\Phi_{0}(iHf)W^{0}(f)$, for $ f\in D(H)$. Moreover, the state $\overline{\omega^{\alpha,0}_{0}}$ can be extended uniquely and continuously over $\weylSO$, to a $(\delta_{0}^{0},\beta)$-weak KMS state $\omega^{\alpha,0}_{0}$.}
	\end{theorem}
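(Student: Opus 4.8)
The strategy is to reduce everything to the evaluation of the two-point form $\langle f, (H_L-\mu_L)^{-1}f\rangle$ in the limit $L\to+\infty$, and then to split its contribution into a ground-state piece and an orthogonal-complement piece. Concretely, I would first recall that by Corollary \ref{corollary: unicità dei KMS su volume finito} the finite-volume state is the quasi-free state $\omega_0^{L,\mu_L}(W^0(f)) = \exp(-\tfrac{1}{2\beta}\langle f,(H_L-\mu_L)^{-1}f\rangle)$, so it suffices to compute $\lim_L \langle f,(H_L-\mu_L)^{-1}f\rangle$ for compactly supported $f$, and then invoke linearity and continuity to pass to general elements of $\bigcup_L \mathcal{W}(E_L,0)$. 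I would write $(H_L-\mu_L)^{-1} = P_0^L (H_L-\mu_L)^{-1} + (I-P_0^L)(H_L-\mu_L)^{-1}$, where $P_0^L = |\psi_0^L\rangle\langle\psi_0^L|$ is the projection onto the Dirichlet ground state on $\Lambda_L$.

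\textbf{The orthogonal piece.} On $\mathrm{Ran}(I-P_0^L)$ the operator $H_L$ is bounded below by the first excited eigenvalue $E_1(L)$, which is also $O(1/L^2)$, but crucially $\mu_L < E_0(L) < E_1(L)$; more relevantly, since $\|f\|$ is fixed and supported in a fixed box, one can dominate $\langle f,(I-P_0^L)(H_L-\mu_L)^{-1}f\rangle$ and pass to the limit. I would make this rigorous by observing that $(I-P_0^L)(H_L-\mu_L)^{-1} = h_{\mu_L}(H_L) - (E_0(L)-\mu_L)^{-1}P_0^L$ is awkward because $\mu_L$ itself moves; instead the clean route is to use monotone convergence exactly as in Lemma \ref{lemma: convergenza dell'hamiltoniana inversa}. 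That is: write $\langle f,(H_L-\mu_L)^{-1}f\rangle = \inf_{h\in(0,1)} h\langle f,(I-e^{-h(H_L-\mu_L)})^{-1}f\rangle$, expand the resolvent as a geometric series $\sum_n z^n e^{-nhH_L}$ with $z=e^{h\mu_L}<1$, use the monotonicity in $\Lambda_L$ of $\langle f,e^{-\alpha H_L}f\rangle$ from \cite[Cor. 6.3.13]{Bratteli_Robinson_97} to swap $\sup_{\Lambda_L}$ and $\sum_n$, and extract the ground-state term separately. The contribution of all $n\geq 1$ excited-sector terms converges to $\langle f, H^{-1}f\rangle = \int \frac{2}{p^2}|\hat f(p)|^2\frac{d^\nu p}{(2\pi)^\nu}$, exactly the content of Lemma \ref{lemma: convergenza dell'hamiltoniana inversa} adapted to allow $z<1$ (which only helps convergence).

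\textbf{The ground-state piece.} Here is where the scaling hypothesis \eqref{eq: chemical potential net} enters: $\langle f, P_0^L(H_L-\mu_L)^{-1}f\rangle = (E_0(L)-\mu_L)^{-1}|\langle\psi_0^L,f\rangle|^2$. Since $\psi_0^L(x) = \sqrt{2^\nu/|\Lambda_L|}\prod_i \sin(\pi(x_i-L)/2L)$, for fixed compactly supported $f$ one has $\langle\psi_0^L,f\rangle = \sqrt{2^\nu/|\Lambda_L|}\int f(x)\,dx\,(1+o(1))$ as $L\to\infty$ (the sine factors tend to $1$ pointwise on the support of $f$, with dominated convergence). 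Hence $|\langle\psi_0^L,f\rangle|^2 = \frac{2^\nu}{|\Lambda_L|}|\int f|^2(1+o(1))$, and multiplying by $(E_0(L)-\mu_L)^{-1}$ and using $\lim_L \frac{1}{|\Lambda_L|\beta(E_0(L)-\mu_L)} = \alpha$ yields $\lim_L \langle f, P_0^L(H_L-\mu_L)^{-1}f\rangle = 2^\nu\alpha\beta|\int f|^2$. Combining both pieces gives $\lim_L \langle f,(H_L-\mu_L)^{-1}f\rangle = \langle f,H^{-1}f\rangle + 2^\nu\alpha\beta|\int f|^2$, hence the claimed formula for $\overline{\omega_0^{\alpha,0}}(W^0(f))$ after dividing by $2\beta$. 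I would then note that $\overline{\omega_0^{\alpha,0}}$ is quasi-free with a manifestly positive-semidefinite two-point form (sum of $\langle\cdot,(\beta H)^{-1}\cdot\rangle$ and a rank-one form $2^\nu\alpha|1\rangle\langle 1|$), so it is a genuine state; analyticity follows as in Rmk. \ref{remark: regularity of states} since the exponent is a polynomial in the test-function scaling. The weak KMS verification is \emph{verbatim} the computation in the proof of Theorem \ref{th: Classical limit of the phase transition states}: one computes $\overline{\omega_0^{\alpha,0}}(\Phi_0(iHf)W^0(f+g))$, and the condensate contribution drops out because $\int (Hf)(x)\,dx = \mathcal{F}(Hf)(0) = \tfrac{1}{2}p^2\hat f(p)|_{p=0} = 0$ for $f\in D(H)\cap$ (suitable domain). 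Finally, since $H^{-1}$ is bounded on the relevant Sobolev-type space and the rank-one form extends continuously to $\mathcal{S}(\mathbb{R}^\nu)$, the state extends uniquely and continuously to $\weylSO$, and the extension still satisfies the $(\delta_0^0,\beta)$-weak KMS condition by the density of $\bigcup_L E_L$ and continuity of both sides of \eqref{eq: KMS twisting}; this yields precisely the state $\omega_0^{\alpha,0}$ of Theorem \ref{th: Classical limit of the phase transition states}.

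\textbf{Main obstacle.} The delicate point is the interchange of limits in the excited sector while the chemical potential $\mu_L$ is itself varying with $L$ and approaching the (shrinking) ground-state energy. The monotonicity argument of \cite[Cor. 6.3.13]{Bratteli_Robinson_97} handles the $\Lambda_L$-dependence of $\langle f,e^{-\alpha H_L}f\rangle$, but one must be careful that the geometric-series parameter $z = e^{h\mu_L}$ stays uniformly bounded away from $1$ for $h$ in a fixed compact subinterval of $(0,1)$ — which it does, since $\mu_L < E_0(L) \to 0^+$ forces $\mu_L$ to be bounded above, but $\mu_L$ could a priori drift to $-\infty$; if so the ground-state term simply vanishes and $\alpha=0$, consistent with \eqref{eq: chemical potential net}. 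The only genuinely new estimate beyond Lemma \ref{lemma: convergenza dell'hamiltoniana inversa} is the asymptotic $|\langle\psi_0^L,f\rangle|^2 \sim 2^\nu|\Lambda_L|^{-1}|\int f|^2$, which is elementary but must be stated carefully for compactly supported $f$, and is exactly what ties the abstract parameter $\alpha$ to the renormalized condensate density.
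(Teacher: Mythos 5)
Your overall architecture matches the paper's: reduce to the quadratic form $\langle f,(H_L-\mu_L)^{-1}f\rangle$ for compactly supported $f$, split off the ground-state projection, use the scaling hypothesis \eqref{eq: chemical potential net} together with the elementary asymptotic $|\langle\psi_0^L,f\rangle|^2\sim 2^\nu|\Lambda_L|^{-1}|\int f|^2$ for the condensate term, and recycle the weak-KMS verification and the continuity extension to $\weylSO$ from Theorem \ref{th: Classical limit of the phase transition states}. Those parts, and your lower bound on the excited sector via monotonicity in $\mu$, are in line with the paper.

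The gap is in the upper bound for the excited sector. You claim its contribution converges to $\langle f,H^{-1}f\rangle$ by ``adapting Lemma \ref{lemma: convergenza dell'hamiltoniana inversa} to allow $z=e^{h\mu_L}<1$, which only helps convergence.'' Two problems. First, in the only interesting regime $\alpha>0$ one eventually has $\mu_L>0$ (since $E_0(L)-\mu_L\sim(\alpha\beta|\Lambda_L|)^{-1}=O(L^{-\nu})$ while $E_0(L)=O(L^{-2})$ and $\nu\geq3$), so $z=e^{h\mu_L}>1$; and in any case $z\to1$ because $\mu_L\to0$, so your assertion that $z$ ``stays uniformly bounded away from $1$'' is false. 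The auxiliary $z\in(0,1)$ in the paper's Lemma is an $L$-\emph{independent} regularization parameter, not $e^{h\mu_L}$; you cannot substitute an $L$-dependent, possibly supercritical value and keep the $\sup$--$\inf$ interchange. Second, and more substantively, removing the ground state does not make the gap to $\mu_L$ uniformly negligible: the low-lying excited eigenvalues also shrink like $L^{-2}$, and the crude estimate $E_{\underline{n}}(L)/(E_{\underline{n}}(L)-\mu_L)\leq(\nu+3)/3$ (from $\mu_L<E_0(L)$ and $E_{\underline{n}}(L)\geq\tfrac{\nu+3}{\nu}E_0(L)$ for $\underline{n}>(1,\dots,1)$) only gives an upper bound off by a fixed constant, not the sharp value $\langle f,H^{-1}f\rangle$. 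The paper closes exactly this gap with a two-step argument you would need to reproduce: each \emph{fixed} excited mode contributes $0$ in the limit because $|\Lambda_L|\,E_{\underline{n}}(L)\geq c\,L^{\nu-2}\to\infty$, while the tail beyond a large multi-index $\underline{m}$ is bounded above, via convexity of $\mu\mapsto\alpha^{\underline{m}}(\mu)$ and the derivative estimate \eqref{eq: first important inequality}, by $\langle f,H^{-1}f\rangle\,(1-2\nu/m^2)^{-1}$ up to vanishing corrections; letting $m\to\infty$ then squeezes the limit. Without this (or an equivalent uniform estimate over the infinitely many modes whose distance to $\mu_L$ degenerates) your orthogonal-piece limit is unproved.
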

	\begin{proof}
     {Extract from the net $(\mu_L)_{L>0}$ an arbitrary sequence $(\mu_{L_{n}})_{n\geq 0}$, with $L_n \xrightarrow{n\to+\infty}+\infty$ and suppose that we can find an infinite subsequence $(L_{n_k})_{k\geq 0}$ satisfying  $\mu_{L_{n_k}} \leq 0$ for every $k$. Then, clearly \begin{equation}
        0\leq \alpha = \lim_{k\to+\infty}\frac{1}{|\Lambda_{L_{n_k}}|\beta(E_0(L_{n_k})-\mu_{L_{n_k}})} \leq\lim_{k\to+\infty}\frac{1}{|\Lambda_{L_{n_k}}|\beta(E_0(L_{n_k}))} =0,
    \end{equation} i.e. $\alpha = 0$. For these subsequences, the same argument of Lemma \ref{lemma: convergenza dell'hamiltoniana inversa} proves the convergence \begin{equation}
       \lim_{k\to +\infty }\exp{\frac{1}{2\beta}\langle f, (H_{L_{n_k}}-\mu_{L_{n_k}})^{-1} f\rangle} =  \exp{\frac{1}{2}\langle f, (\beta H)^{-1} f\rangle }. 
    \end{equation} Viceversa, if $\alpha >0$, these infinite subsequences do not exist. For sequences with an infinite number of positive chemical potentials $\mu_{L_n}$, we can always extract a positive subsequence $\mu_{L_{n_m}}\geq 0$. In the following, we prove that  for such subsequences, the states $\omega_0^{L_m, \mu_{L_m}}$ all converge to the same limit. Thus, we can suppose without loss of generality that $\mu_L>0$ for every $L>0$.}

		We introduce a partial ordering for multi-indexes $\underline{n}\in\mathbb{N}^{\nu}$ as follows: $\underline{n}> \underline{m}$ if $n_{i}\geq m_{i}$ for all $i\in \{1,\dots,\nu\}$ and $n_{j}>m_{j}$ for at least one index $j$. $(1,\dots, 1)\in\mathbb{N}^\nu$ will be labelled by $0$, to conform with the standard notation where $E_{0}(L) := E_{\{1,\dots,1\}}(L)$ is the ground state energy of the system.
		
		We expand the value of the generic state $\omega^{L,\mu_{L}}_{0}$ applied on Weyl elements $W^{0}(f)\in\mathcal{W}(E_{L},0)$ as \begin{equation}\label{eq: expansion of states}
			\omega^{L,\mu_{L}}_{0}(W^{0}(f)) = \exp{-\frac{1}{2\beta}\langle f, (H_{L} -\mu_{L}I)^{-1}f\rangle} = \exp{-\frac{1}{2\beta}\sum_{\underline{n}\in \mathbb{N}^{\nu}}\langle f, P_{\underline{n}}(L) f\rangle\frac{1}{E_{\underline{n}(L) } -\mu_{L}} },
		\end{equation} where we have introduced the projectors $P_{\underline{n}}(L)$ on the $\underline{n}$-element of the orthonormal basis of eigenvectors of $H_{L}$. We examine the terms appearing in the summation at the exponent in \eqref{eq: expansion of states}. For the ground state we have \begin{align}
			&\frac{1}{|\Lambda_{L}|}\frac{1}{\beta(E_{0}(L)-\mu_{L})}  \xrightarrow{L\to+\infty} \alpha,\\
			& |\Lambda_{L}| \langle f, P_{0}(L) f\rangle = 2^{\nu}|\int_{\Lambda_{L}}\dd^{\nu}x f(x) \prod_{i=1}^{\nu}\sin{(\frac{\pi}{2L} (x-L_{i}))}|^{2}\xrightarrow{L\to+\infty} 2^{\nu}|\int_{\mathbb{R}^{\nu}}\dd^{\nu}x f(x)|^{2},
		\end{align} so that, \begin{equation}\lim_{L\to+\infty}\langle f, P_{0}(L) f\rangle\frac{1}{\beta(E_{0}(L)-\mu_{L})} = 2^{\nu}\alpha |\int_{\mathbb{R}^{\nu}}\dd^{\nu} x f(x)|^{2}.
		\end{equation} Instead,  for $\underline{n}>(1,\ldots,1)$ the product between the volume and the projection term is bounded
  \begin{equation}
			|\Lambda_{L}|\langle f P_{\underline{n}}(L) f\rangle= 2^{\nu}|\int_{\Lambda_{L}}\dd^{\nu}x f(x) \prod_{i=1}^{\nu}\sin{(\frac{n_{i}\pi}{2L} (x-L))}|^{2} \leq 2^{\nu} (\int_{\mathbb{R}^{\nu}} |f(x)|)^{2} < \infty,
		\end{equation} {while the energy term converges to 0 when multiplied by $|\Lambda_{L}|^{-1}$ \begin{equation}
			\frac{1}{|\Lambda_{L}|\beta(E_{\underline{n}}(L)- \mu_{L})}= \frac{1}{|\Lambda_L|\beta(E_{\underline{n}}(L)- E_0(L) + E_0(L)-\mu_L)} \leq\frac{1}{|\Lambda_L|\beta(E_{\underline{n}}(L)- E_0(L))} \xrightarrow{\Lambda_L\uparrow \mathbb{R}^\nu} 0 .
		\end{equation}}
   Thus, we have for an arbitrary $\underline{m}> \{1,\dots, 1\}$  \begin{equation}\label{eq: indipendenza dal primo pezzo}
			\lim_{L\to+\infty} \sum_{\underline{n}> \{1,\dots,1\}}^{\underline{m}} \langle f, P_{\underline{n}}(L) f\rangle\frac{1}{E_{\underline{n}}(L) -\mu_{L}} = 0.
		\end{equation} At this point, we can fix an arbitrary $\underline{m}>\{1,\dots,1\}$ and consider the tail of the series \begin{equation}
			\alpha^{\underline{m}}(\mu_{L}):=\sum_{\underline{n}>\underline{m}}\langle f, P_{\underline{n}}(L) f\rangle\frac{1}{E_{\underline{n}}(L) -\mu_{L}}.
		\end{equation} For any such $\underline{m},\;\underline{m'}$ we will have
  \begin{equation}\limsup_{L\to+\infty}\alpha^{\underline{m}}(\mu_{L}) = \limsup_{L\to+\infty}\alpha^{\underline{m'}}(\mu_{L})\;,\qquad\liminf_{L\to+\infty}\alpha^{\underline{m}}(\mu_{L}) = \liminf_{L\to+\infty}\alpha^{\underline{m'}}(\mu_{L})\,.\end{equation}
		Now, $f(x) = 1/(1-x)$ is strictly increasing and convex for $x\in [0,1)$, hence,  $\alpha^{\underline{m}}(\mu_{L})$ is in turn  strictly increasing and convex as a function of $0<\mu_{L} < E_{0}(L)< E_{\underline{m}}(L)$. { Monotonicity entails
		\begin{align}\label{eq: limite con monotonia}\liminf_{L\to+\infty}\alpha^{\underline{m}}(\mu_{L}) \geq\liminf_{L\to+\infty}\alpha^{\underline{m}}(0)
			= & \liminf_{L\to+\infty}\left(\langle f, H_{L}^{-1} f\rangle
			- \sum_{\substack{\underline{n}\in \mathbb{N}^{\nu} \\ \underline{n}\geq\{1,\ldots,1\}}}^{\underline{m}}\langle f, P_{\underline{n}}(L) f\rangle
			\frac{1}{E_{\underline{n}}(L)}\right)\notag \\ \geq & \liminf_{L\to+\infty}\langle f, H_L^{-1} f\rangle -\limsup_{L\to+\infty}\sum^{\underline{m}}_{\substack{\underline{n}\in \mathbb{N}^{\nu} \\ \underline{n}\geq\{1,\ldots,1\}}}|\Lambda_L|\langle f, P_{\underline{n}}(L) f\rangle
			\frac{1}{|\Lambda_L| E_{\underline{n}}(L)} = \langle f, H^{-1} f \rangle,
		\end{align}
	  where we have used lemma \ref{lemma: convergenza dell'hamiltoniana inversa} for the convergence of $\langle f, H_{L}^{-1}f\rangle$ and $E_{\underline{n}}(L) |\Lambda_L| \geq  L^{\nu-2} \xrightarrow{L\to+\infty}  +\infty$.} We exploit convexity to obtain the other relevant estimate. First,  $\alpha^{\underline{m}}(\cdot)$ is differentiable, with derivative given by \begin{equation}\label{eq: first important inequality}
			(\frac{\dd}{\dd \mu}\alpha^{\underline{m}})|_{\mu = \mu_{L}} = \sum_{\underline{n}>\underline{m}}\langle f, P_{\underline{n}}(L) f\rangle\frac{1}{(E_{\underline{n}}(L) -\mu_{L})^{2}} \leq \alpha^{\underline{m}}(\mu_{L})\frac{1}{E_{\underline{m}}(\Lambda_{L})-\mu_{L}}.
		\end{equation}  The convexity property can be expressed as \begin{equation}\label{eq: second important inequality}
			\frac{\alpha^{\underline{m}}(\mu_L)-\alpha^{\underline{m}}(0) }{\mu_L}\leq(\frac{\dd}{\dd \mu}\alpha^{\underline{m}})|_{\mu= \mu_{L}} \leq \alpha^{\underline{m}}(\mu_{L})\frac{1}{E_{\underline{m}}(\Lambda_{L})-\mu_{L}},
		\end{equation}
		Now, for $\underline{m}>\{1,\dots,1\}$, if we call $m$ the greatest component of $\underline{m}$, we have
		{\begin{equation}
			\frac{E_{0}(L)}{E_{\underline{m}}(\Lambda_{L})}
			\leq \frac{\nu}{m^{2}}\,.
		\end{equation}}
		Choosing $m$ large enough so that $E_{\underline{m}}(\Lambda_{L})> 2E_{0}(L)$, inequality in \eqref{eq: second important inequality} can be solved for $\alpha^{\underline{m}}(\mu_{L})$, obtaining \begin{equation}
			\alpha^{\underline{m}}(\mu_{L})\leq \alpha^{\underline{m}}(0)\frac{E_{\underline{m}}(\Lambda_{L}) - \mu_{L} }{E_{\underline{m}}(\Lambda_{L}) - 2\mu_{L}}\leq \langle f, H^{-1} f\rangle\frac{1+(\alpha|\Lambda_{L}| E_{\underline{m}}(\Lambda_{L}))^{-1}}{1-2\nu/m^{2}}.
		\end{equation} Thus, for a fixed $\underline{n}>\{1,\dots,1\}$ we have\begin{equation}
			\limsup_{\Lambda_{L}\uparrow \mathbb{R}^{\nu}}\alpha^{\underline{n}}(\mu_{L})= \limsup_{\Lambda_{L}\uparrow \mathbb{R}^{\nu}}\alpha^{\underline{m}}(\mu_{L}) \leq \langle f, H^{-1}f\rangle\frac{1}{1-2\nu/m^{2}}.
		\end{equation} Since $\underline{m}$ is arbitrary, we obtain the chain of inequalities \begin{equation}
			\langle f, H^{-1} f\rangle 	\leq \liminf_{\Lambda_{L}\uparrow \mathbb{R}^{\nu}}\alpha^{\underline{m}}(\mu_{L}) \leq \limsup_{\Lambda_{L}\uparrow \mathbb{R}^{\nu}}\alpha^{\underline{m}}(\mu_{L})\leq\langle f, H^{-1} f\rangle,
		\end{equation} which implies that\begin{equation}
			\lim_{L\to+\infty} \sum_{\underline{n}> \{1,\dots,1\}} \langle f, P_{\underline{n}}(L) f\rangle\frac{1}{E_{\underline{n}}(L) -\mu_{L}} = \langle f, H^{-1} f\rangle.
		\end{equation} Then, for an arbitrary $f\in \bigcup_{\Lambda_{L}\subset \mathbb{R}^{\nu}} E_{L}$ we have \begin{multline}
			\lim_{\Lambda_L\uparrow\mathbb{R}^{\nu}}\omega^{\Lambda_{L},\mu_{L}}_{0}(W^{0}(f))  = \lim_{L\to+\infty}\exp{-\frac{1}{2\beta}\langle f, (H_{L}- \mu_{L} I)^{-1}f\rangle} \\=  \exp{-\frac{1}{2}\bigg{(} \langle f, (\beta H)^{-1} f\rangle + 2^{\nu}\alpha |\int_{\mathbb{R}^{\nu}} \dd^{\nu} x f(x)|^{2}\bigg{)}}= \overline{\omega^{\alpha,0}_{0}}(W^{0}(f))\,.
		\end{multline}
        By continuity and linearity we can verify the convergence for all elements of $\bigcup_{\Lambda_{L}\subset\mathbb{R}^{\nu}}\mathcal{W}(E_{L},0)$. Since every element of $\mathcal{S}(\mathbb{R}^{\nu})$ can be approximated by functions in $ \bigcup_{\Lambda_{L}\subset \mathbb{R}^{\nu}}\mathbb{L}^{2}(\Lambda_{L},\dd^{\nu}x)$ with the norm \begin{equation}
			\norm{f}_{\omega}:= \sqrt{|\hat{f}(0)|^{2} + \int_{\mathbb{R}^{\nu}}\frac{\dd^{\nu}p}{(2\pi)^{\nu}} |\hat{f}(p)|^{2}\frac{1}{p^{2}}},
		\end{equation} and since if $f_{n}\xrightarrow{\norm{\cdot}_{\omega}}f$ it follows that $W^{0}(f_{n})\rightarrow W^{0}(f)$ pointwise, we can extend by continuity the state to $\omega^{\alpha,0}_{0}\in \weylSO$, whose value on $\{W^{0}(f), \;f\in\mathcal{S}(\mathbb{R}^{\nu})\}$ is given by\begin{equation}
		\omega^{\alpha,0}_{0}(W^{0}(f)) = \exp{-\frac{1}{2}\bigg{(} \langle f, (\beta H)^{-1} f\rangle + 2^{\nu}\alpha |\int_{\mathbb{R}^{\nu}} \dd^{\nu} x f(x)|^{2}\bigg{)}}.
		\end{equation} 
        
        That the states $\omega_0^{\alpha,0}$ satisfy the $(\delta^0_0,\beta)$-weak KMS condition independently from $\alpha$ has been already verified in Th. \ref{th: Classical limit of the phase transition states}. This concludes the proof.
	\end{proof}

    \appendix
 
	\section{Weak derivations}\label{sec: appendice A}
\begin{proof}[Proof of proposition \ref{prop: caratterizzazione derivazioni continue I}:]\label{proof: caratterizzazione delle derivazioni continue e lineari}
	From the definition \ref{def: weak derivation} we know that if $\lambda,\mu\in\mathbb{C}$ and  $f,g\in E(\delta_{0})$, then $\lambda f + \mu g \in E(\delta_{0})$. Moreover, $\delta_{0}(W^{0}(f)) =: \Psi(f)\in C(E'_{\norm{\cdot}})$ with $\mathbb{C}\ni\lambda\rightarrow \Psi(\lambda f)[g]$ continuous for every $g\in E'_{\norm{\cdot}}$.
    Thanks to $W^{0}(f)$ being unitary and pointwise continuous in $f$, we can always write $\Psi(f) = \varphi(f)W^{0}(f)$ with $\varphi(f)\in C(E'_{\norm{\cdot}})$, for all $f\in E(\delta_{0})$.
    Now, by exploiting property (b) in \ref{def: weak derivation}  we obtain additivity:\begin{equation}
		\delta_{0}(W^{0}(f+g)) = \varphi(f+g)W^{0}(f+g) = (\varphi(f)+\varphi(g))W^{0}(f+g),\quad f,g\in E(\delta_{0}),
	\end{equation}The latter relation implies $\varphi(f+g) = \varphi(f)+\varphi(g)$. Secondly, it is clear that $0\in E(\delta_{0})$ and $\delta_{0}(W^{0}(0)) = 0$. Lastly, we derive the $\mathbb{R}$-linearity by a standard approximation argument. We start with \begin{equation}
		\delta(W^{0}(2f)) = \varphi(2f)W^{0}(2f) = 2\varphi(f)W^{0}(2f),\quad f\in E(\delta_{0}).
	\end{equation} By induction, it is easy to verify that $\forall n\in \mathbb{N}$ $\varphi(n f) = n\varphi(f)$. By simply taking $f = \Tilde{f}1/n$, we obtain that for every $f\in E(\delta_{0})$, $n\in \mathbb{N},\; n \neq 0$, $\varphi(f/n) = \varphi(f)/n$. So, for every $q\in \mathbb{Q}$, we have $\varphi(q f) = q\varphi(f)$. Now, for fixed $\lambda\in \mathbb{R}$, for all $\epsilon > 0,\; \text{there exists}\; q\in \mathbb{Q}$ such that $|q-\lambda| < \epsilon$. Moreover, for fixed $g\in E'_{\norm{\cdot}}$ there exists $\delta > 0$ such that $|\lambda -\lambda'|< \delta$ implies $|\varphi(\lambda f)(g)- \varphi(\lambda' f) (g)|< \epsilon/2$. By choosing $q\in \mathbb{Q}$ such that $|q-\lambda|< \min\{\delta,\epsilon/(2|\varphi(f)|,\epsilon/2\}$\footnote{if $\varphi(f) = 0$ it is sufficient to take the minimum between $\delta$ and $\epsilon/2$.} we obtain \begin{align}
		|\varphi(\lambda f)[g] - \lambda\varphi(f)[g]| \leq |\varphi(f)[g]| |\lambda - q| + |\varphi(\lambda f)[g] - \varphi(q f)[g]|< \epsilon.
	\end{align} Since $\epsilon$ is arbitrary, we obtain that for every $g\in E$, $\varphi(\lambda f)[g] = \lambda \varphi(f)[g]$, and so, we conclude that for every $f\in E(\delta_{0}),\; \lambda \in \mathbb{R}$ we have $\varphi(\lambda f) = \lambda \varphi(f)$. 
	
	Now, thanks to the linearity property, we know that $W^{0}(-f)\delta_{0}(W^{0}(f)) = \varphi(f)$ is a $\mathbb{R}$-linear, $\sigma(E_{\norm{\cdot}}',E)$-continuous functional. Moreover, from \begin{equation}
		\varphi(f)^{*} = (W^{0}(-f)\delta_{0}(W^{0}(f)))^{*}  = -\varphi(f),
	\end{equation} follows that $\varphi(f)[g]\in i\mathbb{R}$ for all $g\in E$. Then, we must have \begin{equation}
		\varphi(f)[g] = i\Re{ \langle g, \Tilde{f}\rangle},\quad \forall g\in E,
	\end{equation} for some fixed $\Tilde{f}\in E$. Now, we define an operator $L_{0}: E(\delta_{0})\rightarrow E$ by setting \begin{equation}
		L_{0}f := \Tilde{f}.
	\end{equation} This is a well-defined operator by non degeneracy of the scalar product. Linearity of the operator follows from the linearity of $\varphi(f)$ on $f$.
\end{proof}

	\begin{proof}[Proof of proposition \ref{prop: chiusura pointwise I}]
			$(\Rightarrow)$ If $\delta_{0}$ is pointwise closed, let us consider a sequence $\{f_{n}\}_{n\in\mathbb{N}}\subset E(\delta_{0})$ such that $f_{n}\rightarrow f$ and $L_{0}f_{n}\rightarrow \Tilde{f}$ in the $\sigma(E,E'_{\norm{\cdot}})$ topology. We have \begin{equation}
				\delta_{0}(W^{0}(f_{n})) = i\Phi_{0}(L_{0}f_{n})W^{0}(f_{n}) \rightarrow i\Phi_{0}(\Tilde{f})W^{0}(f)
			\end{equation} in the pointwise convergence topology.
			But as $\delta_{0}$ is pointwise closed, this implies that $f\in E(\delta_{0})$ and $i\Phi_{0}(\Tilde{f})W^{0}(f) = i\Phi_{0}(L_{0}f)W^{0}(f)$, which implies $f\in D(L_{0})$ and $\Tilde{f} = L_{0}f$.
			
			($\Leftarrow$) Now we assume that $L_{0}$ is closed. We consider $\{f_{n}\}_{n\in\mathbb{N}}\subset E(\delta_{0})$ such that $f_{n}\rightarrow f$ in $\sigma(E,E'_{\norm{\cdot}})$ and $\delta_{0}(W^{0}(f_{n}))\rightarrow \Psi \in C(E'_{\norm{\cdot}})$ pointwise. Then, if we call $\varphi:= W^{0}(-f)\Psi$, we have \begin{align}
				\varphi[\lambda g] &= \lim_{n\rightarrow \infty} W^{0}(-f_{n})\delta_{0}(W^{0}(f_{n}))[\lambda g] = \lambda \varphi[g],\nonumber \\
				\varphi[g+h] &= \lim_{n\rightarrow \infty} W^{0}(-f_{n})\delta_{0}(W^{0}(f_{n}))[ g+h] = \varphi[g] +\varphi[h],
			\end{align} so, $\varphi$ is $\mathbb{R}$-linear and continuous in the $\sigma(E'_{\norm{\cdot}},E)$-topology.
			Moreover, since for all $g\in E'_{\norm{\cdot}}$ it holds $W^{0}(-f_{n})[g]\delta_{0}(W^{0}(f_{n}))[g]\in i\mathbb{R}$, we conclude that $\varphi^{*} = -\varphi$.
			These conditions imply that {there exists a unique $\Tilde{f}\in E$ such that $\varphi[g] = i\Phi_{0}(\Tilde{f})[g]$ for all $g\in E'_{\norm{\cdot}}$}. This is equivalent to  $\sigma(g, L_{0}f_{n}) \rightarrow \sigma(g,\Tilde{f})$.
			Since $L_{0}$ is closed, in particular it is also $\sigma(E,E'_{\norm{\cdot}})-\sigma(E,E'_{\norm{\cdot}})$ closed and this implies that $f\in D(L_{0})$ and $L_{0}f = \Tilde{f}$, or in other words, {by definition of $D(L_0)$}, that $W^{0}(f)\in D(\delta_{0})$ and $\delta_{0}(W^{0}(f)) = \Psi$.
	\end{proof}
	
	\section{Measures and cylindrical measures on Hilbert Spaces}\label{sec: appendice B}

	\begin{proof}[Proof of lemma \ref{lemma: estensione a funzioni limitate} ]
 Let us take $F,G\in C^{\infty}_{b,cyl}(\mathcal{H}^{-s})$. Then, there exists a common $n\in \mathbb{N}$ sufficiently large such that both function can be written as $F(u) = \varphi(\pi_{n}(u)),\; G(u) = \psi(\pi_{n}(u))$, for some $\varphi, \psi \in C^{\infty}_{b}(\mathbb{R}^{2n})$. {Now, for every function $\varphi,\psi$ in $C^{\infty}_{b}(\mathbb{R}^{2n})$, we can find sequences  $\{\psi_m\}_{m\in\mathbb{N}},\{\varphi_{m}\}_{m\in\mathbb{N}}\subset C^{\infty}_{c}(\mathbb{R}^{2n})$ approximating $\psi,\varphi$ and their derivatives pointwise. We will call $F_m(u) := \varphi_m(\pi_n(u))$ and $G_m(u):= \psi_m(\pi_n(u))$. Then, $F_m,G_m \in C_{c,cyl}^\infty(\Phi)$.
		Since the measures which define the cylindrical measure $\mu_{*}$, $\{\mu_{L_{n}}\}_{n\in\mathbb{N}}$ are all normalized to one, we  conclude by the dominated convergence that \begin{align}
		    &\int_{\mathcal{H}^{-s}}\{F,G\}(u)\dd\mu(u) = \lim_{m\to +\infty} \int_{L_n}\{F_m,G_m\}(u) \dd \mu_{L_n}(u) \notag \\ =& \lim_{m\to+\infty}\beta\int_{L_n}\langle\nabla F_m(u),X(u)\rangle G_m(u)\dd\mu_{L_n}(u) = \beta\int_{\mathcal{H}^{-s}}\langle\nabla F(u), X(u)\rangle G(u)\dd\mu(u).
		\end{align}}
		\end{proof}
		\begin{proof}[Proof of proposition \ref{prop: caratterizzazioni equivalenti della condizione KMS cilindrica}.]
			(i)$ \Rightarrow$(ii) From the definition of cylindrical KMS-state follows that $ \mathcal{H}^{-s}\ni u\rightarrow\langle\varphi, X(u)\rangle\in\mathbb{R}$ is $\mu_{L_{n}}$-integrable for all $\varphi\in L_{n}$, for all $n\in\mathbb{N}$. We choose $\varphi_{1},\varphi_{2}\in \mathbb{R}$-span$L_{n_{0}}$ for arbitrary $n_{0}\in\mathbb{N}$. By taking $f,g \in \mathcal{H}^{s}$  such that $f = \varphi_{1}$, $\varphi_{2} = f+g$ we have the equality\begin{align}
		\Re{\langle i \varphi_{1}, \varphi_{2}\rangle}e^{i\langle \varphi_{2}, u\rangle}& = \Re{\langle if, g \rangle} e^{i\langle f+g, u\rangle} = -\{e^{i\langle f, u\rangle}, e^{i\langle g, u\rangle}\}.
		\end{align} Integrating this equality with the  measures $\{\mu_{L_{n}}\}_{n\geq n_{0}}$, we get \begin{multline}
		\Re{\langle i\varphi_{1}, \varphi_{2}\rangle }\int_{L_{n}}e^{i\langle\varphi_{2}, u\rangle}\dd\mu_{L_{n}}(u) = -\int_{L_{n}}\{e^{i\langle f,u\rangle}, e^{i\langle g,u\rangle}\}\dd\mu_{L_{n}}(u)	\\ = -i\beta\int_{L_{n}}\langle f, X(u)\rangle e^{i\langle  f+g, u\rangle}\dd\mu_{L_{n}}(u) = -i\beta\int_{L_{n}}\langle\varphi_{1}, X(u) e^{i\langle \varphi_{2}, u\rangle}\dd\mu_{L_{n}}(u)
		\end{multline}
    where in the second equality we have used the cylindrical KMS-condition for $F(\cdot) = e^{i\langle f, \cdot\rangle}, \; G(\cdot) = e^{i\langle g,\cdot \rangle}\in C^{\infty}_{b,cyl}(\mathcal{H}^{-s})$.
		
		(i)$ \Leftarrow $(ii) We take arbitrary $F, G\in C^{\infty}_{c,cyl}(\mathcal{H}^{-s})$, which can be written in terms of some $\varphi\in C^{\infty}_{c}(\mathbb{R}^{2n})$ and $\psi\in C^{\infty}_{c}(\mathbb{R}^{2m})$ for some $n,m\in \mathbb{N}$. We can express the Poisson bracket $\{F, G\}(u)$ in terms of the Fourier transforms of $\varphi$ and $\psi$ \cite[Lem. 2.2]{Ammari_Sohinger_2023}
		\begin{multline}
			\{F, G\}(u) = -(2\pi)^{\frac{n+m}{2}}\int_{\mathbb{R}^{n}\cross\mathbb{R}^{m}} \dd\mathcal{L}_{2n}(\lambda,\lambda')\dd\mathcal{L}_{2m}(\mu,\mu')\hat{\varphi}(\lambda_{1},\lambda'_{1},\dots,\lambda_{n},\lambda'_{n})\hat{\psi}(\mu_{1},\mu_{1}',\dots, \mu_{m},\mu'_{m})\\ \sum_{k,j=1}^{n}\Re{\langle i( \lambda_{j}e_{j} +\lambda'_{j}f_{j}), \mu_{k}e_{k} + \mu'_{k}f_{k}\rangle}e^{i\sum_{j=1}^{n}\lambda_{j}\langle e_{j}, u\rangle+\lambda_{j}'\langle f_{j}, u\rangle}e^{i\sum_{k=1}^{m}\mu_{k}\langle e_{k}, u\rangle+\mu_{k}'\langle f_{k}, u\rangle}.
		\end{multline} Now, we take the identity \begin{equation}
		\Re{\langle if, g\rangle}\int e^{i\langle f+g, u\rangle}\dd\mu_{*}(u) = -i\beta\int\langle f, X(u)\rangle e^{i\langle f+g, u\rangle}\dd\mu_{*}(u),
		\end{equation} with $f = \sum_{j=1}^{n}\lambda_{j}e_{j}+\lambda_{j}'f_{j}$ and $g = \sum_{k=1}^{m}\mu_{k}e_{k} +\mu_{k}'f_{k}$, we multiply both sides by $\hat{\varphi}\hat{\psi}$ and we integrate with the measure $\dd\mathcal{L}_{2n}(\lambda,\lambda')\dd\mathcal{L}_{2m}(\mu,\mu')$. Proceeding this way we obtain (assuming without loss of generality that $m\geq n$) \begin{equation}
		\int_{L_{m}}\{F, G\}(u)\dd\mu_{L_{m}}(u) = \beta\int_{L_{m}}\langle \nabla F (u), X(u)\rangle G(u)\dd\mu_{L_{m}}(u),
		\end{equation} which is exactly the cylindrical KMS condition \ref{def: KMS cilindrici}.
				\end{proof}

\section{Some remarks on the classical KMS condition}\label{sec: appendice C}
A major difficulty when dealing with thermal equilibrium on Weyl algebras is related to the fact that the $^*$-automorphisms of interest are rarely strongly continuous.
This spoils the possibility to use the usual quantum KMS conditions formulated in terms of $C^*$-dynamical systems.
A similar problem in the classical setting  led us to formulate the classical weak KMS condition \ref{def: classical KMS condition} which can be verified directly on $\mathcal{W}(E,0)$.
In this appendix we formulate an \textit{analytic} classical KMS condition for von Neumann algebras, and we prove that the weak KMS-states are also KMS states in their GNS representations.

 We consider a separable, complex Hilbert space $\mathcal{H}$.
Let us take take $H:D(H)\subset \mathcal{H} \rightarrow \mathcal{H}$ a self-adjoint, positive operator, with $0$ outside of $H$'s point spectrum.   C
{\begin{assumption}\label{assumtpions on D}Consider a subspace $\mathcal{D}$ of $D(H)$ with the following properties:
\begin{itemize}
	\item [(a)] $\mathcal{D}$ is norm dense in $\mathcal{H}$;
	\item [(b)] $\mathcal{D}$ is composed of analytic vectors for $H$ and $e^{iHt}\mathcal{D}\subset\mathcal{D}$;
	\item [(c)] $\mathcal{D}\subset D(H^{-1})$.
\end{itemize} 
\end{assumption}}As a concrete example we consider \begin{equation}
	\mathcal{D} = \{ P^{(H)}((a,b])\psi,\;\psi \in \mathcal{H},\; 0<a<b<\infty\},
\end{equation} where $P^{(H)}(\cdot)$ is the PVM of the self-adjoint operator $H$.

We work with the classical Weyl $C^*$-algebra $\mathcal{W}(\mathcal{D},0)$.
For every $f\in\mathcal{D}$, we define the linear, continuous weak derivation $\delta_{0}(W^{0}(f)):= i\Phi_{0}(iH f)W^{0}(f)$. Then, we always have a $(\delta_{0},\beta)$-weak KMS state\begin{equation}\label{eq: generic gaussian KMS}
	\omega(W^{0}(f)) = e^{-\frac{1}{2\beta}\langle f, H^{-1} f\rangle}.
\end{equation}  
Moving to the GNS representation $(\pi_{\omega},\mathcal{H}_{\omega},\Omega_{\omega})$ associated with $\omega$, the estimate \begin{equation}
	\norm{(\pi_{\omega}(W^{0}(f)- W^{0}(g))\pi_{\omega}(W^{0}(h)))\Omega_{\omega}}_{\omega}\leq c_{h}(\norm{f}+\norm{g})\norm{f-g}_{H},\quad f,g,h\in \mathcal{D}
\end{equation} with $\norm{f}_{H}:= \langle f, H^{-1}f\rangle^{1/2}$, follows by a simple extension of \cite[Prop. 5.2.29, (2)]{Bratteli_Robinson_97}. In particular, it is clear that the group of $^*$-automorphism defined on $\pi_{\omega}(\mathcal{W}(\mathcal{D},0))$ as $\tau^{\omega}_{t}(\pi_{\omega}(A)):= \pi_{\omega}(\tau_{t}(A))$ is $\sigma$-weakly continuous since{\begin{align}
& \lim_{t\to 0}\norm{(\tau^\omega_t(\pi_\omega(W^0(f))-\pi_\omega(W^0(f)))\pi_\omega(W^0(h)))\Omega_\omega}_\omega\notag \\ 
	=&\lim_{t\to 0}\norm{(\pi_\omega(W^0(e^{iHt}f)-W^0(f))\pi_\omega(W^0(h)))\Omega_\omega}_\omega\notag \\ \leq &\lim_{t\rightarrow 0}2c_h \norm{f}\norm{e^{iHt}f -f}_H = 0,\quad f\in\mathcal{D}.
\end{align}} Thus, $\tau^{\omega}$ extends $\sigma$-weak continuously to a $^*$-automorphism on the von-Neumann algebra $\pi_{\omega}(\mathcal{W}(\mathcal{D},0))''$. 

{Now, we wish to find a $\sigma$-weakly dense subspace of $\pi_\omega(\mathcal{W}(\mathcal{D},0))''$ where it is possible to define a \textit{Poisson bracket} and such that $\pi_\omega(\mathcal{W}(\mathcal{D},0))''\subset D(\delta_\omega)$, where $\delta_\omega$ is the \textit{true} derivation associated with continuous automorphism $\tau_\omega$. Let us denote by $\mathcal{F}$ the Fourier transform operator. We consider the set of analytic functions \begin{equation}
	\mathcal{A}_n:=\{\chi: \mathbb{R}^{n}\rightarrow \mathbb{R},\;\chi(t) = \mathcal{F}(\psi)(t), \text{ for }\psi\in C_c^{\infty}(\mathbb{R}^n)\},\quad n\in\mathbb{N},\; n>0. 
\end{equation} Note that $\mathcal{A}_n$ is closed under products, differentiation, and if $\chi_1\in \mathcal{A}_n,\;\chi_2\in \mathcal{A}_m$, then \begin{equation}(t_1,\dots,t_{n+m})\to\chi_1(t_1,\dots, t_n)\chi_2(t_{n+1,\dots ,t_{n+m}}) \in \mathcal{A}_{n+m}.\end{equation} Moreover, by the \textit{Paley-Wiener theorem} \cite[Th IX.11]{Reed_SimonII}, the analytic extension  $\chi\in \mathcal{A}_n$ satisfies \begin{equation}\label{eq: Paley-Wiener}
    |\chi(\zeta)| \leq \frac{C_N e^{R |\Im{\zeta}|}}{(1+|\zeta|)^N}, \quad \text{for all }  N\in\mathbb{N},  
\end{equation} where $R$ is some fixed constant (Eq. \eqref{eq: Paley-Wiener} is also sufficient for having $\chi\in \mathcal{A}_n$).  With this set, we can define a $^*$-subalgebra of $\pi_\omega(\mathcal{W}(\mathcal{D},0))''$ by taking \begin{equation}\label{eq: analytic set of weyl}
	\mathcal{W}(\mathcal{A}):=\text{LH}\{ \int_{\mathbb{R}^n}\pi_{\omega}(W^{0}(\sum_{i=1}^{n}e^{iHt_{i}}f_{i}))\chi(t)\dd^{n}t,\; \{f_{k}\}_{k\in\{1,\dots, n\}}\subset\mathcal{D},\;\chi\in \mathcal{A}_n, \;n\in\mathbb{N}\},
\end{equation} {where the integrals in \eqref{eq: analytic set of weyl} are defined  in the $\sigma$-weak topology. The elements in $\mathcal{W}(\mathcal{A})$ are all analytic for $\tau^{\omega}$ and form a $\sigma$-weakly dense set in $\pi_{\omega}(\mathcal{W}(\mathcal{D},0))''$ \cite[Prop. 2.5.22]{Bratteli_Robinson_87}}. A Poisson bracket can be defined by (the definition on all the elements of $\mathcal{W}(\mathcal{A})$ follows naturally)\begin{multline}
	\{\int^{\infty}_{0}\tau^{\omega}_{t}(\pi_{\omega}(W^{0}(f)))\chi_{1}(t)\dd t,\int^{\infty}_{0}\tau^{\omega}_{s}(\pi_{\omega}(W^{0}(g)))\chi_{2}(s)\dd s\}\\:= \int \pi_{\omega}(W^{0}(e^{iHt}f+e^{iHs}g))\chi_{1}(t)\chi_{2}(s)\sigma(e^{iHs}g,e^{iHt}f)\dd t\dd s,
\end{multline} Note that the function $\mathbb{C}^2\ni(z,w)\to \sigma(e^{iHz}g,e^{iHt}w)\in\mathbb{R}$ is analytic thanks the regularity properties of $f,g \in\mathcal{D}$. Then, $f = P^{(H)}((a,b])f,\; g = P^{(H)}(a,b])g$ for some common $0<a<b$ and\begin{equation}
\sup_{\substack{z,w\in\mathbb{C}^2\\ |\Im{z}|,|\Im{w}|\leq \alpha}}|\partial_{w}^{n}\partial_{z}^{m}\sigma(e^{iHz}g, e^{iHw}f)|\leq b^{n+m}e^{2\alpha b} < +\infty. \end{equation}
 Hence, $\mathbb{R}^2\ni (t,s)\to \chi_1(t)\chi_2(s)\sigma (e^{iHs)}g,e^{iHt}g)\in \mathcal{A}_2$ \cite[Th IX.11]{Reed_SimonII} and the Poisson brackets leave $\mathcal{W}(\mathcal{A})$ invariant.}

{
 The derivation $\delta_\omega$ is well defined on $\mathcal{W}(\mathcal{A})$ as the infinitesimal time limit, in the $\sigma$-weak topology, of $\tau^{\omega}$:\begin{align}
	&\delta^{\omega}_{0}(\int^{\infty}_{0}\pi_{\omega}(W^{0}(\sum_{i=1}^ne^{iHt_i}f_i))\chi(t)\dd^n t) = \lim_{s\to 0}\int^{\infty}_{0}\frac{1}{s}(\tau^\omega_s- I)(\pi_{\omega}(W^{0}(\sum_{i=1}^ne^{iHt_i}f_i)))\chi(t)\dd^n t\notag \\=&-   \lim_{s\to 0}\int^{\infty}_{0}\pi_{\omega}(W^{0}(\sum_{i=1}^ne^{iHt_i}f_i))\frac{\chi(t_1+s,\dots,t_n+s)-\chi(t_1,\dots,t_n)}{s}\dd^n t \notag \\ =& -\int^{\infty}_{0}\pi_{\omega}(W^{0}(\sum_{i=1}^ne^{iHt_i}f_i))\chi'(t)\dd^n t \in \pi_{\omega}(\mathcal{W}(\mathcal{D},0))'',
\end{align} where for $\chi \in \mathcal{A}_n$ we have defined \begin{equation}\chi'(t):= \frac{\partial}{\partial_s}\chi(t_1+s,\dots, t_n+s)|_{s = 0}.\end{equation}}

{Now, we formulate a definition of classical KMS states, closer to the usual non-commutative one for von Neumann algebras. Note that, if $\mathfrak{M}$ is a commutative von Neumann algebra, then it can be regarded as a function space, so that it does make sense to speak about Poisson $ ^*$-subalgebras and Poisson brackets. In particular, if $\mathfrak{A}$ is a $C^*$-algebra with a given Poisson structure $(\dot{\mathfrak{A}},\{\;,\;\})$ \cite[Sec. 1]{Drago_Pettinari_VandeVen_2024}, then, for every representation $(\pi,\mathcal{H}_\pi)$ , $\pi(\mathfrak{A})''$ is a von Neumann algebra with Poisson structure given by $(\pi(\dot{\mathfrak{A}}),\{\;,\;\}_\pi)$, \begin{equation}
    \{\pi(a), \pi(b)\}_\pi := \pi(\{a,b\}),\quad a,b\in \dot{\mathfrak{A}}.
\end{equation}} \begin{definition}(\textbf{KMS state for von Neumann algebras})\label{def: classical strong KMS state}
	Let $(\mathfrak{M},\tau)$ be a commutative $W^*$-dynamical system endowed with a Poisson structure $(\dot{\mathfrak{M}},\{\;,\;\})$, where $\dot{\mathfrak{M}}$ is a $\sigma$-weakly dense, $\tau$-invariant Poisson $^*$-subalgebra  of $\mathfrak{M}_\tau$, the $\tau$-analytic elements of $\mathfrak{M}$. Let  $\omega$ be a normal state, then, $\omega$ is a $(\delta,\beta)$-classical KMS state if the following identity is satisfied\begin{equation}
		\beta\omega(a\delta(b)) = \omega(\{b,a\}),\quad a,b\in\dot{\mathfrak{M}},
	\end{equation} where $\delta :\mathfrak{M}_\tau\rightarrow \mathfrak{M}$ is the derivation obtained from $\tau$ as\begin{equation}\label{eq: strong derivation}
		\delta(a) = \sigma\text{-weak-}\lim_{t\rightarrow 0}\frac{\tau_{t}(a)-a}{t},\quad a\in\mathfrak{M}_\tau.
	\end{equation}
\end{definition}
{The quantum KMS condition for von Neumann algebras requires that $\tau$ be a $\sigma$-weakly continuous one-parameter group of $^*$-automorphisms, and that the identity  
\begin{equation}
    \omega(A \, \tau_{i\beta}(B)) = \omega(BA)
\end{equation}
hold for all $A,B$ in the $\sigma$-weakly dense domain of $\tau$-analytic elements $\mathfrak{M}_\tau$ \cite[Def.~5.3.1]{Bratteli_Robinson_97}. Since distinct equilibrium states on a $C^*$-algebra correspond to inequivalent GNS representations, it would be preferable to formulate the equilibrium condition directly at the level of the $C^*$-algebra $\mathcal{W}(E,h\sigma)$. However, an analytic $C^*$-KMS condition is not meaningful in this setting, as automorphism groups are rarely strongly continuous \cite{Fannes_Verbeure_(74)}. An analogous obstruction arises for classical Weyl algebras: one cannot define derivations there (see Sec.~\ref{sec: classical KMS condition}), whereas in Def.~\ref{def: classical strong KMS state} the operator $\delta$ is a genuine derivation, defined on a $\sigma$-weakly dense domain of $\pi_\omega(W(E,0))''$. Nevertheless, the next proposition shows that weak KMS states automatically satisfy the requirements of Def.\ref{def: classical KMS condition}.}

\begin{proposition}
	Let $H\colon D(H)\subset \mathcal{H}\rightarrow \mathcal{H}$ be a self-adjoint, positive operator with $0$ outside of its point spectrum\footnote{{These conditions are not used in the proof, but they are necessary for finding a subspace $\mathcal{D}$ as above. Positivity can always be obtained for an Hamiltonian bounded from below by subtracting a negative chemical potential.}}  and $\mathcal{D}$ a subspace of  satisfying Assumptions \ref{assumtpions on D}. Then, consider the weak derivation $\delta_{0}\colon \mathcal{W}(\mathcal{D},0)\rightarrow C(\mathcal{D}'_{\norm{\cdot}})$, $\delta_{0}(W^{0}(f)) = i\Phi_{0}(iHf)W^{0}(f)$ and the derivation $\delta^{\omega}_{0} \colon \mathcal{W}(\mathcal{A})\rightarrow \pi_{\omega}(\mathcal{W}(\mathcal{D},0))''$ defined by following the prescription in equation \eqref{eq: strong derivation}. If $\omega$ is a $(\delta_{0},\beta)$-weak KMS state on $\mathcal{W}(\mathcal{D},0)$, then it is a $(\delta^{\omega}_{0},\beta)$-KMS state.
\end{proposition}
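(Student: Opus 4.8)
The plan is to verify the identity $\beta\,\omega(a\,\delta^\omega_0(b)) = \omega(\{b,a\})$ for $a,b$ in the $\sigma$-weakly dense Poisson $^*$-algebra $\mathcal{W}(\mathcal{A})$ by reducing everything to the weak KMS identity $\omega_0(\{A,B\}) = \beta\,\omega_0(B\,\delta_0(A))$ for $A,B\in D(\delta_0)$ on the $C^*$-level. Since $\mathcal{W}(\mathcal{A})$ is the linear hull of the elements $a_\chi := \int_{\mathbb{R}^n}\pi_\omega(W^0(\sum_i e^{iHt_i}f_i))\chi(t)\,d^nt$, and since both sides of the desired equation are bilinear in $(a,b)$, it suffices to check it for $a = a_\chi$ and $b = a_\psi$ of this form. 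First I would record that, because $\omega$ is normal and the integrals defining $a_\chi$ converge $\sigma$-weakly, we may pull $\omega$ inside the integrals: $\omega(a_\chi \, c) = \int \chi(t)\,\omega\big(\pi_\omega(W^0(\textstyle\sum_i e^{iHt_i}f_i))\,c\big)\,d^nt$ for any $c\in\pi_\omega(\mathcal{W}(\mathcal{D},0))''$, and similarly for products with $b_\psi$. This turns the statement into an identity between ordinary (finite-dimensional) integrals of the scalar functions $t\mapsto \omega(\text{Weyl}\cdot\text{Weyl})$, which can be compared pointwise in the integration variables.

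Next I would compute the two integrands explicitly. On the left, using the expression for $\delta^\omega_0$ derived in the excerpt, $\delta^\omega_0(a_\psi) = -\int \pi_\omega(W^0(\sum_j e^{iHs_j}g_j))\,\psi'(s)\,d^ms$ where $\psi'(s) = \partial_u\psi(s_1+u,\dots,s_m+u)|_{u=0}$; but there is a cleaner route: note that $\delta^\omega_0(a_\psi)$ can also be written, by the very definition of $\delta_0$ on a single Weyl element together with the chain rule applied to the integrand's $s$-dependence, in the "field" form $-i\int \Phi_{\omega}(iH\sum_j e^{iHs_j}g_j)\,\pi_\omega(W^0(\sum_j e^{iHs_j}g_j))\,\psi(s)\,d^ms$ after an integration by parts in $s$ (the boundary terms vanish by the Paley--Wiener decay \eqref{eq: Paley-Wiener}). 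With this form, $\omega(a_\chi\,\delta^\omega_0(a_\psi))$ becomes an integral of $\omega_0\big(W^0(F(t))\,\Phi_0(iHG(s))\,W^0(G(s))\big)$ against $\chi(t)\overline{\psi(s)}$ — wait, more precisely against $\chi(t)\psi(s)$ with the appropriate argument placement — where $F(t) = \sum_i e^{iHt_i}f_i$, $G(s) = \sum_j e^{iHs_j}g_j$. On the right, $\omega(\{b_\psi,a_\chi\})$ unfolds directly from the definition of the Poisson bracket on $\mathcal{W}(\mathcal{A})$ into an integral of $\sigma(e^{iHt}\!\cdots,e^{iHs}\!\cdots)\,\omega_0(W^0(F(t)+G(s)))$ against $\chi(t)\psi(s)$.

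The key step is then purely algebraic and happens inside the integrals: for each fixed $(t,s)$, the weak KMS identity at the $C^*$-level gives exactly $\omega_0(\{W^0(G(s)),W^0(F(t))\}) = \beta\,\omega_0(W^0(F(t))\,\delta_0(W^0(G(s))))$, i.e. $\sigma(F(t),G(s))\,\omega_0(W^0(F(t)+G(s))) = i\beta\,\omega_0(\Phi_0(iHG(s))\,W^0(F(t)+G(s)))$, provided $G(s)\in E(\delta_0) = \mathcal{D}$ — which holds because $e^{iHs_j}\mathcal{D}\subset\mathcal{D}$ by Assumption \ref{assumtpions on D}(b), so $G(s)\in\mathcal{D}$ for every $s$. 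Integrating this pointwise identity against $\chi(t)\psi(s)\,d^nt\,d^ms$ and using the interchange of $\omega$ with the $\sigma$-weak integrals established in the first step yields $\omega(\{b_\psi,a_\chi\}) = \beta\,\omega(a_\chi\,\delta^\omega_0(a_\psi))$, and bilinearity extends this to all of $\mathcal{W}(\mathcal{A})$. It remains to note that $\mathcal{W}(\mathcal{A})$ is $\sigma$-weakly dense in $\pi_\omega(\mathcal{W}(\mathcal{D},0))''$, consists of $\tau^\omega$-analytic elements, is $\tau^\omega$-invariant and closed under the Poisson bracket — all shown in the excerpt — so it is an admissible Poisson $^*$-subalgebra in the sense of Def.~\ref{def: classical strong KMS state}, and $\omega$ is normal on this von Neumann algebra (being the GNS vector state).

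The main obstacle I anticipate is the rigorous justification of the manipulations under the integral sign: exchanging $\omega$ with the $\sigma$-weak vector-valued integrals, performing the integration by parts in $s$ that rewrites $\delta^\omega_0(a_\psi)$ in field form, and controlling the unbounded operators $\Phi_0(iHG(s))$ uniformly in $s$. The Paley--Wiener bound \eqref{eq: Paley-Wiener} on $\chi,\psi$ and the spectral localization $f,g = P^{(H)}((a,b])f, P^{(H)}((a,b])g$ of the generating vectors are exactly what make all the relevant functions of $(t,s)$ rapidly decreasing and the operator norms locally bounded, so that Fubini and dominated-convergence arguments apply; assembling these estimates carefully is the technical heart of the proof, while the algebra itself is a direct consequence of the $C^*$-level weak KMS property.
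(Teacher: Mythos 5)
Your proposal is correct and follows essentially the same route as the paper: reduce by bilinearity to the generators of $\mathcal{W}(\mathcal{A})$, pull the normal state through the $\sigma$-weak integrals, apply the $C^*$-level weak KMS identity pointwise in $(t,s)$ (using $e^{iHt}\mathcal{D}\subset\mathcal{D}$), and then integrate by parts to trade the field-function factor $i\Phi_0(iH\,e^{iHt}f)W^0(e^{iHt}f)$ for the derivative $\chi'$ of the test function, which is exactly $\delta^\omega_0$. The paper runs the same chain of equalities starting from the Poisson-bracket side; your reading of the integration-by-parts step (with boundary terms controlled by the Paley--Wiener decay) is precisely the implicit content of the third-to-fourth line of the paper's computation.
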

\begin{proof}
This is just a simple computations based on the properties we have listed in the previous paragraphs: \begin{align}
	& \omega\bigg{(}\{ \int_{0}^{\infty}\pi_{\omega}(W^{0}(\sum_{i=1}^{n}e^{iHt_{i}}f_{i}))\chi_{1}(t)\dd^{n}t, \int_{0}^{\infty}\pi_{\omega}(W^{0}(\sum_{j=1}^{n}e^{iHs_{j}}g_{j}))\chi_{2}(s)\dd^{n}s\}\bigg{)} \nonumber \\ =&\int_{0}^{\infty}\omega(W^{0}(\sum_{i=1}^{n}e^{iHt_{i}}f_{i}+e^{iHs_{i}}g_{i}))\sum_{i,j=1}^{n}\sigma(e^{iH t_{i}}f_{i},e^{iHs_{j}}g_{j})\chi_{1}(t)\chi_{2}(s)\dd^{n}t\dd^{n}s \nonumber\\
	=& \beta\int_{0}^{\infty}\omega(i\Phi_{0}(\sum_{k=1}^{n}iHe^{iHt_{k}}f_{k})W^{0}(\sum_{i=1}^{n}e^{iHt_{i}}f_{i}+e^{iHs_{i}}g_{i}))\chi_{1}(t)\chi_{2}(s)\dd^{n}t\dd^{n}s\nonumber\\
	=&-\beta\omega \bigg{(}\int^{\infty}_{0} \pi_{\omega}(W^{0}(\sum^{n}_{i=1}e^{iHt_{i}}f_{i}+e^{iH_{i}s_{i}}g_{i} ))\frac{\dd}{\dd\tau}\chi_{1}(t_{1}+\tau,\dots,t_{n}+\tau)|_{\tau = 0}\chi_{2}(s)\dd^{n}t\dd^{m}s\bigg{)}\nonumber\\
	=&\beta \omega\bigg{(}\int_{0}^{\infty}\pi_{\omega}(W^{0}(\sum_{j=1}^{n}e^{iHs_{j}}g_{j}))\chi_{2}(s)\dd^{n}s\delta_{0}^{\omega}( \int_{0}^{\infty}\pi_{\omega}(W^{0}(\sum_{i=1}^{n}e^{iHt_{i}}f_{i}))\chi_{1}(t)\dd^{n}t)\bigg{)}.
\end{align}
\end{proof}

\end{document}